 \newcommand{\TreeD}{\mathbb{T}_\Delta}
  \newcommand{\Treed}{\mathbb{T}_d}
\newcommand{\IS}{\#{\sc IS}}
\newcommand{\SAT}{\#{\sc Sat}}
\newcommand{\AP}{$\leq_{\texttt {AP}}$}
\newcommand{\APequiv}{$\equiv_{\texttt {AP}}$}
\newcommand{\BIS}{\#{\rm BIS}}
\newcommand{\BMNSpin}[3]{{\sc Bi-M-Nonuniform-2-Spin}$(#1,#2,#3)$}
\newcommand{\BSpin}[3]{{\sc Bi-2-Spin}$(#1,#2,#3)$}
\newcommand{\BBSpin}[4]{{\sc Bi-2-Spin}$(#1,#2,#3,#4)$}
\newcommand{\inftree}[1]{\mathbb{T}_{#1}}
\newcommand{\ceil}[1]{\lceil #1\rceil}
\def\qm{{q^-}}
\def\qp{{q^+}}
\let\epsilon=\varepsilon
\def\Pr{\mathop{\rm Pr}\nolimits}
\def\calG{\mathcal{G}}
\def\phase{\pi}
\def\phases{\{-,+\}}
\newcommand{\eps}{\epsilon}
\def\pconfig{\tilde \sigma}
\newcommand{\Gc}{\ensuremath{\mathcal{G}}}
\newcommand{\E}{\ensuremath{\mathbf{E}}}
\newtheorem{theorem}{Theorem}
\newtheorem{lemma}[theorem]{Lemma}
\newtheorem{corollary}[theorem]{Corollary}
\newtheorem{definition}[theorem]{Definition}
\newtheorem{condition}{Condition}
\begin{document}

\title{\BIS-Hardness for 2-Spin Systems on Bipartite Bounded Degree Graphs in the Tree Non-Uniqueness Region}
\author{
  Jin-Yi Cai\thanks{
  Department of Computer Sciences, University of Wisconsin-Madison, USA. Research supported by NSF grant CCF-1217549.
  Heng Guo is supported by a 2013 Simons award for graduate students in theoretical computer science.}
\and
Andreas Galanis\thanks{
  Department of Computer Science, University of Oxford, Wolfson Building, Parks Road, Oxford, OX1~3QD, UK.
  The research leading to these results has received funding from the European Research Council under
  the European Union's Seventh Framework Programme (FP7/2007-2013) ERC grant agreement no.\ 334828. The paper
  reflects only the authors' views and not the views of the ERC or the European Commission.
  The European Union is not liable for any use that may be made of the information contained therein.}
  \and
  Leslie Ann Goldberg$^\dag$
  \and
  Heng Guo$^\ast$
  \and  Mark Jerrum\thanks{
  School of Mathematical Sciences,
  Queen Mary, University of London, Mile End Road, London E1 4NS, United Kingdom.}
\and
 Daniel \v{S}tefankovi\v{c}\thanks{
Department of Computer Science, University of Rochester,
Rochester, NY 14627.  Research
supported by NSF grant CCF-0910415.}
 \and Eric Vigoda\thanks{School of Computer Science, Georgia
Institute of Technology, Atlanta GA 30332.
 Research supported by
NSF grant CCF-1217458.}
 }

\date{}
\maketitle

\begin{abstract}
Counting independent sets on bipartite graphs (\BIS) is considered a canonical
counting problem of intermediate approximation complexity. It is conjectured 
that \BIS~neither has an FPRAS nor is as hard as \SAT~to approximate. We study
\BIS~in the general framework of two-state spin systems on bipartite graphs. We
define two notions, nearly-independent phase-correlated spins and unary symmetry
breaking. We prove that it is \BIS-hard to approximate the partition function
of any 2-spin system on bipartite graphs supporting these two notions. As a
consequence, we classify the complexity of approximating the partition function of
antiferromagnetic 2-spin systems on bounded-degree bipartite graphs. 
\end{abstract}

\section{Introduction}
There has been great progress in classifying the complexity of
counting problems recently.
One important success is for counting constraint satisfaction problems (\#CSP),
where a sweeping complexity dichotomy is proved~\cite{Bul08,DR13,CC12}.
While the landscape of exact counting becomes clearer,
the complexity of approximate counting remains mysterious.
Two typical classes of problems have been identified:
1) those that  have a fully polynomial-time randomized approximation scheme (FPRAS),
and 2) those that are \SAT-hard with respect to approximation preserving reductions (AP-reductions) \cite{DGGJ03}.
If $\text{NP}\neq \text{RP}$ then
\SAT~admits no FPRAS\footnote{In fact, Zuckerman proves a stronger result --- there is no FPRAS for the logarithm of the number of satisfying assignments unless NP=RP.}~\cite{Zuc96}, 
 and therefore 
neither does any  problem
in the second class.
These two classes are analogous to P-time
tractable vs.~NP-hard decision or optimization problems.

Interestingly, in approximate counting,
there has emerged a third distinct class of natural problems, which seems to be
of intermediate complexity.
It is conjectured~\cite{DGGJ03} that the problems in this class
do not have an FPRAS but that they are not as hard as \SAT\ to approximate.
A canonical problem
in this class has been identified, which is
to count the number of independent sets  in a bipartite graph (\BIS).
Despite many attempts, nobody has found an FPRAS for
\BIS\ or an AP-reduction from \SAT\ to \BIS.
The conjecture is that
neither exists. 
Mossel et al.~\cite{MWW09} showed that
the Gibbs sampler for sampling independent sets in bipartite graphs mixes  slowly
even on bipartite graphs of degree at most~$6$.
Another interesting attempted Markov Chain
for \BIS\
by Ge and \v{S}tefankovi\v{c} \cite{GS10}
was
also shown later to be
slowly mixing by Goldberg and Jerrum \cite{GJ12a}.

\BIS~plays an important role in classifying counting problems with respect to approximation.
A trichotomy theorem is shown for
the 
complexity
of approximately solving unweighted Boolean counting CSPs,
where in addition to
problems
that are solvable by FPRASes and those that are AP-reducible from \SAT,
there is the intermediate class
of problems which are
equivalent to \BIS~\cite{DGJ10}.
Many 
counting
problems are shown to be \BIS-hard and hence are conjectured to  have no FPRAS \cite{BDGJM13,CDGJLMR13},
including
estimating the partition function of the
the ferromagnetic Potts model~\cite{GJ12}.
Moreover, under AP-reductions \BIS~is complete in a logically defined class of
problems, called  \#RH$\Pi_1$, to which an increasing variety
of problems have been shown to belong.
Other typical complete problems in \#RH$\Pi_1$ include counting the
number of downsets in a partially ordered set \cite{DGGJ03}
and computing the partition function of the ferromagnetic Ising model with local external fields \cite{GJ07}.

The problem of counting independent sets (\IS) can be viewed as a special case in the general framework of spin systems,
which originated from statistical physics to model interactions between neighbors on graphs.
In this paper, we focus on two-state spin systems.
In general such a system is parameterized by edge weights $\beta,\gamma\geq 0$ and a vertex weight $\lambda>0$.
An instance is a graph $G=(V,E)$.
A configuration $\sigma$ is a mapping $\sigma : V\to \{0,1\}$ from vertices to (two) spins.
The weight $w(\sigma)$ of a configuration $\sigma$ is given by
\begin{equation}
\label{ourdefw}
w(\sigma) = \beta^{m_0(\sigma)} \gamma ^{m_1(\sigma)} \lambda ^{n_1(\sigma)}
\end{equation}
where $m_0(\sigma)$ is the number of $(0,0)$ edges given by the configuration $\sigma$,
$m_1(\sigma)$ is the number of $(1,1)$ edges,
and $n_1(\sigma)$ is the number of vertices assigned $1$.
We are interested in computing the partition function,
which is defined by
\begin{equation}\label{eq:pfdef}
Z_G(\beta,\gamma,\lambda) = \sum_{\sigma:V\rightarrow\{0,1\}}w(\sigma).
\end{equation}
The partition function is the normalizing factor of the Gibbs distribution,
which is
the distribution in which a configuration~$\sigma$ is drawn
with probability $\Pr_{G;\beta,\gamma,\lambda}(\sigma)=\frac{w(\sigma)}{Z_G(\beta,\gamma,\lambda)}$.
The spin system is called
\emph{ferromagnetic} if $\beta\gamma>1$ and \emph{antiferromagnetic} if $\beta\gamma<1$.
In particular, when $\beta=\gamma$, such a system is the famous \emph{Ising model},
and when $\beta=1$ and $\gamma=0$, it is the \emph{hard-core gas model},
the partition function
of which counts independent sets
when $\lambda=1$.
The external field $\lambda$ is typically referred to as the activity or fugacity of the hard-core model.

Approximating the partition function of the hard-core model is especially well studied.
We now know that the complexity transition from easy to hard corresponds exactly
to the uniqueness of the Gibbs measure in infinite $(\Delta-1)$-ary trees $\TreeD$ (for details of these notions, see \cite{Geo11}).
Notice that $(\Delta-1)$-ary trees are graphs of maximum degree $\Delta$, hence our use of the notation $\TreeD$.
On the algorithmic side, Weitz presented a fully polynomial-time approximation scheme (FPTAS) for the hard-core gas model
on graphs of maximum degree $\Delta$ when
uniqueness holds \cite{Wei06}.
On the other hand, Sly showed that
the approximation problem
is \SAT-hard for a small interval beyond the uniqueness threshold \cite{Sly10}.
Building on their work, it is now established that for all antiferromagnetic 2-spin systems
there is an FPTAS for graphs of maximum degree $\Delta$
up to the uniqueness threshold \cite{LLY13} (see also \cite{LLY12,SST12}),
whereas non-uniqueness implies \SAT-hardness under AP-reductions on $\Delta$-regular graphs \cite{SS12}
(see also \cite{CCGL12,GSV12}).
The only place that remains unclear is exactly at the uniqueness threshold.

A key feature of spin systems in the antiferromagnetic non-uniqueness region is
the ability to support a gadget with many vertices whose spins are highly correlated
with the phase of the gadget (which is either $+$ or $-$),
but are nearly independent among themselves conditioned on that phase.
Such a gadget was used by Sly~\cite{Sly10} to show inapproximability of the partition function of the hard-core model
when $\lambda$ is just above the uniqueness threshold.
A different gadget with similar properties was used by
Goldberg et al.~\cite{GJM12} to show inapproximability
on a planar graph when $\lambda$ is much larger.
We abstract this notion of nearly-independent phase-correlated spins.
It is this feature that enables us to reduce from \SAT~to approximating the partition function of antiferromagnetic two-spin systems in the non-uniqueness region.

Restricted to bipartite graphs, it appears that supporting nearly-independent phase-correlated spins alone
is not enough to imply \BIS-hardness.
It was shown that Sly's gadget is applicable to the antiferromagnetic Ising model without an external field by Galanis et al.~\cite{GSV12}.
However, such a system  has
an FPRAS on bipartite graphs.
The reason is that this system is perfectly symmetric on bipartite graphs
and therefore can be translated into a ferromagnetic Ising system,
whose partition function can be approximated using the FPRAS of
Jerrum and Sinclair \cite{JS93} (see Corollary~\ref{cor:Ising:w/o} in Section \ref{sec:sym} for
details).
To get around this perfectly symmetric case,
we introduce the second key concept called unary symmetry breaking.
Unary symmetry breaking is about the availability of non-trivial ``unary weights" which can be built in the model. The availability of unary weights is relevant to counting complexity, for example in the context of counting CSPs. Unary symmetry breaking simply means that some (any) non-trivial unary weight can be built. 
Formal definitions of the two notions -- nearly-independent phase-correlated spins
and unary symmetry breaking -- can be found in Section~\ref{sec:def}.
Our main technical theorem is the following.

\begin{theorem}
  Suppose a tuple of parameters $(\beta,\gamma,\lambda,\Delta)$
    with $\beta \gamma \neq 1$ and $\Delta \geq 3$
  supports nearly-independent phase-correlated spins and unary symmetry breaking.
  Then the partition function
  (\ref{eq:pfdef})
  of two-spin systems $(\beta,\gamma,\lambda)$ is \BIS-hard to approximate on bipartite graphs with
  maximum degree $\Delta$.
    \label{thm:main}
\end{theorem}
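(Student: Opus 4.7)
The plan is to reduce from a \BIS-hard source problem (for instance, a bipartite weighted variant such as \PBIS{d} for an appropriate bounded degree~$d$, or \BIS\ itself on degree-bounded bipartite graphs) to approximating $Z_G(\beta,\gamma,\lambda)$ on bipartite graphs of maximum degree $\Delta$. The overall architecture follows the gadget-reduction template pioneered by Sly~\cite{Sly10}: given an instance $H = (L \cup R, E_H)$ of the source problem, construct a bipartite graph $G$ of maximum degree $\Delta$ by replacing every vertex of $H$ with a copy of the phase-correlated \emph{super-node} gadget, and every edge of $H$ with a short \emph{super-edge} between designated terminal vertices of neighbouring super-nodes; unary symmetry breaking provides non-trivial unary weights that are attached at carefully chosen vertices to pin phases and shape effective interactions.

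First I would set up the super-node precisely. By the nearly-independent phase-correlated spins hypothesis, the gadget exposes a terminal set $\Yc$ such that its Gibbs distribution on $\Bool^{\Yc}$ is within exponentially small total variation of a uniform mixture $\tfrac12(\mu_+ + \mu_-)$ of two product distributions, with distinct single-coordinate marginals $\psi_+ \neq \psi_-$. I would verify that the gadget can be realised as a bipartite graph with enough degree budget at each terminal to accommodate its external super-edge attachments, and that the phase remains well-defined after a bounded number of such attachments. In parallel, unary symmetry breaking realises some non-trivial unary weight pair $(a,b)$ with $a\neq b$; this weight will be used both to bias the two sides of the bipartition of $H$ so that $L$-super-nodes prefer phase $+$ while $R$-super-nodes prefer phase $-$, and to adjust the $2\times 2$ matrix $B$ of effective phase-to-phase edge weights induced by a super-edge between two terminals.

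The partition function analysis then proceeds by conditioning on the phase vector $\pi \in \phases^{V(H)}$. Using near-independence of terminals inside each super-node and absorbing polynomial-factor normalisations into an overall multiplicative constant, one obtains
\begin{equation*}
Z_G(\beta,\gamma,\lambda) = (1 \pm \epsilon)\, C\, \sum_{\pi \in \phases^{V(H)}} \Bigl(\prod_{u \in V(H)} A_{\pi_u}\Bigr) \Bigl(\prod_{(u,v) \in E_H} B_{\pi_u,\pi_v}\Bigr),
\end{equation*}
with $\epsilon$ exponentially small in the super-node size. The goal is to engineer the on-site weights $A_\pm$ (via the unary bias applied to each side) and the $2\times 2$ matrix $B$ (via the choice of super-edge and the unary weights on its internal vertices) so that the sum is concentrated on phase vectors which orient $L$ to $+$ and $R$ to $-$, and so that within this dominant sector the residual per-edge contribution exactly mimics the $\Bool\times\Bool$ weight table of the \BIS\ constraint (i.e.\ forbids, up to tiny relative error, a single designated phase pattern corresponding to ``both selected''). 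Once this is achieved, the phase configurations are in essentially one-to-one correspondence with independent sets of~$H$, and an approximation of $Z_G$ yields an approximation of the \BIS\ count.

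The hard part will be the explicit design of the super-edge and unary placements so that $B$ has the required degenerate \BIS\ form simultaneously in both sign regimes $\beta\gamma>1$ and $\beta\gamma<1$. A single terminal-to-terminal edge produces naturally anti-aligned phases in the antiferromagnetic regime but aligned phases in the ferromagnetic regime, so the ferromagnetic case will force either a longer super-edge (perhaps a short chain with interior unary weights) or a bipartition flip of one side's super-nodes to recover the correct alignment. In both regimes the unary symmetry breaking is indispensable: without a non-trivial unary weight the $+ \leftrightarrow -$ symmetry of the phase dynamics collapses the distinction $A_+\neq A_-$ that drives the bipartition-to-phase correspondence, as happens in the symmetric bipartite Ising model of \cite{JS93} where an FPRAS exists (cf.\ Corollary~\ref{cor:Ising:w/o}). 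Once $B$ and the biases are fixed, the remainder is bookkeeping: confirming that $\epsilon$ is small enough to survive an AP-reduction, that $G$ stays bipartite with maximum degree at most $\Delta$, and that the overall multiplicative constant $C$ is computable, which together yield the claimed \BIS-hardness.
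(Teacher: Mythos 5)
You assume far more balance and accuracy than the hypothesis gives you, and that is a genuine gap. Your displayed formula treats the terminal distribution of a super-node as an (essentially exact) uniform mixture $\tfrac12(\mu_++\mu_-)$ with error ``exponentially small in the super-node size''. Definition~\ref{def:sly} guarantees neither: the gadget has size polynomial in $t$ and $\epsilon^{-1}$, so the relative error is only inverse-polynomial (which is fine for an AP-reduction, but not what you claim), and, more seriously, each phase is only guaranteed probability at least $1/f(t,\epsilon)$ for a polynomial $f$. Conditioning on a phase vector $\pi$ therefore introduces a factor $\prod_{u}Z^{\pi_u}_{G_u}=(Z^-_G)^{|V(H)|}\bigl(Z^+_G/Z^-_G\bigr)^{n_+(\pi)}$, i.e.\ an unknown per-vertex ``field'' whose base may be polynomially large, which varies across phase vectors and so cannot be absorbed into your single constant $C$, and which you cannot approximate from the oracle; it can swamp the unary bias you intend to install. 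The paper flags exactly this obstacle (inverse-polynomial balance suffices for Sly's {\sc Max-Cut}-style reduction but not for AP-reductions) and repairs it with the balancing construction of Lemma~\ref{lem:gettingbal}: two copies of the gadget joined by matchings on $t'$ terminal pairs, yielding phase probabilities $\geq(1-\epsilon)/2$. Your proof needs this step (or a substitute) before the conditional expansion you write down is legitimate.

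Second, the step you defer as ``the hard part'' is the crux, and the paper avoids it by a different factorisation. With soft interactions, a single terminal-to-terminal connection only produces the strictly positive matrix $N=M^+M(M^+)^{\texttt T}$, never a degenerate \BIS-shaped table; making the both-selected pattern negligible requires powering (many parallel connections) together with a forcing structure, and all of this must be fitted inside degree $\Delta$ while the $1/\mathrm{poly}$ gadget errors accumulate. The paper instead proves Lemma~\ref{lem:main}: the phase gadget plus one unary symmetry-breaking gadget per field vertex simulates the \emph{soft} antiferromagnetic Ising model with a non-uniform field $\lambda'\neq1$ on bipartite graphs (effective edge weight $\alpha$ from $N$, effective field from the ratio $\rho'_1/\rho'_0$), and separately proves Lemma~\ref{lem:BIS:Ising}, a purely classical reduction from \BIS\ to that intermediate problem in which the hard-core-like constraint is manufactured by $t_1$ parallel edges and pendant field-forced vertices --- legal there precisely because the intermediate problem carries no degree bound. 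If you insist on a single-stage reduction you must reproduce this powering and forcing at the phase level (many terminal pairs per edge of $H$, many unary gadgets per super-node) and verify the error budget; none of that is in your sketch. Note also that the paper never pins $L$ to $+$ and $R$ to $-$: the independent sets are counted by the surviving multiplicity of phase vectors, so your statement that the sum ``is concentrated on phase vectors which orient $L$ to $+$ and $R$ to $-$'' is in tension with needing exponentially many phase vectors to carry the count, and would have to be reformulated.
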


Previous hardness proofs
for the problem~\IS\ and for the problem of estimating the
partition function of
antiferromagnetic 2-spin systems typically reduce from {\sc Max-Cut} or
from the problem of
counting certain types of cuts \cite{GJP03,Sly10,SS12}.
However such a technique sheds little light in
the bipartite setting
as cut problems are trivial on bipartite graphs.
Reductions between \BIS-equivalent problems typically involve transformations that ``blow up'' vertices and edges
into sets of vertices that are completely connected, so they do not apply to bounded-degree graphs either.

A key property of Sly's gadget is that either phase occurs with probability bounded below by an inverse polynomial.
This bound is sufficient in Sly's setting to reduce from {\sc Max-Cut}, but it is not enough to construct AP-reductions for our use.
We resolve this issue by introducing a balancing construction.
The construction takes two copies of a gadget with nearly-independent phase-correlated spins,
and produces a new gadget with
similarly-correlated spins,
but in the new gadget the two phases occur with probability close to $1/2$.

The proof of Theorem~\ref{thm:main} utilizes an intermediate problem,
that is, computing the partition function of antiferromagnetic Ising systems with non-uniform external fields on bipartite graphs.
A non-uniform external field means that the instance specifies a subset of vertices on which the external field acts.
A 2-spin system with
a non-uniform external field is very similar to a Boolean \#CSP with
one binary symmetric non-negative valued function (corresponding to edge weights)
and one unary non-negative valued function (corresponding to vertex weights) (see, for example \cite{CLX09a}).

Our reduction implements
an external field,
and this is where unary symmetry breaking comes into play.
As discussed earlier, the partition function of the Ising model without an external field has
an FPRAS, so the unary symmetry breaking gadget seems necessary.
In fact, we show (Lemma~\ref{lem:sym:break}) 
that unary symmetry breaking holds for all 2-spin systems except for the Ising model without
an external field
or degenerate systems (i.e.,
systems satisfying
$\beta\gamma=1$).
We also prove (Lemma \ref{lem:indpt-phases}) that all antiferromagnetic 2-spin systems support
nearly-independent phase-correlated spins in the non-uniqueness region.
Taking Lemmas~\ref{lem:indpt-phases} and~\ref{lem:sym:break} together with Theorem \ref{thm:main} yields our main result:

\begin{theorem}
\label{thm:BIS-nonuniq}
For all tuples of parameters $(\beta,\gamma,\lambda,\Delta)$ with
$\Delta\geq 3$ and $\beta\gamma<1$, except for the case $(\beta=\gamma, \lambda=1)$, if the infinite $\Delta$-regular tree $\TreeD$ is in
the non-uniqueness region then approximating the partition
function   (\ref{eq:pfdef}) on bipartite graphs with maximum degree $\Delta$ is \BIS-equivalent under AP-reductions.
\end{theorem}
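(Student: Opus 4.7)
The plan is to derive Theorem~\ref{thm:BIS-nonuniq} as essentially a one-line corollary of the paper's three main technical ingredients. Since the statement asserts AP-equivalence, both directions must be established: \BIS-hardness (the deeper direction, supplied by Theorem~\ref{thm:main}) and AP-reducibility of the target partition function to \BIS\ (which follows from membership in \#RH$\Pi_1$).

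For the hardness direction, I would fix any tuple $(\beta,\gamma,\lambda,\Delta)$ meeting the theorem's hypotheses and then verify the hypotheses of Theorem~\ref{thm:main} one by one. The assumption $\beta\gamma<1$ automatically implies $\beta\gamma\neq 1$, and $\Delta\geq 3$ is already given. Lemma~\ref{lem:indpt-phases} supplies nearly-independent phase-correlated spins for every antiferromagnetic system in the non-uniqueness region, which matches our non-uniqueness hypothesis on $\TreeD$. Lemma~\ref{lem:sym:break} supplies unary symmetry breaking for every 2-spin system \emph{except} the degenerate locus $\beta\gamma=1$ (already excluded by antiferromagnetism) and the Ising model without external field, namely $(\beta=\gamma,\lambda=1)$, which is precisely the tuple explicitly excluded in the statement. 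Both hypotheses of Theorem~\ref{thm:main} are therefore met, and \BIS-hardness on bipartite graphs of maximum degree $\Delta$ follows.

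For the reverse AP-reduction, I would appeal to the standard inclusion of the partition function of any antiferromagnetic 2-spin system on bipartite graphs in the logically defined class \#RH$\Pi_1$, for which \BIS\ is AP-complete by~\cite{DGGJ03}. Concretely, one approximates each of $\beta,\gamma,\lambda$ to polynomial precision, rewrites the resulting partition function as a weighted sum over independent sets in a polynomially larger bipartite graph obtained by vertex and edge duplication, and then invokes a \BIS\ oracle on that graph. Since the input is already bipartite, no structural rewiring is required. Combining this upper bound with the hardness direction yields the claimed AP-equivalence.

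The chief obstacle is bookkeeping: one must confirm that the single exception in the theorem statement ($\beta=\gamma,\lambda=1$) aligns exactly with the exception of Lemma~\ref{lem:sym:break}, since any discrepancy would leave some tuples uncovered by Theorem~\ref{thm:main}. Once this alignment is verified, the proof collapses to a direct invocation of the three prior results, so essentially all of the real mathematical content lives in Theorem~\ref{thm:main} and in the two supporting lemmas.
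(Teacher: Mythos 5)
Your hardness direction is exactly the paper's route: from $\beta\gamma<1$ you get $\beta\gamma\neq 1$, Lemma~\ref{lem:indpt-phases} supplies (balanced) nearly-independent phase-correlated spins in the non-uniqueness region, Lemma~\ref{lem:sym:break} supplies unary symmetry breaking precisely outside the loci $\beta\gamma=1$ and $(\beta=\gamma,\lambda=1)$, and Theorem~\ref{thm:main} then gives \BIS-hardness; the alignment of exceptions that you flag is indeed the only bookkeeping point, and it checks out.

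The gap is in the \BIS-easiness direction. First, ``membership in \#RH$\Pi_1$'' is not the right vehicle: that class consists of unweighted, logically defined counting problems, whereas here $\beta,\gamma,\lambda$ are arbitrary efficiently approximable reals, so what must be proved is an AP-reduction to \BIS; the paper obtains it from Theorem~47 of \cite{CDGJLMR12} (the conservative weighted \#CSP result), not from \cite{DGGJ03} alone. Second, and more seriously, your sketch never uses bipartiteness where it is indispensable. The antiferromagnetic edge constraint $\bigl(\begin{smallmatrix}\beta&1\\1&\gamma\end{smallmatrix}\bigr)$ with $\beta\gamma<1$ is not weakly log-supermodular, and on general graphs the same partition function is \SAT-hard throughout the non-uniqueness region \cite{SS12}; hence no generic ``duplicate vertices and edges and sum over independent sets'' recipe can reduce it to \BIS\ unless some step exploits the bipartition. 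The paper's step is to reverse the interpretation of the spins $0$ and $1$ on one side of the bipartition, so that the effective edge matrix becomes $\bigl(\begin{smallmatrix}1&\beta\\\gamma&1\end{smallmatrix}\bigr)$, which is weakly log-supermodular since $1\cdot 1\geq\beta\gamma$, at the cost of introducing vertex weights $\lambda$ and $\lambda^{-1}$ on the two sides---admissible because the \#CSP framework invoked is conservative. Your remark that ``no structural rewiring is required'' because the instance is already bipartite misidentifies why bipartiteness matters: it is not merely to keep the target instance bipartite, it is what makes the flip possible. Without the spin flip (or an equivalent device) the easiness half of the claimed AP-equivalence is unsupported; with it, and with the appeal to \cite{CDGJLMR12} in place of \#RH$\Pi_1$ membership, your argument coincides with the paper's.
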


There is an FPRAS for the exceptional case ($\beta=\gamma,\lambda=1$) (Corollary~\ref{cor:Ising:w/o}).
Let us now survey the approximability picture that this theorem helps establish.
For general antiferromagnetic 2-spin models with soft constraints (i.e., $\beta\gamma>0$),
non-uniqueness holds if and only if $\sqrt{\beta\gamma}<\frac{\Delta-2}{\Delta}$ and $\lambda\in(\lambda_1,\lambda_2)$
for some critical values $\lambda_1$ and $\lambda_2$ depending on $\beta$, $\gamma$, and $\Delta$ (see \cite[Lemma 5]{LLY13}).
Hence, for all $\beta,\gamma>0$ where $\beta\gamma<1$, and all $\Delta\geq 3$ the following holds:
\begin{enumerate}
\item If $\sqrt{\beta\gamma}>\frac{\Delta-2}{\Delta}$, for all $\lambda$,
there is an FPTAS to approximate the partition function for $\Delta$-regular graphs \cite{SST12,LLY13} (this extends to
graphs of maximum degree $\Delta$ in an appropriate sense, see \cite{LLY13} for details).
\item If $\sqrt{\beta\gamma}<\frac{\Delta-2}{\Delta}$, then there exists $0<\lambda_1<\lambda_2$ so that:
\begin{enumerate}
\item
For all $\lambda\not\in [\lambda_1,\lambda_2]$,
there is an FPTAS to approximate the partition function for $\Delta$-regular graphs \cite{SST12,LLY13} (this again extends in
an appropriate sense to
graphs of maximum degree $\Delta$ \cite{LLY13}).
\item
For all $\lambda\in (\lambda_1,\lambda_2)$,
it is \#SAT-hard to approximate the partition function on $\Delta$-regular graphs \cite{SS12}.
\item
For all $\lambda\in (\lambda_1,\lambda_2)$,
it is \BIS-hard to approximate the partition function on bipartite graphs of maximum degree $\Delta$ (Theorem \ref{thm:BIS-nonuniq} in this paper).
\end{enumerate}
\end{enumerate}

For the particular case of the hard-core model the critical value (i.e., critical activity $\lambda_c(\Delta)$) is more easily stated.
For the hard-core model (i.e., $\beta=0$ and $\gamma=1$) Kelly \cite{Kel85} showed that non-uniqueness holds on $\TreeD$
if and only if $\lambda>\lambda_c(\Delta)\colon=\frac{(\Delta-1)^{\Delta-1}}{(\Delta-2)^\Delta}$.
As a consequence we get the following corollary for the hard-core model.

\begin{corollary}
For all $\Delta\geq 3$, all $\lambda>\lambda_c(\Delta) =\frac{(\Delta-1)^{\Delta-1}}{(\Delta-2)^{\Delta}}$, it is \BIS-hard to approximate the partition function of the hard-core model on bipartite graphs of maximum degree $\Delta$.
  \label{cor:hardcore}
\end{corollary}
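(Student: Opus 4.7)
The plan is to derive this corollary as a direct specialization of Theorem~\ref{thm:BIS-nonuniq} to the hard-core parameters $\beta=0$, $\gamma=1$. First I would verify that the hypotheses of Theorem~\ref{thm:BIS-nonuniq} are satisfied: we have $\beta\gamma=0<1$ (so the system is antiferromagnetic), and the exceptional Ising-without-field case $(\beta=\gamma,\lambda=1)$ is excluded since $\beta=0\neq 1=\gamma$. What remains is to translate the condition $\lambda>\lambda_c(\Delta)$ into the non-uniqueness statement for $\TreeD$ required by the theorem.

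For this translation I would invoke Kelly's classical result~\cite{Kel85}, already cited in the excerpt, which says that the Gibbs measure of the hard-core model on the infinite $(\Delta-1)$-ary tree $\TreeD$ fails to be unique precisely when
\[
\lambda>\lambda_c(\Delta)=\frac{(\Delta-1)^{\Delta-1}}{(\Delta-2)^{\Delta}}.
\]
Thus for $\lambda>\lambda_c(\Delta)$ we are in the tree non-uniqueness regime, and Theorem~\ref{thm:BIS-nonuniq} applies to give that approximating the partition function on bipartite graphs of maximum degree $\Delta$ is \BIS-equivalent under AP-reductions, which in particular yields \BIS-hardness.

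There is essentially no obstacle here beyond checking the hypotheses, since all the heavy lifting is done by Theorem~\ref{thm:BIS-nonuniq} and by Kelly's uniqueness criterion. The only mild subtlety worth noting is confirming that the hard-core specialization $\beta=0$ is genuinely covered by the framework of Theorem~\ref{thm:BIS-nonuniq}: the theorem is stated for $\beta\gamma<1$ without requiring $\beta>0$, so the ``hard constraint'' case $\beta=0$ is included, and Lemmas~\ref{lem:indpt-phases} and~\ref{lem:sym:break} on which Theorem~\ref{thm:BIS-nonuniq} rests must be understood to apply in this boundary case as well. Assuming that, the corollary follows with no further work.
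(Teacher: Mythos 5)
Your proposal is correct and follows essentially the same route as the paper: Corollary~\ref{cor:hardcore} is obtained by specializing Theorem~\ref{thm:BIS-nonuniq} to $(\beta,\gamma)=(0,1)$ and using Kelly's criterion \cite{Kel85} that non-uniqueness on $\TreeD$ holds exactly when $\lambda>\lambda_c(\Delta)$. The boundary-case subtlety you flag is legitimate, since Lemma~\ref{lem:indpt-phases} is stated for $\beta,\gamma,\lambda>0$, but it is resolved because the nearly-independent phase-correlated spins property for the hard-core model at all $\lambda>\lambda_c(\Delta)$ is supplied by the prior results of \cite{Sly10,GGSVY11,GSV12} discussed in Section~\ref{sec:slygadget}, while unary symmetry breaking for $\beta=0\neq\gamma$ follows from Lemma~\ref{lem:sym:break}.
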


We also get a corollary concerning the more
general partition
function as long as $\beta$ and $\gamma$ are less than~$1$
and 
the degree bound~$\Delta$ is sufficiently large.
For and $\beta$ and $\gamma$ satisfying
$0<\beta,\gamma<1$  and any $\lambda>0$, there exists a $\Delta$ such that
$(\beta,\gamma,\lambda)$ is in the non-uniqueness region of $\TreeD$ \cite[Lemma 21.2]{LLY13}.
This implies the following corollary.

\begin{corollary}
  For every $0<\beta,\gamma<1$ and $\lambda>0$,
  there exists a $\Delta$ such that
  the 2-spin system with parameters $\beta,\gamma$
  and with uniform or non-uniform external field $\lambda$ on bipartite graphs with degree bound $\Delta$ is \BIS-equivalent under AP-reductions,
  except when
  $\beta=\gamma$ and $\lambda=1$, in which case it  has an FPRAS.
  \label{cor:general-graphs}
\end{corollary}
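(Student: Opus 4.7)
The plan is to combine Theorem~\ref{thm:BIS-nonuniq} with a tree-level existence result to select an appropriate degree bound for any prescribed parameter triple. Concretely, given $0<\beta,\gamma<1$ and $\lambda>0$, I invoke \cite[Lemma~21.2]{LLY13}, cited in the paragraph immediately preceding the corollary, to obtain some $\Delta\geq 3$ such that $(\beta,\gamma,\lambda)$ lies in the non-uniqueness region of $\TreeD$. Since $\beta,\gamma<1$ forces $\beta\gamma<1$, this places us squarely in the antiferromagnetic regime covered by Theorem~\ref{thm:BIS-nonuniq}. Provided we are not in the exceptional case $\beta=\gamma$, $\lambda=1$, that theorem then immediately yields \BIS-equivalence under AP-reductions for the 2-spin partition function with \emph{uniform} external field $\lambda$ on bipartite graphs of maximum degree $\Delta$.

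The exceptional case $\beta=\gamma$, $\lambda=1$ corresponds to the Ising model without external field, for which Corollary~\ref{cor:Ising:w/o} supplies an FPRAS; this disposes of that line of the statement.

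To go from uniform to non-uniform external field, the \BIS-hardness direction is essentially free: every instance of the uniform problem can be viewed as an instance of the non-uniform problem in which the designated subset of ``field-carrying'' vertices is the entire vertex set. Hence the reduction of Theorem~\ref{thm:BIS-nonuniq} already certifies that the non-uniform problem is \BIS-hard on bipartite graphs of maximum degree $\Delta$. For the matching upper bound, the strategy is to observe that the partition function of an antiferromagnetic 2-spin system with non-uniform external field on bounded-degree bipartite graphs lies in the logically defined class \#RH$\Pi_1$, expressed as a suitably weighted count over configurations in bipartite structures; since \BIS\ is complete for \#RH$\Pi_1$ under AP-reductions (as recalled in the introduction), the problem AP-reduces to \BIS, and hence the two are AP-equivalent.

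The main technical content has already been absorbed into Theorem~\ref{thm:BIS-nonuniq} and \cite[Lemma~21.2]{LLY13}; the present corollary is really a ``packaging'' result. The one point demanding care is handling the non-uniform external field in the upper bound: one must ensure that allowing an arbitrary subset of vertices to carry the external field does not push the problem outside \#RH$\Pi_1$. This is routine but worth verifying, since without it the claimed equivalence (as opposed to mere hardness) for the non-uniform version would be incomplete.
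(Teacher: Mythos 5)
Your overall derivation is the same as the paper's: pick $\Delta$ via \cite[Lemma 21.2]{LLY13} so that $(\beta,\gamma,\lambda)$ is in the non-uniqueness region of $\TreeD$ (noting $\beta\gamma<1$), apply Theorem~\ref{thm:BIS-nonuniq} for the uniform-field case, and dispose of the exceptional case $\beta=\gamma$, $\lambda=1$ with Corollary~\ref{cor:Ising:w/o}; the hardness direction for non-uniform fields is indeed free since the uniform problem is the special case $U=V$. Where you diverge is in the \BIS-easiness direction for the non-uniform field, and your chosen tool is not quite right as stated: \#RH$\Pi_1$ is a class of (unweighted) counting problems, so a real-weighted partition function does not literally ``lie in'' it, and---more substantively---the raw antiferromagnetic edge interaction $\big(\begin{smallmatrix}\beta&1\\1&\gamma\end{smallmatrix}\big)$ is log-submodular, so bipartiteness alone does not make the problem \BIS-easy without an extra step. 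The paper's route (Section~\ref{sec:BIS-easiness}, proof of Theorem~\ref{thm:main}) is to flip the roles of the spins $0$ and $1$ on one side of the bipartition, turning the interaction into $\big(\begin{smallmatrix}1&\beta\\\gamma&1\end{smallmatrix}\big)$, which is weakly log-supermodular since $1\cdot 1\geq\beta\gamma$, and then to invoke Theorem~47 of \cite{CDGJLMR12} for conservative weighted \#CSP; conservativity allows arbitrary unary weights, so the vertex weights $\lambda$ and $\lambda^{-1}$ produced by the flip, applied only on the prescribed subset $U$, are handled automatically, which is exactly what covers the non-uniform field. So your conclusion is correct and the verification you defer is indeed available, but the precise justification should be the conservative weighted-CSP theorem together with the bipartite spin-flip, rather than membership in \#RH$\Pi_1$ and completeness of \BIS{} for that class.
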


More generally, for antiferromagnetic 2-spin systems we get the following
picture for the complexity of approximating the partition function on general graphs.
As usual there is a difficulty classifying the complexity of approximating the partition function at 
the boundary between uniqueness and non-uniqueness.
To address this issue, for parameters $(\beta,\gamma,\lambda,\Delta)$, \cite{LLY13} define a 
notion of {\em up-to-$\Delta$ unique} which is equivalent to the parameters lying 
in the interior of the uniqueness region for the infinite $(d-1)$-ary tree $\Treed$ for all $3\leq d\leq\Delta$ 
(see Definition 7 in~\cite{LLY13}). 
Moreover, the parameters $(\beta,\gamma,\lambda)$ satisfy {\em $\infty$-strict-uniqueness} if 
it is up-to-$\infty$ unique.\footnote{To be precise, the notion of $\infty$-strict-uniqueness is called universally unique in \cite[Definition 7]{LLY13}).}
On the other side, we say the parameters $(\beta,\gamma,\lambda)$ satisfy {\em $\infty$-non-uniqueness} 
if for some $\Delta\geq 3$ the tree $\TreeD$ has non-uniqueness.
The only gap between the notions of $\infty$-strict-uniqueness and $\infty$-non-uniqueness is
the case when the parameters $(\beta,\gamma,\lambda)$ are at the uniqueness/non-uniqueness threshold of $\TreeD$
for some $\Delta$.

The following result detailing the complexity for general graphs is now established.

\begin{corollary}
\label{cor:dichotomy}
For all tuples of parameters $(\beta,\gamma,\lambda)$ with
$\beta\gamma<1$, the following holds:
\begin{enumerate}
\item If the parameters satisfy $\infty$-strict-uniqueness then there is a FPTAS for the partition function for all graphs \cite[Theorem 2]{LLY13}.
\item If the parameters satisfy $\infty$-non-uniqueness then:
\begin{enumerate}
\item it is \#SAT-hard to approximate the partition function on graphs \cite{SS12}.
\item except when $\beta=\gamma$ and $\lambda=1$,
it is \BIS-hard to approximate the partition function on bipartite graphs (Theorem \ref{thm:BIS-nonuniq} in this paper).
\end{enumerate}
\end{enumerate}
\end{corollary}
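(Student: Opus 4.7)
The plan is to assemble item~1 (positive), item~2(a) (\#SAT-hardness), and item~2(b) (\BIS-hardness) by invoking existing results at the appropriate value of the maximum degree $\Delta$, and then using the definitions of $\infty$-strict-uniqueness and $\infty$-non-uniqueness to translate between tree-based conditions (parametrized by $\Delta$) and statements about arbitrary graphs.

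For item~1, I would let $G$ be any graph and set $\Delta=\max\{3,\Delta(G)\}$, where $\Delta(G)$ is the maximum degree of $G$. The hypothesis of $\infty$-strict-uniqueness means that $(\beta,\gamma,\lambda)$ lies in the interior of the uniqueness region of $\Treed$ for every $d\geq 3$; in particular it is up-to-$\Delta$ unique, so \cite[Theorem~2]{LLY13} yields an FPTAS on all graphs of maximum degree at most $\Delta$, which in particular handles $G$. Since $G$ was arbitrary, we obtain an FPTAS for all graphs.

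For item~2(a), the hypothesis of $\infty$-non-uniqueness directly supplies some $\Delta\geq 3$ at which $\TreeD$ is in the non-uniqueness region. By \cite{SS12}, approximating the partition function is then \#SAT-hard under AP-reductions already on $\Delta$-regular graphs, and since $\Delta$-regular graphs form a subclass of general graphs, the hardness transfers immediately. For item~2(b), I would apply the same trick but with Theorem~\ref{thm:BIS-nonuniq} in place of \cite{SS12}: picking the same $\Delta\geq 3$ at which $\TreeD$ is non-unique, Theorem~\ref{thm:BIS-nonuniq} gives \BIS-equivalence (hence \BIS-hardness) on bipartite graphs of maximum degree at most $\Delta$, provided we are not in the excluded case $\beta=\gamma,\lambda=1$. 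This exclusion matches the exception stated in the corollary verbatim, and bounded-degree bipartite graphs embed into general bipartite graphs, so \BIS-hardness lifts.

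There is no real obstacle here: the substantive work---the FPTAS of \cite{LLY13}, the \#SAT-hardness reduction of \cite{SS12}, and the \BIS-hardness reduction of Theorem~\ref{thm:BIS-nonuniq}---has all been done elsewhere, and the corollary is purely a repackaging in terms of ``general graphs'' rather than bounded-degree graphs. The only thing I would verify carefully is that the three invoked results each admit the hypothesis in the form required here: namely, that up-to-$\Delta$ uniqueness for \emph{every} $\Delta$ suffices for \cite[Theorem~2]{LLY13} to yield an FPTAS on graphs of unbounded degree (by taking $\Delta$ as a function of the input graph), and that non-uniqueness of $\TreeD$ at a \emph{single} $\Delta\geq 3$ suffices to trigger both \cite{SS12} and Theorem~\ref{thm:BIS-nonuniq}.
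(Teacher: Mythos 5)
Your proposal is correct and matches the paper's (implicit) justification: the corollary is stated as an assembly of \cite[Theorem 2]{LLY13} for the uniqueness side, \cite{SS12} for \#SAT-hardness, and Theorem~\ref{thm:BIS-nonuniq} for \BIS-hardness, exactly as you invoke them, with the degree bound chosen per input graph (item 1) or supplied by $\infty$-non-uniqueness (items 2(a), 2(b)). The caveat you flag about \cite[Theorem 2]{LLY13} applying uniformly to unbounded-degree graphs under universal uniqueness is indeed the only point needing care, and it is precisely what that theorem provides.
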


A recent paper of Liu et al. \cite{LLZ14} 
shows that our Theorem~\ref{thm:main} can also be
used to analyse the complexity of \emph{ferromagnetic} partition functions
(where $\beta \gamma > 1$).
In particular, it uses Theorem \ref{thm:main} to show \BIS-hardness for
approximating the partition function for
ferromagnetic 2-spin systems when $\beta\neq\gamma$  
for sufficiently large external field $\lambda$.
An interesting problem that remains
open is   to prove \BIS-hardness for the entire non-uniqueness region for ferromagnetic
2-spin systems with $\beta\neq\gamma$.

The remainder of the paper is organized as follows.
We begin by formally defining AP-reductions and the problems that we use
in our reductions in Section \ref{sec:prelim}.
In Section \ref{sec:def} we formally define
nearly-independent phase-correlated spins and unary symmetry breaking, and
present the main lemmas concerning these concepts.
We prove Theorem \ref{thm:main} in Section \ref{sec:proof-main}.
Finally, in Sections \ref{sec:proof-balanced-gadget}, \ref{sec:sym}, and \ref{sec:indpt-phases}
we prove the various lemmas about the two
key concepts that are presented in Section \ref{sec:def}.

\section{Approximation-Preserving Reductions and \BIS}
\label{sec:prelim}

We are interested in the complexity of approximate counting.
Let $\Sigma$ be a finite alphabet.
We want to approximate the value of a function $f:\Sigma^*\rightarrow \mathbb{R}$.
A \emph{randomized approximation scheme} is an algorithm that takes an instance $x\in\Sigma^*$
and a rational error tolerance $\varepsilon>0$ as inputs,
and outputs a rational number $z$ such that, for every $x$ and~$\epsilon$,
\[\Pr[e^{-\varepsilon}f(x)\leq z \leq e^{\varepsilon}f(x)]\geq\frac{3}{4}.\]
A
\emph{fully polynomial randomized approximation scheme} (FPRAS) is a randomized approximation scheme
which runs in time bounded by a polynomial in $|x|$ and $\varepsilon^{-1}$.
Note that the quantity $\frac{3}{4}$ can be changed to any value in the interval
$(\frac{1}{2}, 1)$ or even $1 -2^{-n^c}$ for a problem of size $n$
without changing the set of problems
that have fully polynomial randomized approximation schemes
since the higher accuracy can be achieved
with only polynomial delay by taking a majority vote of multiple samples.

Dyer \emph{et al.} \cite{DGGJ03} introduced the notion of approximation-preserving reductions.
Suppose $f$ and $g$ are two functions from $\Sigma^*$ to $\mathbb{R}$.
An \emph{approximation-preserving reduction} (AP-reduction) from $f$ to $g$ is a randomized algorithm
$\mathcal{A}$ to approximate $f$ using an oracle for $g$.
The algorithm $\mathcal{A}$ takes an input $(x,\varepsilon)\in\Sigma^*\times (0,1)$,
and satisfies the following three conditions:
(i) every oracle call made by $\mathcal{A}$ is of the form $(y,\delta)$, where $y\in\Sigma^*$ is an instance of $g$,
and $0<\delta<1$ is an error bound satisfying $\delta^{-1}\leq {\rm poly}(|x|,\varepsilon^{-1})$;
(ii) the algorithm $\mathcal{A}$ meets the specification for being a randomized approximation scheme for $f$
whenever the oracle meets the specification for being a randomized approximation scheme for $g$;
(iii) the run-time of $\mathcal{A}$ is polynomial in $|x|$ and $\varepsilon^{-1}$.

If an AP-reduction from $f$ to $g$ exists, we write $f$\AP$g$,
and say that $f$ is \emph{AP-reducible} to $g$.
If $f$\AP$g$ and $g$\AP$f$,
then we say that $f$ and $g$ are \emph{AP-interreducible} or \emph{AP-equivalent},
and write $f$\APequiv$g$.
The problem \BIS\ is defined as follows.\\

\noindent{\bf Name.} \BIS.\\
\noindent{\bf Instance.} A bipartite graph $B$.\\
\noindent{\bf Output.} The number of independent sets in $B$.\\

In this paper, we are interested in 2-spin systems over bounded degree bipartite graphs parametrized by a tuple
$(\beta,\gamma,\lambda)$.
We say a real number $z$ is \emph{efficiently approximable} if there is an FPRAS for the problem of computing $z$.
Throughout the paper we only deal with non-negative real parameters that are efficiently approximable.
For efficiently approximable non-negative real numbers $\beta,\gamma,\lambda$ and a positive integer $\Delta$,
we define the problem of computing the partition function of the 2-spin system $(\beta,\gamma)$
with external field $\lambda$ on bipartite graphs of bounded degree $\Delta$, as follows.\\

\noindent{\bf Name.} {\sc Bi-(M-)2-Spin}$(\beta,\gamma,\lambda,\Delta)$.\\
\noindent{\bf Instance.} A bipartite (multi)graph $B=(V,E)$ with degree bound $\Delta$.\\
\noindent{\bf Output.} The quantity
\[Z_B(\beta,\gamma,\lambda)=\sum_{\sigma:V\rightarrow\{0,1\}}\lambda^{\sum_{v\in V}\sigma(v)}
\prod_{(v,u)\in E}\beta^{(1-\sigma(v))(1-\sigma(u))}\gamma^{\sigma(v)\sigma(u)}.\]

Notice that we also introduced a multigraph version of the same problem.
It will be useful later.
We drop the parameter $\Delta$ when there is no degree bound,
that is, \BSpin{\beta}{\gamma}{\lambda} is the same as \BBSpin{\beta}{\gamma}{\lambda}{\infty}.

We also found the notion of non-uniform external field useful in the reductions.
The following problems are introduced as intermediate problems.
We also introduce a multigraph version, but as intermediate problems we do not need the bounded degree variant.\\

\noindent{\bf Name.} {\sc Bi-(M-)Nonuniform-2-Spin}$(\beta,\gamma,\lambda)$.\\
\noindent{\bf Instance.} A bipartite (multi)graph $B=(V,E)$ and a subset $U\subseteq V$.\\
\noindent{\bf Output.} The quantity
\[Z_{B,U}(\beta,\gamma,\lambda)=\sum_{\sigma:V\rightarrow\{0,1\}}\lambda^{\sum_{v\in U}\sigma(v)}
\prod_{(v,u)\in E}\beta^{(1-\sigma(v))(1-\sigma(u))}\gamma^{\sigma(v)\sigma(u)}.\]

\section{Key Properties of the Gadget}

\label{sec:def}

In this section we define two key concepts: nearly-independent phase-correlated spins and unary symmetry breaking.

We first describe the basic setup of a certain gadget.
For positive integers $\Delta$, $t$ and $n$ where $n$ is even and is at least $2t$,
let $T^-$ and $T^+$ be disjoint vertex sets of size~$t$
and let $V^-$ be a size-$n/2$ superset of~$T^-$
and $V^+$ be a size-$n/2$ superset of~$T^+$ which is disjoint from~$V^-$.
Let $T=T^-\cup T^+$ and $V(t,n)=V^- \cup V^+ $.
Let $\calG(t,n,\Delta)$ be the set of bipartite graphs with vertex partition $(V^-,V^+)$ in which every vertex has degree at most~$\Delta$
and every vertex in $T$ has degree at most~$\Delta-1$.
We refer to the vertices in $T$ as ``terminals''.
Vertices in $T^+$ are ``positive terminals'' and vertices in~$T^-$ are ``negative terminals''.
 
When the gadget $G$ is drawn from $\calG(t,n,\Delta)$, we use the notation
$T(G)$ to refer to the set of terminals.
Each configuration $ \sigma\colon  V(t,n) \rightarrow \{0,1\}$ is assigned a unique phase $Y(\sigma)\in\phases$.
Roughly in our applications of the
definitions below
the phase of a configuration $\sigma$ is $\pi$ if
$V^\pi$ contains more vertices with spin $1$ than does $V^{-\pi}$.

We define measures $Q^+$ and $Q^-$.
Fix some $0<q^-<q^+<1$. For any positive integer~$t$,
\begin{itemize}
\item $Q^+$ is the distribution on configurations
$\tau\colon T\rightarrow\{0,1\}$ such that, for every $v\in T^+$, $\tau(v)=1$ independently with probability~$\qp$
and, for every $v\in V^-$, $\tau(v)=1$ independently with probability~$\qm$, and
\item $Q^-$ is the distribution on configurations $\tau\colon T\rightarrow\{0,1\}$
such that, for every $v\in T^-$, $\tau(v)=1$ independently with probability~$\qp$
and, for every $v\in T^+$, $\tau(v)=1$ independently with probability~$\qm$.
\end{itemize}

To give a rough sense for the values $q^-$ and $q^+$ they will correspond to the marginal probabilities
of the root of an infinite tree obtained by taking limits of finite trees with appropriate boundary
conditions, see Section \ref{sec:indpt-phases} for more details.

To prove the \BIS-hardness we need a gadget where the spins
of the
terminals are drawn from distributions close to $Q^+$ or $Q^-$
conditioned
on the phase $+$ or $-$.  The following definition formalises this notion. The definition  will be crucial for obtaining our $\BIS$-hardness results for antiferromagnetic models in the non-uniqueness region. While the definition is stated for general 2-spin systems, it is most useful for antiferromagnetic systems; a ferromagnetic 2-spin system is not going to satisfy the conditions of the definition since one expects that for such systems $q^-=q^+$ or, equivalently, there is no notion of a binary ``phase".

\begin{definition}
\label{def:sly}
A tuple of parameters $(\beta,\gamma,\lambda,\Delta)$
 {\bf supports nearly-independent phase-correlated spins}
if there are
efficiently-approximable
values $0 < q^- < q^+ < 1$
such that
the following is true.
There are functions $n(t,\epsilon)$, $m(t,\epsilon)$, and $f(t,\epsilon)$, each of which is bounded from above by a polynomial in~$t$ and~$\epsilon^{-1}$,
and for every~$t$ and~$\epsilon$ there is a distribution on graphs in $\calG(t,n(t,\epsilon),\Delta)$
such that a gadget $G=(V,E)$ with terminals $T$ can be drawn from the distribution within $m(t,\epsilon)$ time,
and the probability that the following inequalities hold is at least~$3/4$:
  \begin{enumerate}
	\item The phases are roughly balanced, i.e.,
      \begin{align}
        \label{eqn:rough:balance}
        \Pr_{G;\beta,\gamma,\lambda}(Y(\sigma)=+) \geq \frac{1}{f(t,\epsilon)} \mbox{ and }
        \Pr_{G;\beta,\gamma,\lambda}(Y(\sigma)=-) \geq \frac{1}{f(t,\epsilon)}.
      \end{align}
    \item For a configuration $\sigma\colon V\rightarrow\{0,1\}$ and any $\tau\colon T\rightarrow\{0,1\}$,
      \begin{align} \label{eqn:near:ind}
            \left|
       \frac{\Pr_{G;\beta,\gamma,\lambda}(\sigma|_{T}=\tau\mid Y(\sigma)=+)}
       {Q^+(\tau)}
       -1
       \right|
       \leq  \epsilon
       \mbox{ and }
          \left|
       \frac{\Pr_{G;\beta,\gamma,\lambda}(\sigma|_{T}=\tau\mid Y(\sigma)=-)}
       {Q^-(\tau)}
       -1
       \right|
       \leq  \epsilon.\end{align}
  \end{enumerate} 
\end{definition}

In fact, given a gadget with the above property, one can construct
a gadget where the phases are (nearly) uniformly distributed
as detailed in the following definition.

\begin{definition}
  \label{def:balsly}
  We say that the tuple of parameters $(\beta,\gamma,\lambda,\Delta)$
  supports {\bf balanced} nearly-independent phase-correlated spins
  if Definition \ref{def:sly} holds with \eqref{eqn:rough:balance} replaced by:
\begin{equation}
        \label{eqn:balance}
        \Pr_{G;\beta,\gamma,\lambda}(Y(\sigma)=+) \geq \frac{1-\epsilon}{2} \mbox{  and  }
        \Pr_{G;\beta,\gamma,\lambda}(Y(\sigma)=-) \geq \frac{1-\epsilon}{2},
\end{equation}
where $\eps$ is quantified as in Definition \ref{def:sly}.
  \end{definition}

  The following lemma shows that for essentially all 2-spin systems,
  Definition \ref{def:sly} implies Definition \ref{def:balsly}.  Hence, in the proof
  of Theorem \ref{thm:main} we assume the existence of a gadget with
  balanced phases.

\begin{lemma}
\label{lem:gettingbal}
If the parameter tuple $(\beta,\gamma,\lambda,\Delta)$ with $\beta\gamma\neq 1$ supports nearly-independent phase-correlated spins,
then it supports balanced nearly-independent phase-correlated spins.
\end{lemma}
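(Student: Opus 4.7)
I would draw a single gadget $G$ from the distribution witnessing nearly-independent phase-correlated spins, with slightly enlarged parameter $t' = \lceil t/2\rceil + m$ for an integer $m = m(t,\eps)$ to be chosen below, and then take two \emph{disjoint copies} $G_1, G_2$ of the \emph{same} $G$. Writing $p = \Pr_G(Y=+)$ and $q = 1-p$, the joint phases on $G_1 \sqcup G_2$ satisfy $\Pr(Y_1=+, Y_2=-) = pq = \Pr(Y_1=-, Y_2=+)$ regardless of the value of $p$; this is the key symmetry that would let me balance the phases without knowing $p$. By near-independence, in the event $(+,-)$ every vertex of $T_1^+ \cup T_2^-$ has marginal probability $q^+$ of being assigned $1$ and every vertex of $T_1^- \cup T_2^+$ has marginal $q^-$; these roles swap in $(-,+)$. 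So I would set $T'^+ := T_1^+ \cup T_2^-$ and $T'^- := T_1^- \cup T_2^+$ (minus the coupling-dedicated vertices described next), declare the new phase $Y'$ to be $+$ on the good event $(+,-)$ and $-$ on $(-,+)$, and assign the ``bad'' same-phase events $(+,+),(-,-)$ to either side arbitrarily. This would immediately give balance and correct $Q^\pm$-distributed terminals, \emph{provided} the bad cases --- which together carry the total mass $p^2+q^2$, potentially close to~$1$ --- can be suppressed.

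To suppress them I would add $2m$ ``length-two coupling paths'' across the two copies: pick $m$ disjoint pairs $(v_1^{(i)},v_2^{(i)}) \in T_1^+ \times T_2^-$ and another $m$ disjoint pairs in $T_1^- \times T_2^+$, and for each pair insert a fresh internal vertex $u_i$ adjacent to both endpoints. Bipartiteness of $G'$ is preserved by taking the global bipartition $V_1^+\cup V_2^-$ vs.\ $V_1^-\cup V_2^+$ and placing each $u_i$ on the opposite side from its endpoints; the degree bound $\Delta$ is respected since terminals have slack~$1$ in $\calG(t',n,\Delta)$ and each $u_i$ has degree~$2$. Summing out the $u_i$'s produces an effective symmetric pair-weight $F(p_1,p_2) = \mathbf{p}_1^{\!\top} \mathbf{F}\,\mathbf{p}_2$, viewed as a bilinear form on probability vectors $\mathbf{p}_i = (1-p_i, p_i) \in \mathbb{R}^2$; a direct calculation yields $\det \mathbf{F} = \lambda(\beta\gamma-1)^2$. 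This is where the hypothesis $\beta\gamma \neq 1$ enters decisively: $\mathbf{F}$ is then strictly positive definite, $F$ is an inner product on $\mathbb{R}^2$, and strict Cauchy--Schwarz yields
\[\rho \;:=\; \frac{F(q^+,q^+)\,F(q^-,q^-)}{F(q^+,q^-)^2} \;>\; 1,\]
where strictness uses only $q^+ \neq q^-$.

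Factorising the partition function of $G'$ over the four joint phases via near-independence, both opposite-phase events accumulate the \emph{same} coupling factor $[F(q^+,q^+)\,F(q^-,q^-)]^m$ --- this preserves the key $pq = pq$ equality --- while both same-phase events accumulate $F(q^+,q^-)^{2m}$. Hence the ratio of same-phase to opposite-phase mass is bounded by $[(p^2+q^2)/(2pq)] \cdot \rho^{-m}$. Choosing $m = \Theta((t + \log(f/\eps))/\log \rho)$, where $f = f(t,\eps)$ is the polynomial balance bound of the input gadget, makes this ratio at most $\eps \cdot e^{-O(t)}$, which both gives $|\Pr(Y'=+) - \tfrac{1}{2}| \leq \eps/2$ (hence~\eqref{eqn:balance}) and dominates the worst-case ratio $Q^{\text{bad}}(\tau)/Q^+(\tau)$ (a product of marginal-ratio factors over the $T_2^\mp \subseteq T'^\pm$ vertices, which for adversarial $\tau$ can be as large as $e^{O(t)}$), so the residual bad-case contribution to the relative error in~\eqref{eqn:near:ind} stays within the $\eps$ tolerance. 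Since $m$, $t'$ and $n(t',\eps)$ are all polynomial in $t$ and $\eps^{-1}$, the resulting $G'$ lies in $\calG(t,n',\Delta)$ for a polynomial $n'$, and because $G'$ is a deterministic function of the single draw~$G$ the $\geq 3/4$ success probability of the input gadget transfers verbatim to~$G'$.

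The main obstacle is exactly the $e^{O(t)}$ blow-up of the worst-case bad-to-good marginal ratio: it forces $m$ to grow linearly in $t$ rather than merely polylogarithmically in $1/\eps$. Because $\rho>1$ is a fixed positive constant determined only by the efficiently-approximable values $q^+,q^-$ and by $\beta,\gamma,\lambda$, $m$ remains polynomial, but one must tighten the per-copy near-independence parameter in the drawn $G$ to $\eps/\mathrm{poly}(t)$ so that the $2m$ local factorisations compose within the final $\eps$ budget, and verify that this only multiplies $n(t',\eps)$ by a polynomial factor. The bipartiteness and degree checks outlined above, and the precise choice of $t'$ so that $|T'^\pm|=t$, are then routine bookkeeping.
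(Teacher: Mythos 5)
Your proposal is correct, and at its core it is the paper's argument: take two copies of one drawn gadget, couple some terminals of the two copies so that the opposite-phase joint events dominate, and exploit the resulting $Z^{+}Z^{-}=Z^{-}Z^{+}$ symmetry between the copies to get balance, while the product structure of $Q^{\pm}$ makes the coupling factor out and preserves near-independence. The differences are in execution, and they are worth comparing. The paper joins like-labelled (resp.\ oppositely-labelled) terminals by \emph{direct} matching edges in the antiferromagnetic (resp.\ ferromagnetic) case, the sign of $\det N=(\beta\gamma-1)(q^{+}-q^{-})^{2}$ dictating which pattern suppresses the same-phase events; your length-two paths give the effective interaction $M\,\mathrm{diag}(1,\lambda)\,M$, which is positive definite exactly when $\beta\gamma\neq1$, so a single construction plus strict Cauchy--Schwarz handles both signs at once and settles bipartiteness without re-choosing the bipartition -- a genuine (if modest) unification. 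The more consequential divergence is the phase/terminal bookkeeping: the paper declares the phase of the new gadget to be the phase of $G_{1}$ and keeps terminals \emph{only} from $G_{1}$; then, conditioned on that phase, the entire $G_{2}$-plus-coupling factor cancels between $Z_{K}^{\pi}(\tau)$ and $Z_{K}^{\pi}$, so near-independence is immediate and the suppression is needed only to make $(1+c^{t'}r)/(1+c^{t'}/r)$ close to $1$, whence $t'=O(\log(t+\epsilon^{-1}))$ matched pairs suffice. Your choice of defining the phase by the joint event $(+,-)$ vs.\ $(-,+)$ and keeping terminals from both copies forces you to drive the same-phase mass below $\epsilon\,e^{-\Theta(t)}$, because under a bad event half your terminals carry the wrong marginal and the pointwise ratio against $Q^{+}$ can be $e^{\Theta(t)}$; you identified this correctly, and since $\rho>1$ is a constant it costs only $m=\Theta(t+\log(f/\epsilon))$ couplings, which keeps everything polynomial, so the proof goes through -- just with a heavier gadget. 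Two small remarks: the tightening of the per-copy near-independence parameter to $\epsilon/\mathrm{poly}(t)$ is unnecessary (the bound in Definition~\ref{def:sly} is on whole terminal configurations, so summing out the coupled terminals costs only a bounded number of $(1\pm\epsilon')$ factors, exactly as in the paper's Eq.~(\ref{eqn:ZK:tau}) manipulation), though it is harmless; and the balance bound $f$ you invoke should be the one at the enlarged parameters $(t',\epsilon')$, a benign circularity with $m$ that resolves since $f$ is polynomial.
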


The main technical result for proving \BIS-hardness for 2-spin
antiferromagnetic systems in the tree non-uniqueness region
is the following lemma, which is proved in Section \ref{sec:indpt-phases}.

\begin{lemma}
\label{lem:indpt-phases}
For all $\Delta\geq 3$, all $\beta,\gamma,\lambda>0$ where $\beta\gamma < 1$,
if the infinite $\Delta$-regular tree $\TreeD$ is in the non-uniqueness
region then the tuple of parameters $(\beta,\gamma,\lambda,\Delta)$
supports balanced nearly-independent phase-correlated spins.
\end{lemma}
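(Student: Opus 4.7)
The plan is to construct the gadget as a random bipartite near-regular graph. Fix terminal sets $T^\pm$ of size $t$ and let $n=n(t,\epsilon)$ be a sufficiently large polynomial in $t$ and $\epsilon^{-1}$. Draw $G$ from the uniform configuration model on bipartite graphs with parts $V^\pm$ of size $n/2$ in which every non-terminal has degree exactly $\Delta$ and every terminal has degree exactly $\Delta-1$ (the missing half-edge at each terminal is what permits these gadgets to be wired into larger instances later). Since $\beta\gamma<1$, in particular $\beta\gamma\neq 1$, so by Lemma \ref{lem:gettingbal} it suffices to verify Definition \ref{def:sly} (with the roughly-balanced phases \eqref{eqn:rough:balance}). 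The values $0<q^-<q^+<1$ are extracted from the tree recursion for the 2-spin system on $\TreeD$: in the non-uniqueness region, the 2-step recursion has two distinct stable fixed points corresponding to the limiting root marginals under $+$- and $-$-boundary conditions on the leaves of a deep $(\Delta-1)$-ary tree; these are the desired $q^+$ and $q^-$, and they are algebraic in $(\beta,\gamma,\lambda,\Delta)$ and hence efficiently approximable.

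The core of the argument is a second moment analysis. Partition configurations $\sigma$ on $G$ by the pair $(n_1^-,n_1^+)/n$ of normalized spin-$1$ counts on $V^\pm$. A Laplace-type computation of $\mathbb{E}_G[Z_G]$ over the configuration model shows that the exponent is maximized at exactly two points, $(q^-/2,q^+/2)$ and $(q^+/2,q^-/2)$; non-uniqueness is precisely what ensures that these are non-degenerate strict maxima and that no other critical point dominates. Define $Y(\sigma)=+$ if $n_1^+(\sigma)/(n/2)$ is closer to $q^+$ than to $q^-$ (and $Y(\sigma)=-$ otherwise). The second moment $\mathbb{E}_G[Z_G^2]$ is then bounded by $O(\mathbb{E}_G[Z_G]^2)$ via a Hessian-positivity check at the relevant pair of dominant optima, exactly as in \cite{SS12,GSV12}; Paley--Zygmund gives $Z_G\geq c\,\mathbb{E}_G[Z_G]$ with probability at least $3/4$ over $G$, and by the symmetry between the two phases in $\mathbb{E}_G[Z_G]$ this yields \eqref{eqn:rough:balance} with $f(t,\epsilon)$ bounded by a fixed polynomial.

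For the near-independence condition \eqref{eqn:near:ind}, the approach is a planted-model and contiguity argument. Introduce the auxiliary planted distribution on pairs $(G,\sigma)$ obtained by first drawing $\sigma$ with spin-$1$ density $q^+$ on $V^+$ and $q^-$ on $V^-$ (conditioned on the correct global spin counts), then drawing $G$ uniformly subject to the edge-type counts prescribed by $\sigma$; by construction, under the planted distribution the restrictions to the terminals $T^+\cup T^-$ are exactly i.i.d.\ with marginals matching $Q^+$, and similarly in the opposite phase for $Q^-$. The second-moment estimate established above implies that the planted distribution is mutually contiguous with the Gibbs measure conditioned on $Y=+$; combining this with local weak convergence of $G$ to $\TreeD$ around the terminals (recall the terminals have bounded degree $\Delta-1<\Delta$, so their neighborhoods are locally tree-like with high probability), one obtains the pointwise bound \eqref{eqn:near:ind} uniformly in the terminal assignment $\tau$, once $n$ is taken polynomially large in $t/\epsilon$.

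The main obstacle is the second-moment estimate $\mathbb{E}_G[Z_G^2]=O(\mathbb{E}_G[Z_G]^2)$: this requires a four-dimensional Laplace analysis over pairs $(\sigma,\sigma')$ of configurations parameterized by the joint spin statistics on $V^-$ and $V^+$, and one must show that the dominant contributions come from the perfectly correlated (same-phase) pairs at the two symmetric optima, with all other critical points strictly subdominant. This computation is essentially the bipartite antiferromagnetic 2-spin analogue of the analyses in \cite{Sly10,SS12,GSV12}; the present adaptations (bipartite parts, a vanishing fraction of degree-$(\Delta-1)$ terminals) only shift lower-order terms and do not affect the exponential order, so the Hessian-positivity conclusion established there carries over verbatim.
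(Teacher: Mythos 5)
Your overall strategy (random bipartite near-regular gadget, phases defined by the dominant spin densities, first/second moment analysis, planted-model ideas) is the same family of argument the paper uses, but there are two genuine gaps at exactly the points where the real work lies. First, the step you dismiss as carrying over ``verbatim'' --- that the second-moment exponent is dominated by the uncorrelated overlap at the dominant phase, i.e.\ Condition~\ref{GSV12:condition1} --- is precisely what was \emph{not} known for general antiferromagnetic $(\beta,\gamma,\lambda)$ with an external field in the whole non-uniqueness region: \cite{Sly10} and \cite{GSV12} establish it only for restricted parameter ranges, and \cite{GSV13} only for $\lambda=1$. Moreover this is a global optimization over the overlap variables $(\upsilon^+,\upsilon^-,\mathbf{y})$, not a local ``Hessian-positivity check'' at the product point; a Hessian argument cannot rule out other global maximizers of $\Psi_{2;\beta,\gamma,\lambda}'$, which is the notorious difficulty of these second-moment computations. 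The paper closes this gap by a change of variables (Lemma~\ref{lem:new}) showing that for $\Delta$-regular-type graphs the field can be absorbed into the edge weights, $\Psi_{i;\beta,\gamma,\lambda}=\Psi_{i;\beta/\lambda^{1/\Delta},\gamma\lambda^{1/\Delta},1}+i\log\lambda$, and then importing the field-free global statements $\max\Psi_2=2\max\Psi_1$ and the strict inequality off the product overlap from \cite{GSV13} (Lemmas~\ref{lem:GSV13:thm3} and~\ref{lem:GSV13:lemma29}), yielding Condition~\ref{GSV12:condition1} via Lemmas~\ref{lem:first and second moment} and~\ref{lem:condition1}. Without some argument of this kind your proof assumes its key step.

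Second, your route to the near-independence bound \eqref{eqn:near:ind} is not adequate. That condition is a multiplicative $(1\pm\epsilon)$ bound on the conditional probability of \emph{every} terminal configuration $\tau$ (all $2^{2t}$ of them, each of probability exponentially small in $t$); mutual contiguity only transfers events of non-vanishing probability, and local weak convergence to $\TreeD$ gives convergence of local statistics, not a uniform pointwise ratio bound against the product measure $Q^{\pm}$. Also, in the planted model conditioned on global spin counts the terminal spins are exchangeable, not exactly i.i.d. This is why the gadget in the paper (following \cite{Sly10,GSV12}) is not your bare configuration-model graph: one attaches disjoint $(\Delta-1)$-ary trees of even depth to the degree-$(\Delta-1)$ vertices and takes the terminals to be the tree roots, so that conditioned on the leaf spins the terminals are genuinely conditionally independent and their marginals converge to the tree values $q^{\pm}$; the verification that this yields Definition~\ref{def:sly} is then \cite[Lemma 19]{GSV12} (Lemma~\ref{lem:GSV12:condition1}), contingent on Condition~\ref{GSV12:condition1}. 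Relatedly, the first-moment exponent over the random graph is maximized at the $\Delta$-regular-tree densities $(p^+,p^-)$, not at $(q^+,q^-)$ as you state; the $q^{\pm}$ appear only as the marginals of the attached tree roots. Your reduction to the balanced version via Lemma~\ref{lem:gettingbal} is fine.
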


The second property of the gadget is the notion of unary symmetry breaking which is relatively simple.

\begin{definition}
\label{def:symmetry}
We say that a tuple of parameters $(\beta,\gamma,\lambda,\Delta)$
{\bf supports unary symmetry breaking}
if there is a bipartite graph~$H$
whose vertices have degree at most~$\Delta$
which has a distinguished degree-$1$ vertex~$v_H$
such that $\Pr_{H;\beta,\gamma,\lambda}(\sigma_{v_H}=1)
\not\in\{0,\lambda/(1+\lambda),1\}$.
\end{definition}

We will prove in Section~\ref{sec:sym}
that unary symmetry breaking holds for all 2-spin models except for two cases.

\begin{lemma}
  Assume $\Delta\geq3$.
  The parameters $(\beta,\gamma,\lambda,\Delta)$ support unary symmetry breaking
  unless (i)~$\beta\gamma=1$ or (ii)~ $\beta=\gamma$ and $\lambda=1$.
  \label{lem:sym:break}
\end{lemma}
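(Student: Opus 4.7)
The plan is to construct, for every parameter tuple outside exceptions (i) and (ii), a witness bipartite gadget $H$ by attaching $v_H$ as a pendant to some vertex $u$ of a small graph. Let $P_s(u)$ denote the sum of weights over configurations of $H\setminus\{v_H\}$ with $\sigma(u)=s$, and set $R_u := P_0(u)/P_1(u)$. Unfolding the two possibilities for $\sigma(v_H)$ gives
\[
\Pr_{H;\beta,\gamma,\lambda}(\sigma_{v_H}=1) = f(R_u), \qquad f(R) := \frac{\lambda(R+\gamma)}{(\beta+\lambda)R + 1 + \lambda\gamma}.
\]
The derivative is $f'(R) = \lambda(1-\beta\gamma)\big/\bigl((\beta+\lambda)R + 1 + \lambda\gamma\bigr)^2$, which is nonzero once case (i) is excluded; hence $f$ is strictly monotone, the equation $f(R) = \lambda/(1+\lambda)$ has a unique solution $R^\ast$, and $f(R)\in(0,1)$ whenever $R>0$. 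It therefore suffices to find $H$ with $R_u>0$ and $R_u\neq R^\ast$.

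My first attempt would be the trivial one-edge gadget $H = \{v_H u\}$, for which $R_u = 1/\lambda$. Setting $A := (\beta+\lambda)/(1+\lambda\gamma)$, a short calculation shows $f(1/\lambda) = \lambda/(1+\lambda)$ if and only if $A=1$, so this already works as soon as $A\neq 1$. The remaining case $A=1$ is where both exceptions live: $A=1$ amounts to $\beta = 1+\lambda(\gamma-1)$, which forces $\gamma=1\Leftrightarrow\beta=1$ (giving $\beta\gamma=1$, case (i)) and $\lambda=1\Leftrightarrow\beta=\gamma$ (case (ii)); outside the exceptions we therefore have $\gamma\neq 1$ and $\lambda\neq 1$. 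Moreover under $A=1$ with $\gamma\neq 1$ one has $R^\ast = (1-\gamma)/(1-\beta) = 1/\lambda$, and a star gadget (a vertex $u$ with $k$ pendants) yields $R_u = A^k/\lambda = 1/\lambda = R^\ast$ for every $k$; thus no tree-shaped gadget works, and a cycle is needed.

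For the critical case $A=1$ I take $H$ to be a $4$-cycle $u{-}x{-}u'{-}z{-}u$ with $v_H$ attached as a pendant at $u$. This is bipartite with bipartition $\{u,u'\}$ versus $\{x,z,v_H\}$ and has maximum degree $3\leq\Delta$. Enumerating the $16$ configurations of $\{u,x,u',z\}$ yields
\[
P_0(u) = \beta^4 + 3\lambda\beta^2 + 2\lambda^2\beta\gamma + \lambda^2 + \lambda^3\gamma^2,
\]
\[
P_1(u) = \lambda\beta^2 + 2\lambda^2\beta\gamma + \lambda^2 + 3\lambda^3\gamma^2 + \lambda^4\gamma^4.
\]
Substituting $\beta = 1 + \lambda(\gamma-1)$ and expanding $\lambda P_0 - P_1$ as a polynomial in $1-\gamma$, the coefficients of degrees $0,1,2,3$ all collapse and one is left with the clean identity
\[
\lambda P_0(u) - P_1(u) = \lambda^4(\gamma-1)^4(\lambda - 1).
\]
The hypotheses $\gamma\neq 1$ and $\lambda\neq 1$ make the right-hand side nonzero, so $R_u\neq 1/\lambda = R^\ast$; since $P_0,P_1\geq \lambda^2>0$, the construction is valid and $f(R_u)\in(0,1)\setminus\{\lambda/(1+\lambda)\}$, witnessing unary symmetry breaking.

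The main obstacle is verifying the factorization $\lambda P_0 - P_1 = \lambda^4(\gamma-1)^4(\lambda-1)$: the right-hand side has to vanish on exactly the two excluded loci and nowhere else, and getting the four lower-order cancellations right requires patience. The satisfying feature is that both exception conditions of the lemma arise automatically as the linear factors $(\gamma-1)$ and $(\lambda-1)$ of the output, which is what makes the $4$-cycle the natural gadget for the critical case.
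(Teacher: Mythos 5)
Your proposal is correct (I checked the marginal formula $f(R)$, the criticality condition $\beta+\lambda=1+\lambda\gamma$ for the one-edge gadget, and the factorization $\lambda P_0-P_1=\lambda^4(\gamma-1)^4(\lambda-1)$, which indeed holds since under that condition $a-b=\lambda^2(\gamma-1)^2$, $\lambda(c-b)=\lambda^2(\gamma-1)^2$ and $a-c=\lambda(\gamma-1)^2(\lambda-1)$ with $a=\beta^2+\lambda$, $b=\beta+\lambda\gamma$, $c=1+\lambda\gamma^2$), and it is essentially the paper's approach: your one-edge gadget and your $4$-cycle with a pendant are exactly the paper's $H_0$ and $H_2$, whose pendant weight vectors are $(1+\lambda)(1,1)^{\texttt T}$ and $(a^2+\lambda b^2,\,b^2+\lambda c^2)^{\texttt T}$, the paper simply packaging the same algebra as a collinearity argument over $H_0,H_1,H_2$ instead of your explicit two-case analysis. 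The only real difference is that you omit the path gadget $H_1$, which in fact fails automatically once $H_0$ does, so nothing is lost.
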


\section{General Reduction}
\label{sec:proof-main}

In this section we prove Theorem \ref{thm:main}.
We first show how the two notions of ``nearly-independent phase-correlated spins'' and ``unary symmetry breaking'' lead to \BIS-hardness.

\subsection{An Intermediate Problem}

The goal of this section is to show that it is \BIS-hard to approximate the partition function of
antiferromagnetic Ising models with non-uniform non-trivial external fields on bipartite graphs.

\begin{lemma}
  For any $0<\alpha<1$, $\lambda>0$ and $\lambda\neq 1$,
  \center{\BIS ~\AP ~\BMNSpin{\alpha}{\alpha}{\lambda}.}
  \label{lem:BIS:Ising}
\end{lemma}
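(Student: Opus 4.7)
The plan is to reduce $\BIS$ to $\BMNSpin{\alpha}{\alpha}{\lambda}$ by constructing, for each instance $G=(V_1\cup V_2,E_G)$ of $\BIS$ and tolerance $\epsilon>0$, a bipartite multigraph $B$ with field subset $U\subseteq V(B)$ whose partition function matches $|\IS(G)|$ up to an efficiently-computable multiplicative constant and multiplicative error $1\pm O(\epsilon)$.

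The core algebraic observation driving the construction is the following. Replace each edge of~$G$ by $k$ parallel copies, and suppose that each vertex $v$ of~$G$ has been made to carry an additional unary weight of $\alpha^{-kd_v}$ on spin~$0$, where $d_v=\deg_G v$. Using the identity $\sum_{v\notin I}d_v=2m_0(\sigma)+(|E_G|-m_0(\sigma)-m_1(\sigma))$ (with $I$ the $1$-set of~$\sigma$), the Ising weight of any~$\sigma$ simplifies to $\alpha^{k(2m_1(\sigma)-|E_G|)}$, and so
\[
Z \;=\; \alpha^{-k|E_G|}\sum_{\sigma\colon V(G)\to\{0,1\}}\alpha^{2km_1(\sigma)}.
\]
Taking $k=\Theta\bigl((n+\log\epsilon^{-1})/|\log\alpha|\bigr)$ makes the contribution of every configuration with $m_1(\sigma)\ge 1$ at most $\epsilon$ in total, so $\alpha^{k|E_G|}Z\in(1\pm\epsilon)|\IS(G)|$.

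It remains to realize the target unary boost $\alpha^{-kd_v}$ on each vertex~$v$ using only pendants with the available field~$\lambda\ne 1$. A single pendant attached to~$v$ by one edge (with field~$\lambda$) contributes effective unary weights $(\alpha+\lambda,\,1+\lambda\alpha)$ on~$v$, so that the per-pendant spin-$0$-over-spin-$1$ boost is $r=(\alpha+\lambda)/(1+\lambda\alpha)\ne 1$; one checks that $r>1\iff\lambda>1$. When $\lambda>1$, stacking $N\approx kd_v|\log\alpha|/\log r$ independent pendants on~$v$ realizes boost $r^N\approx\alpha^{-kd_v}$ to any desired precision. When $\lambda<1$, the direction of the single-pendant boost is wrong, so we use instead a two-level gadget: an intermediate vertex~$p$ carrying $N'$ sub-pendants of field~$\lambda$ receives a very large effective field, and summing out~$p$ produces a unary boost on~$v$ with ratio close to $1/\alpha>1$ per gadget; attaching $\approx kd_v$ of these gadgets recovers the required total boost. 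In either regime, every pendant is placed on the bipartition side opposite its anchor (so $B$ remains bipartite), and $U$ is taken to be the set of pendants carrying a $\lambda$-field.

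Collecting terms, $Z_{B,U}(\alpha,\alpha,\lambda)=\alpha^{-k|E_G|}C_{\text{pend}}\cdot(1\pm O(\epsilon))|\IS(G)|$, where $C_{\text{pend}}$ is the product of pendant-gadget normalizers and is efficiently approximable because $\alpha$ and~$\lambda$ are; invoking the $\BMNSpin{\alpha}{\alpha}{\lambda}$ oracle on~$(B,U)$ at a slightly tightened precision and dividing out the known normalizer yields the claimed AP-reduction. The main obstacle is the pendant realization: verifying that the achievable boost ratios form a dense enough family to hit $\alpha^{-kd_v}$ to relative precision $\epsilon/(2n)$ in both the $\lambda>1$ and the $\lambda<1$ regime (this is precisely where $\lambda\ne 1$ is essential), and propagating the multiplicative errors over all $n$ vertex-level approximations so that the aggregate error remains $O(\epsilon)$. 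These are the technically delicate but routine steps that complete the proof.
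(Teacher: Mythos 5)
Your $\lambda<1$ branch is, in substance, the paper's own construction: the two-level gadget (an intermediate vertex carrying many field-$\lambda$ leaves, which pin it to spin $1$ and hence boost spin $0$ at the anchor by a ratio converging to exactly $1/\alpha$) is precisely the paper's $w_{v,j}$ with its $t_2$ pendants in $U_{v,j}$, attached $t_1\deg(v)$ times to $v$ alongside $t_1$-fold parallel edges; the paper's error analysis conditions on all $w$'s having spin $1$ rather than summing the gadget into an effective unary weight, but that is cosmetic. The genuine gap is in your $\lambda>1$ branch. Stacking $N$ single pendants realizes only boosts of the form $r^N$ with $r=(\alpha+\lambda)/(1+\alpha\lambda)$ a fixed constant $>1$, i.e.\ a geometric lattice whose spacing is a constant multiplicative factor independent of $\epsilon$ and $n$. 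So you cannot in general hit the target $\alpha^{-k\deg(v)}$ to relative precision $\epsilon/(2n)$; the best rounding of $N$ leaves a per-vertex error as large as a factor $\sqrt{r}$. Since the boost acts only on vertices assigned spin $0$, these per-vertex discrepancies do not cancel between different independent sets: two independent sets differing on many vertices can acquire spurious weight ratios exponential in $n$, and the partition function then fails to approximate $I_B$ times a common constant. The phrase ``to any desired precision'' is exactly where the argument breaks, and it is not a routine technicality (it would require a density/approximation argument for the multiplicative semigroup generated by achievable gadget ratios, which is false for a single gadget type and delicate in general since $\alpha,\lambda$ are only efficiently approximable reals).

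The repair is easy and is what the paper does: since $\beta=\gamma=\alpha$, flipping all spins shows \BMNSpin{\alpha}{\alpha}{\lambda} is the same problem as \BMNSpin{\alpha}{\alpha}{1/\lambda}, so one may assume $\lambda<1$ outright and use only the two-level gadget, whose ratio converges to the exact value $1/\alpha$ with error exponentially small in the number of leaves --- this is what makes the $\epsilon/(2n)$ precision requirement attainable with polynomially many leaves. (Alternatively, for $\lambda>1$ the same two-level gadget pins its middle vertex to spin $0$ and thus boosts spin $1$ at $v$ by $\approx 1/\alpha$; attaching $\approx k\deg(v)$ of them makes the dominant configurations those whose $0$-set is independent, which again counts $I_B$.) With that replacement, the rest of your outline --- parallel edges with $k=\Theta((n+\log\epsilon^{-1})/|\log\alpha|)$, bipartiteness via placing gadgets on the opposite side, the efficiently approximable normalizer, and the error propagation over $n$ vertices --- matches the paper's proof.
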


\begin{proof}
By flipping $0$ to $1$ and $1$ to $0$ for each configuration $\sigma$,
we see that \BMNSpin{\alpha}{\alpha}{\lambda}~is in fact the same as \BMNSpin{\alpha}{\alpha}{1/\lambda}.
Hence we may assume $\lambda<1$.

Let $M = \left(\begin{matrix}
 \alpha &  1\\
 1 &  \alpha
\end{matrix}\right),$
and
$\left(\begin{matrix}
\rho_0\\
\rho_1
\end{matrix}\right) = M
\left(\begin{matrix}
 1\\
 \lambda
\end{matrix}\right) =
\left(\begin{matrix}
 \alpha+\lambda\\
 1+\alpha \lambda
\end{matrix}\right).$
Note that $\alpha<1$ and
$\lambda < 1$, so
$\rho_1 > \rho_0$.

Let $B=(V,E)$ be an input to~$\BIS$ with $n=|V|$ and $m=|E|$.
Let $I_B$ be the number of independent sets of~$B$.
Let $\epsilon$ be the desired accuracy of the reduction.
We will construct an instance~$B'=(V',E')$ with a specified vertex subset $U\subset V'$ for
\BMNSpin{\alpha}{\alpha}{\lambda}~such that
\[\exp\left( -\frac{\epsilon}{2} \right) I_B \leq \frac{Z_{B',U}(\alpha,\alpha,\lambda)}{C} \leq \exp\left( \frac{\epsilon}{2} \right) I_B,\]
where $C$ is a quantity that is easy to approximate.
Therefore it suffices to call oracle \BMNSpin{\alpha}{\alpha}{\lambda}~on $B'$ with the specified subset $U$ with accuracy $\frac{\epsilon}{4}$
and approximate $C$ within $\frac{\epsilon}{4}$.

The construction of $B'$ involves two positive integers $t_1$ and $t_2$.
Let $t_1$ be the least positive integer
such that
\begin{align}
  \label{eqn:t1}
  \alpha^{2t_1} \leq \frac{\epsilon}{6\cdot 2^{n}}.
\end{align}
Note that $t_1$ depends on $n$ and $\epsilon$ and
there is a polynomial $p$ in~$n$ and $\epsilon^{-1}$
such that $t_1 \leq p(n,\epsilon^{-1})$.
Let $t_2$ be the least positive integer depending on~$n$, $\epsilon$ and~$t_1$
such that
\begin{align}
  \label{eqn:t2}
    \left(\frac{\rho_0}{\rho_1}\right)^{t_2}
	\leq \frac{\alpha^{t_1 m} \cdot {\epsilon} }{6\cdot 2^{2 t_1 m + n}}.
\end{align}
Once again, $t_2$ is bounded from above by a polynomial in~$n$ and~$\epsilon^{-1}$.

Given the integers~$t_1$ and~$t_2$, the graph~$B'$ is constructed as follows (see Figure~\ref{fig:toy} for an illustrative diagram).
Let $W_v = \{w_{v,j} \mid 1\leq j\leq t_1 \deg(v)\}$ for each $v\in V$ where $\deg(v)$ is the degree of $v$ in $B$.
Let $U_{v,j}=\{u_{v,j,k} \mid 1\leq k\leq t_2\}$ for any $v\in V$ and $1\leq j\leq t_1\deg(v)$.
Let \[W=\bigcup_{v\in V} W_v \mbox{\qquad and \qquad} U=\bigcup_{v\in V} \bigcup_{1\leq j\leq t_1\deg(v)} U_{v,j}.\]
The vertex set of $B'$ is $V'=V\cup U\cup W$.
Note that $|W|=2t_1 m$ and $|U|=2t_1t_2m$.

We add $t_1$ parallel edges in $B'$ between $u$ and $v$ for each $(u,v)\in E$ and
add edges between $v$ and every vertex in $W_v$, and between $w_{v,j}$ and every vertex in $U_{v,j}$ for each $v\in V$ and $1\leq j\leq t_1\deg(v)$.
Formally the edge set of $B'$ is
\[E'=\left(\biguplus_{1\leq i\leq t_1}E\right)\cup
\bigcup_{v\in V}E_v\cup
\bigcup_{\substack{v\in V\\ 1\leq j\leq t_1\deg(v)}}E_{v,j},\]
where $\biguplus$ denotes a disjoint union as a multiset of $t_1$ copies of $E$,
$E_v=\{(v,w)|w\in W_v\}$ and $E_{v,j}=\{(w_{v,j},u)|u\in U_{v,j}\}$ for each $v$ and $j$.

\begin{figure}[h]
\includegraphics[trim= 20mm 195mm 20mm 20mm, clip]{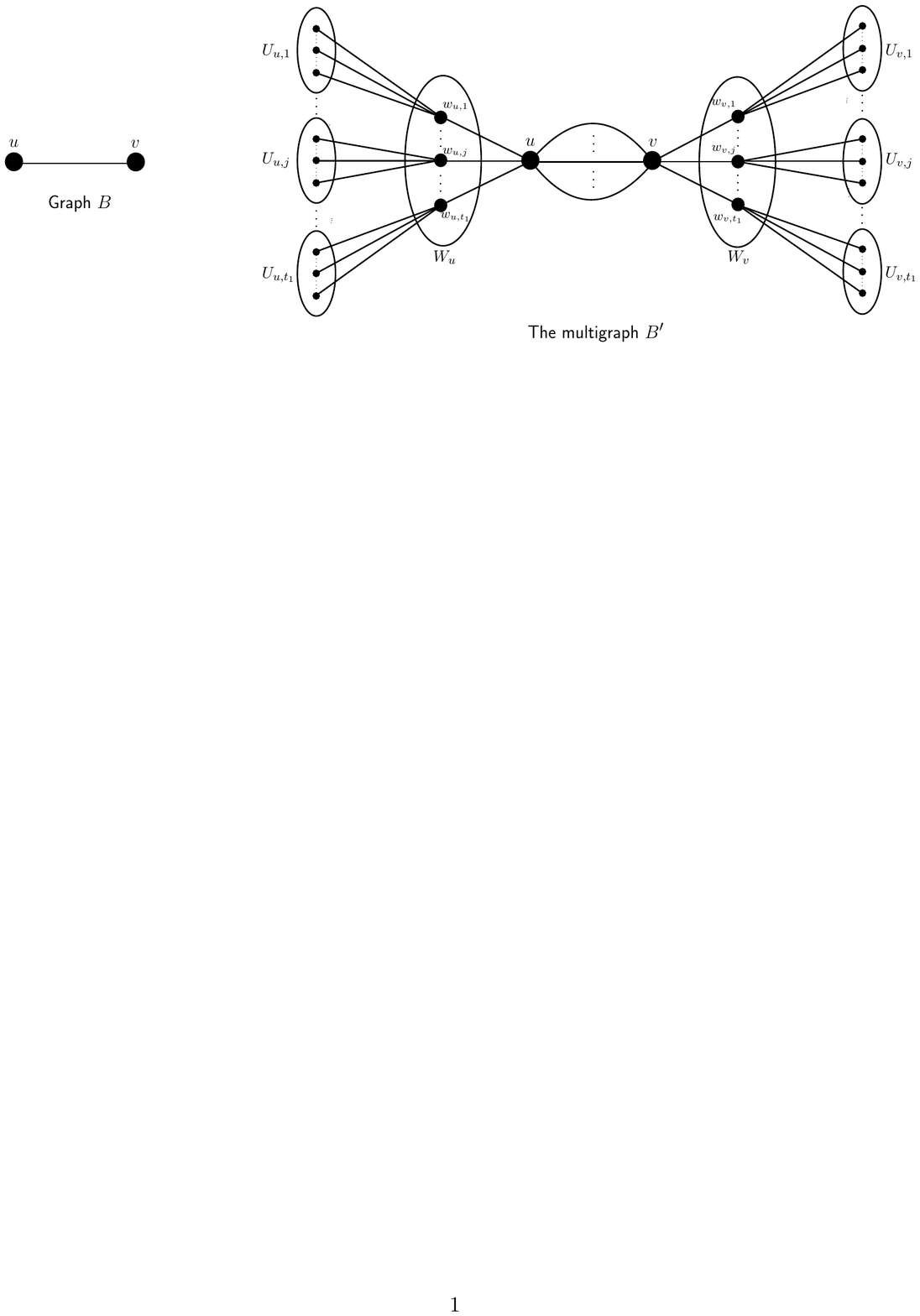}
\caption{An illustrative diagram of the reduction in the proof of Lemma~\ref{lem:BIS:Ising}. As an example, we consider the (bipartite) graph $B$ in the left part of the picture to be the input to $\BIS$. The respective input $B'$ to \BMNSpin{\alpha}{\alpha}{\lambda}  is shown in the right part of the picture. For this graph $B$, where $\mathrm{deg}(u)=\mathrm{deg}(v)=1$,  $t_1$ is the number of parallel edges between $u$ and $v$ in $B'$, as well as the cardinalities of the sets $W_u$ and $W_v$ (where $t_1$ is an appropriate integer, cf. \eqref{eqn:t1}). Also, $t_2$ (cf. \eqref{eqn:t2}) equals the cardinalities of the sets $U_{u,j}$ and $U_{v,j}$ for $j=1,\hdots,t_1$. For a general graph $B$ and a vertex $u$ in $B$, the cardinality of the set $W_u$ depends on the degree of the vertex $u$ and equals $t_1\mathrm{deg}(u)$. Other than that, the construction of the multigraph $B'$ is analogous by iterating the construction in the figure for each edge in $B$.}\label{fig:toy}
\end{figure}

Let
$C = \rho_1^{2 t_1 t_2 m }  {\alpha}^{t_1 m}$
and $N=\left(\begin{matrix}
 1 &  1\\
 1 &  \alpha^{2 t_1}
\end{matrix}\right).$

For each $\sigma: V\cup W \rightarrow \{0,1\}$,
let $w(\sigma)$ be the contribution to $Z_{B',U}(\alpha,\alpha,\lambda)$ of configurations that are consistent with $\sigma$.
First consider configurations $\sigma$ such that $\sigma(w)=1$ for all $w\in W$.
Denote by $\Sigma$ the set of all such configurations on $V\cup W$.
Then for $\sigma\in\Sigma$, we have
\begin{align}
  w(\sigma) &= \rho_1^{t_2 |W|}
  \prod_{(u,v)\in E} ( M_{1,\sigma(u)} M_{\sigma(u),\sigma(v)} M_{\sigma(v),1})^{t_1} \label{eq:thnnht}\\
  &= C \prod_{(u,v)\in E}
  N_{\sigma(u),\sigma(v)}.\notag
\end{align}
To see \eqref{eq:thnnht}, note that $W$ separates the vertices in $U$ from the rest of the vertices of $B'$. Since $\sigma$ specifies the spins of all vertices in $W$, the weight $w(\sigma)$ is the product of: (i) the  weights induced on edges that do not involve vertices in $U$, and (ii) the weights induced on edges with endpoints in $U$. Item (i) is just $\prod_{(u,v)\in E} ( M_{1,\sigma(u)} M_{\sigma(u),\sigma(v)} M_{\sigma(v),1})^{t_1}$. For Item (ii),  we need to sum over all possible assignments on $U$; the factor $\rho_1^{t_2 |W|}$ gives the value of this sum as can be seen by expanding the expression \[\prod_{(w,u)\in E'\cap (W\times U)}(1+\alpha \lambda)=(1+\alpha \lambda)^{t_2 |W|}=\rho_1^{t_2 |W|}.\] 

Let $\Sigma^{ind}\subset\Sigma$ be the subset of configurations which induce an independent set on the vertices $V$
and $Z^{ind}$ be its contribution to $Z_{B',U}(\alpha,\alpha,\lambda)$.
Let $\Sigma^{bad}=\Sigma\backslash\Sigma^{ind}$ and $Z^{bad}$ be its contribution.
If~$\sigma\in\Sigma^{ind}$ then $w(\sigma) = C$.
Otherwise, $w(\sigma) \leq \alpha^{2 t_1} C$.
It implies
\begin{align}
  Z^{ind}=I_B\cdot C \mbox{\qquad and\qquad}Z^{bad}\leq 2^n\alpha^{2 t_1} C\leq \frac{\epsilon}{6}\cdot C,
  \label{eqn:sigma}
\end{align}
since $t_1$ satisfies Eq.~(\ref{eqn:t1}).

Next consider configurations~$\sigma$ on $V\cup W$
such that $\sigma(w)=0$ for at least one $w\in W$.
Denote this set by $\Sigma'$ and its contribution by $Z^{small}$.
Then for $\sigma\in\Sigma'$,
\[w(\sigma) \leq \left(\rho_0 \rho_1^{|W|-1}\right)^{t_2}
  \leq \left(\frac{\rho_0}{\rho_1}\right)^{t_2} \rho_1^{t_2 |W|}
= {\left(\frac{\rho_0}{\rho_1}\right)}^{t_2} \frac{C} {\alpha^{t_1m}}.\]
It implies
\begin{align}
  Z^{small}\leq 2^{2 t_1 m + n} {\left(\frac{\rho_0}{\rho_1}\right)}^{t_2} \frac{C} { \alpha^{t_1m}}\leq \frac{\epsilon}{6}\cdot C,
  \label{eqn:sigmaprime}
\end{align}
since $|\Sigma'|\leq 2^{2 t_1 m + n}$ and $t_2$ satisfies Eq.~(\ref{eqn:t2}).

By Eq.~(\ref{eqn:sigma}) and Eq.~(\ref{eqn:sigmaprime}) we have
\begin{align*}
  Z_{B',U}(\alpha,\alpha,\lambda)
  & = Z^{ind}+Z^{bad}+Z^{small} \\
  &\leq I_B \cdot C + \frac{\epsilon}{6}\cdot C + \frac{\epsilon}{6}\cdot C  \\
  &\leq \exp\left(\frac{\epsilon}{3}\right)I_B \cdot C,
\end{align*}
and clearly $Z_{B',U}(\alpha,\alpha,\lambda) \geq I_B\cdot C$.
It is also clear that $C$ can be approximated accurately enough given FPRASes for $\lambda$ and $\alpha$.
This finishes our proof.
\end{proof}

\subsection{Simulating the Antiferromagnetic Ising Model}

In this section we prove the following lemma.

\begin{lemma} 
Suppose $\beta$, $\gamma$ and $\lambda$ are efficiently
approximable reals satisfying
$\beta,\gamma\geq 0$, $\lambda>0$
and $\beta \gamma \neq 1$.
Suppose that
$\Delta$ is either an integer that is at least~$3$ or
 $\Delta=\infty$ (indicating that we do not have a degree bound).
If $(\beta,\gamma,\lambda,\Delta)$ supports nearly-independent phase-correlated spins and unary symmetry breaking,
  then there exist efficiently approximable $0<\alpha<1$ and $\lambda'>0$
  such that $\lambda'\neq 1$ and
  \center   {\BMNSpin{\alpha}{\alpha}{\lambda'} ~\AP ~\BBSpin{\beta}{\gamma}{\lambda}{\Delta}.}
  \label{lem:main}
\end{lemma}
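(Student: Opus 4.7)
The AP-reduction takes an instance $(B,U)$ of \BMNSpin{\alpha}{\alpha}{\lambda'} and constructs a bipartite multigraph $B'$ of maximum degree $\Delta$ by replacing each vertex of $B$ with a balanced phase-correlated gadget, each edge of $B$ with a pair of terminal-to-terminal edges, and the external field on $U$ by unary symmetry-breaking gadgets attached to spare terminals. Since $\beta\gamma\neq 1$, Lemma~\ref{lem:gettingbal} upgrades the hypothesis of nearly-independent phase-correlated spins to the balanced version; fix the associated values $0<q^-<q^+<1$. The phase of each $G_v$ will stand in for the Ising spin of $v$, the balanced property will make the prior over joint phase assignments (nearly) uniform, and the near-independence of terminals conditional on phase (Definition~\ref{def:sly}) will convert each local weight computation into a product of effective single-edge weights.

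\textbf{Edge and field gadgets.} Let $B = X\sqcup Y$ be the bipartition of $B$. Orient each $G_v$ so that $V^+(G_v)$ lies on the ``left'' of the overall bipartition iff $v\in X$. For each $(u,v)\in E$, add one edge between a positive terminal of $G_u$ and a positive terminal of $G_v$ and one between a negative terminal of each; the orientation choice forces both new edges to cross the bipartition, so $B'$ is bipartite. Conditional on phases $(\pi_u,\pi_v)$, the near-independence clause~\eqref{eqn:near:ind} lets us factor the weight of these two edges, up to a $(1\pm\epsilon)$ factor, as a product of the bilinear form $W(p_1,p_2):=\beta(1-p_1)(1-p_2)+(1-p_1)p_2+p_1(1-p_2)+\gamma p_1p_2$ evaluated at the phase-dependent marginal pairs. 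Same-phase pairs yield $W(q^+,q^+)W(q^-,q^-)$ and opposite-phase pairs yield $W(q^+,q^-)^2$, giving an Ising-like effective edge matrix on phases. Writing $W$ via the matrix $M=\bigl(\begin{smallmatrix}\beta&1\\1&\gamma\end{smallmatrix}\bigr)$, the elementary bilinear identity $(v^TMv)(w^TMw)-(v^TMw)^2=\det(M)(v\times w)^2=(\beta\gamma-1)(q^+-q^-)^2$ applied to $v=(1-q^+,q^+),w=(1-q^-,q^-)$ shows that the ratio of same-phase to opposite-phase weights is strictly different from $1$; choosing the orientation (or its swap, which uses positive-to-negative rather than positive-to-positive connections) that puts the smaller quantity on the diagonal produces an effective $\alpha\in(0,1)$. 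For each $v\in U$, attach one or more copies of the unary symmetry-breaking gadget $H$ of Definition~\ref{def:symmetry} by identifying $v_H$ with a spare terminal of $G_v$; near-independence shows that the aggregate effect multiplies the prior weight of phase $+$ by $F_+$ and of phase $-$ by $F_-$, and the hypothesis $\Pr_H(\sigma_{v_H}=1)\notin\{0,\lambda/(1+\lambda),1\}$ makes $\lambda':=F_+/F_-$ strictly different from $1$, positive, and efficiently approximable.

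\textbf{Reduction and main obstacle.} Pick the gadget size $t$ and per-gadget accuracy $\epsilon_0$ polynomially in $|V|,|E|,\epsilon^{-1}$ so that (i) each $G_v$ has at least $\deg_B(v)+O(1)$ positive and negative terminals to accommodate all incident edges and unary attachments, (ii) the attachments can be spread across distinct terminals so every terminal receives at most one extra edge, whence the maximum degree stays $\le\Delta$ thanks to the $\le\Delta-1$ slack at terminals built into Definition~\ref{def:sly}, and (iii) the compounded multiplicative error from the $O(|V|+|E|)$ gadget-level approximations is kept within $e^{\pm\epsilon/4}$. Marginalising over terminal spins via the near-independence product and summing over joint phase assignments $\pi\colon V\to\{+,-\}$ then yields
\[
Z_{B'}(\beta,\gamma,\lambda)\;=\;C\cdot Z_{B,U}(\alpha,\alpha,\lambda')\cdot\bigl(1\pm O(\epsilon)\bigr),
\]
for an efficiently approximable scalar $C$ that depends only on the gadgets. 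Calling the \BBSpin{\beta}{\gamma}{\lambda}{\Delta} oracle on $B'$ at accuracy $\epsilon/\mathrm{poly}(|B|)$ and dividing by an approximation to $C$ completes the AP-reduction. The main obstacle is the error analysis: each of $O(|V|+|E|)$ gadgets only approximates the desired product measure on its terminals to within $(1\pm\epsilon_0)$, so verifying that an inverse-polynomially small $\epsilon_0$ tames the compounded error while keeping $C$ and the whole construction polynomial-time approximable requires a careful bookkeeping argument. The secondary subtlety is nondegeneracy of $\alpha$, which the bilinear identity above handles directly given $\beta\gamma\neq 1$ and $q^-<q^+$.
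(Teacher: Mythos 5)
Your construction is essentially the paper's: vertex gadgets carrying a phase (balanced via Lemma~\ref{lem:gettingbal}), one positive--positive and one negative--negative terminal edge per edge of $B$ (twisted in the ferromagnetic case so the smaller effective weight sits on the diagonal), unary gadgets $H$ identified with spare positive terminals for the vertices of $U$, effective edge weights $N_{++}N_{--}$ versus $N_{+-}N_{-+}$ with $N=M^+M(M^+)^{\texttt T}$, so that $\alpha\in(0,1)$ follows from $\det N=(\beta\gamma-1)(q^+-q^-)^2\neq 0$ and $\lambda'\neq 1$ follows from $\rho_1\neq\lambda/(1+\lambda)$ once the doubly counted vertex weight at the identified terminal is divided out; your per-gadget accuracy $\epsilon_0=\Theta(\epsilon/n)$ bookkeeping is also exactly the paper's ($\epsilon'=\epsilon/(8n)$, errors compounding as $(1\pm\epsilon')^{2n}$), and your explicit orientation argument for bipartiteness is a point the paper leaves implicit.

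The one genuine flaw is the claim that the normalizing scalar $C$ is ``efficiently approximable.'' Whatever else it contains, $C$ necessarily carries the factor $(Z_G)^{n}$, the partition function of the sampled gadget once per vertex of $B$: after marginalising the terminals one gets $Z_{B'}\approx \mu_2^{m}\,(\rho_0'Z_H)^{n'}\,(Z_G/2)^{n}\,Z_{B,U}(\alpha,\alpha,\lambda')$, and there is no way to remove $Z_G$ from the normalization. The gadget $G$ is a polynomially large bipartite graph of maximum degree $\Delta$, and no FPRAS for its partition function is available --- that is precisely the kind of quantity whose hardness is being established --- so ``dividing by an approximation to $C$'' cannot be carried out by efficient approximation alone, and your reduction as written stalls at this step. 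The repair is immediate and is what the paper does: since $G$ is itself a legitimate instance of \BBSpin{\beta}{\gamma}{\lambda}{\Delta}, make one additional oracle call on $G$, at accuracy $O(\epsilon/n)$ because $Z_G$ enters raised to the $n$-th power, and approximate the remaining factors (powers of $\mu_2$ and $\rho_0'$, and $Z_H$ for the constant-size graph $H$) using the assumed FPRASes for $q^{\pm},\beta,\gamma,\lambda$. With that amendment your argument coincides with the paper's proof.
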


\begin{proof}
  We prove the antiferromagnetic case first, that is, $\beta\gamma<1$.
  $\alpha$ and $\lambda'$ are chosen as follows.
  Recall that $M=\left(\begin{matrix}
  \beta & 1\\
  1 & \gamma
  \end{matrix}\right)$ and $M^+=\left(\begin{matrix}
  1-q^- & q^-\\
  1-q^+ & q^+
  \end{matrix}\right)$.
  Let $N=M^+ M (M^+)^{\texttt T}=\left(\begin{matrix}
          N_{--} & N_{-+}\\
          N_{+-} & N_{++}
        \end{matrix}\right)$.
  Then $\det(N)=(\beta\gamma-1)(\qp-\qm)^2<0$.
  Therefore $N_{--}N_{++}<N_{-+}N_{+-}$ and let $\alpha=\frac{N_{--}N_{++}}{N_{-+}N_{+-}}<1$.
  Moreover, suppose $H$ is the unary symmetry breaking gadget with distinguished vertex $v_H$.
  Let $\rho=\left(\begin{matrix}
          \rho_0 \\
          \rho_1
        \end{matrix}\right)$
  where $\rho_i$ denote $\Pr_{H;\beta,\gamma,\lambda}(\sigma_{v_H}=i)$ for spin $i\in\{0,1\}$
  and $\rho_0+\rho_1=1$.
  Let $\rho'=\left(\begin{matrix}
          \rho'_0 \\
          \rho'_1
        \end{matrix}\right)
		=M^+\left(\begin{matrix}
          \rho_0 \\
		  \rho_1/\lambda
        \end{matrix}\right)$,
  and $\lambda'=\frac{\rho'_1}{\rho'_0}$.
  It is easy to verify that $\lambda'\neq 1$ as $\rho_1\neq\lambda/(1+\lambda)$ by the unary symmetry breaking assumption.

  Given $0<\epsilon<1$ and a bipartite multigraph $B=(V,E)$ with a subset $U\subseteq V$ where $|V|=n$, $|E|=m$, and $|U|=n'$,
  our reduction first constructs a bipartite graph $B'$ with degree at most $\Delta$.
  The construction of $B'$ involves a gadget $G$.
  Since $(\beta,\gamma,\lambda,\Delta)$ supports nearly-independent phase-correlated spins,
  by Lemma~\ref{lem:gettingbal} $(\beta,\gamma,\lambda,\Delta)$ also supports balanced nearly-independent phase-correlated spins.
  Therefore we draw $G\sim\calG(t,n(t,\epsilon'),\Delta)$ such that Eq.~(\ref{eqn:balance}) and Eq.~(\ref{eqn:near:ind}) hold with probability at least $3/4$,
  where $t=m+1$ and $\epsilon'=\frac{\epsilon}{8n}$.
  Assume $G$ satisfies them and otherwise the reduction fails.
  We will construct $B'$ such that
  \begin{align*}
	\exp\left( -\frac{\epsilon}{2} \right) Z_{B,U}(\alpha,\alpha,\lambda')
	\leq \frac{Z_{B'}}{\left(N_{+-}N_{-+}\right)^{m}
    \left(\rho_0' Z_H\right)^{n'}\left(\frac{Z_G}{2}\right)^n}
	\leq \exp\left( \frac{\epsilon}{2} \right) Z_{B,U}(\alpha,\alpha,\lambda') ,
  \end{align*}
  where we use the abbreviated expressions $Z_{B'}=Z_{B'}(\beta,\gamma,\lambda)$, $Z_H=Z_H(\beta,\gamma,\lambda)$, and $Z_G=Z_G(\beta,\gamma,\lambda)$.
  The lemma follows by one oracle call for $Z_{B'}$ with accuracy $\frac{\epsilon}{6}$,
  one oracle call for $Z_G$ with accuracy $\frac{\epsilon}{6n}$,
  and an approximation of other terms in the denominator with accuracy $\frac{\epsilon}{6}$ using FPRASes for $\qm$, $\qp$, $\beta$,
    $\gamma$ and~$\lambda$.

  The graph $B'$ is constructed as follows.
  For each vertex $v\in V$ we introduce a copy of $G$, denoted by $G_v$
  with vertex set $V(G_v)$.
  Moreover, for each vertex $u\in U$ we introduce a copy of $H$, denoted by $H_u$.
  Whenever a terminal vertex is used in the construction once, we say it is occupied.
  For each $(u,v)\in E$, we connect one currently unoccupied positive (and respectively negative) terminal of $G_u$
  to one currently unoccupied positive (and respectively negative) terminal of $G_v$.
  Denote by $E'$ all these edges between terminals.
  For each $u\in U$, we identify an unoccupied positive terminal of $G_u$ with the distinguished vertex $v_{H_u}$ of $H_u$.
  We denote this terminal by $t_u$.
  The resulting graph is $B'$.
  It is clear that $B'$ is bipartite and has bounded degree $\Delta$.

  Let $ \pconfig\colon V\rightarrow\phases$ be a configuration of the phases of
  the
  $G_v$'s.
    Let $Z_{B'}( \pconfig)$ be the contribution to $Z_{B'}$ from the configurations~$\sigma$
    that are consistent with $\pconfig$ in the sense that, for each $v\in V$,
    $Y(\sigma_{V(G_v)})=\pconfig(v)$.
  Then $Z_{B'}=\sum_{ \pconfig}Z_{B'}( \pconfig)$.
  Let $T$ be the set of all terminals $T=\cup_{v\in V}T(G_v)$
  and $\tau\colon T\rightarrow\{0,1\}$ be a configuration on $T$.
  Let $\tau_{T(G_v)}$ be the configuration $\tau$ restricted to $T(G_v)$.
 Recall that
 for $\pi\in\{-,+\}$,
 $Z_{G_v}^{\pi}(\tau_{T(G_v)})$ is the contribution to $Z_{G_v}$ from
 configurations that  have phase $\pi$ and are consistent with $\tau_{T(G_v)}$. Also,
   \begin{align*}
	\Pr_{G_v;\beta,\gamma,\lambda}(\tau_{T(G_v)}\mid
Y(\sigma_{V(G_v)})=\pi)
	=\frac{Z_{G_v}^\pi(\tau_{T(G_v)})}{Z_{G_v}^\pi}.
  \end{align*}
  Moreover, for each $u\in U$
  and each spin $i\in\{0,1\}$,
  let $Z_{H_u}(i)$ be the contribution to $Z_{H_u}$
  from configurations~$\sigma$
  with $\sigma(t_u)=i$.
Hence,
  \[ \rho_i=\Pr_{H_u;\beta,\gamma,\lambda}( 
  \sigma(t_u)=i)=\frac{Z_{H_u}(i)}{Z_{H_u}}.\]
  We express $Z_{B'}( \pconfig)$ as
  \begin{align*}
	Z_{B'}( \pconfig)=\sum_{\tau\colon T\rightarrow\{0,1\}}w_{E'}(\tau)
	\prod_{v\in V}Z_{G_v}^{\pconfig(v)}(\tau_{T(G_v)})
	\prod_{u\in U}\frac{Z_{H_u}(\tau(t_u))}{\lambda^{\tau(t_u)}},
  \end{align*}
  where $w_{E'}(\tau)$ is the contribution of edges in $E'$ given configuration $\tau$.
  Notice that we divide the last factor by $\lambda$ when $\tau(t_u)=1$ because we counted the vertex weight twice in that case.
  Define $\widetilde{Z_{B'}}(\pconfig)$ to be an approximation version of the partition function
  where on each $T(G_v)$ the
  spins
  are chosen exactly according to
  $Q^{\pconfig(v)}$.
  That is,
  \begin{align}
	\widetilde{Z_{B'}}(\pconfig)&=\sum_{\tau\colon T\rightarrow\{0,1\}}w_{E'}(\tau)
	\prod_{v\in V}Z_{G_v}^{\pconfig(v)}Q^{\pconfig(v)}(\tau_{T(G_v)})
	\prod_{u\in U}\frac{Z_{H_u}(\tau(t_u))}{\lambda^{\tau(t_u)}} \notag \\
	&= \left(\prod_{v\in V}Z_{G_v}^{\pconfig(v)}\right)\cdot
	\left( \sum_{\tau\colon T\rightarrow\{0,1\}}w_{E'}(\tau)
	\prod_{v\in V}Q^{\pconfig(v)}(\tau_{T(G_v)})
	\prod_{u\in U}\frac{Z_{H_u}(\tau(t_u))}{\lambda^{\tau(t_u)}} \right). \label{eqn:tildeZ}
  \end{align}
  Let $\widetilde{Z_{B'}}=\sum_{\pconfig}\widetilde{Z_{B'}}(\pconfig)$.
  Eq.~(\ref{eqn:near:ind}) implies that $Z_{B'}(\pconfig)$ and $\widetilde{Z_{B'}}(\pconfig)$ are close, that is,
  \begin{align}
	(1-\epsilon')^n
	\leq\frac{Z_{B'}(\pconfig)}{\widetilde{Z_{B'}}(\pconfig)}
	\leq(1+\epsilon')^n.
	\label{eqn:ratio:Z:tildeZ}
  \end{align}
  Moreover Eq.~(\ref{eqn:balance}) implies that
  \begin{align}
	\left(\frac{1-\epsilon'}{2}\right)^n\leq
	\frac{\prod_{v\in V}Z_{G_v}^{\pconfig(v)}}{\left( Z_{G} \right)^n}
	\leq\left(\frac{1+\epsilon'}{2}\right)^n.
	\label{eqn:ratio:balance}
  \end{align}
  Notice that here $Z_{G_v}$ is the same for any $v\in V$ as
  the
  $G_v$'s are identical copies of $G$.

  Then we calculate the following quantity given $\pconfig$
  \[ \sum_{\tau\colon T\rightarrow\{0,1\}}w_{E'}(\tau)
	\prod_{v\in V}Q^{\pconfig(v)}(\tau_{T(G_v)})
  \prod_{u\in U}\frac{Z_{H_u}(\tau(t_u))}{\lambda^{\tau(t_u)}}.\]
  As the measure $Q^{\pconfig(v)}$ is i.i.d., we may count the weight of each edge in $E'$ independently.
  Notice that $N_{\phase_1\phase_2}$ is the edge contribution when one end point is chosen with probability $q^{\phase_1}$ and the other $q^{\phase_2}$.
  For an edge $(u,v)\in V$, if $u$ and $v$ are assigned the same phase $+$,
  then an edge in $E'$ connecting one $+$ terminal of $G_u$ and one $+$ terminal of $G_v$ gives a weight of $N_{++}$
  and an edge connecting two $-$ terminals gives $N_{--}$.
  The total weight is $\mu_1=N_{++}N_{--}$.
  Similarly if $u$ and $v$ are assigned the same phase $-$,
  the total weight is $\mu_1$ as well.
  On the other hand if $u$ and $v$ are assigned distinct phases $+$ and $-$,
  the total weight is $\mu_2=N_{+-}N_{-+}$.
  Recall that $\alpha=\frac{\mu_1}{\mu_2}$.
  Moreover, for each $u\in U$, if $\pconfig(u)=+$, then the contribution of $H_u$ is $\rho_1'Z_{H_u}$ and otherwise $\rho_0'Z_{H_u}$.
  Notice that here $Z_{H_u}$ is the same for any $u\in U$ as
  the
  $H_u$'s are identical copies of $H$.
  Recall that $\lambda'=\frac{\rho_1'}{\rho_0'}$.

  Plug these calculation into Eq.~(\ref{eqn:tildeZ}), we have
  \begin{align}
	\widetilde{Z_{B'}}(\pconfig)
	&= \left(\prod_{v\in V}Z_{G_v}^{\pconfig(v)}\right)\cdot	
	\left(
	\mu_1^{m_+(\pconfig)}
	\mu_2^{m-m_+(\pconfig)}
	\left(\rho_1'Z_{H}\right)^{n_+(\pconfig)}
	\left(\rho_0'Z_{H}\right)^{n'-n_+(\pconfig)}
	\right) \notag\\
	&= \mu_2^{m}
    \left(\rho_0'Z_{H}\right)^{n'}
	\left(\prod_{v\in V}Z_{G_v}^{\pconfig(v)}\right)\cdot
	\left(
	\alpha^{m_+(\pconfig)}
	\left(\lambda'\right)^{n_+(\pconfig)}
	\right),
	\label{eqn:tildeZsig}
  \end{align}
  where $m_+(\pconfig)$ denotes the number of edges of which the two endpoints are of the same
  phase
  given $\pconfig$,
  and $n_+(\pconfig)$ denotes the number of vertices in $U$ that are assigned $+$ given $\pconfig$.
  Apply Eq.(\ref{eqn:ratio:balance}) to Eq.(\ref{eqn:tildeZsig}),
  \begin{align}
    (1-\epsilon')^n\left(
	\alpha^{m_+(\pconfig)}
	\left(\lambda'\right)^{n_+(\pconfig)}
	\right)\leq
	\frac{\widetilde{Z_{B'}}(\pconfig)}{
	\mu_2^{m}
    \left(\rho_0' Z_{H}\right)^{n'}
	\left(\frac{Z_{G}}{2}\right)^n}
    \leq (1+\epsilon')^n\left(
	\alpha^{m_+(\pconfig)}
	\left(\lambda'\right)^{n_+(\pconfig)}
	\right).
	\label{eqn:tildeZ:sig}
  \end{align}
  Then we sum over $\pconfig$ in Eq.~(\ref{eqn:tildeZ:sig}),
  \begin{align}
    (1-\epsilon')^n\left(\sum_{\pconfig}
	\alpha^{m_+(\pconfig)}
	\left(\lambda'\right)^{n_+(\pconfig)}\right)
	\leq
	\frac{\widetilde{Z_{B'}}}
	{\mu_2^{m}
    \left(\rho_0' Z_{H}\right)^{n'}
	\left(\frac{Z_{G}}{2}\right)^n}
	&\leq(1+\epsilon')^n\left(\sum_{\pconfig}
	\alpha^{m_+(\pconfig)}
	\left(\lambda'\right)^{n_+(\pconfig)}\right)
    \label{eqn:tildeZB}
  \end{align}
  However notice that $Z_{B,U}(\alpha,\alpha,\lambda')=\sum_{\pconfig}
	\alpha^{m_+(\pconfig)}
	\left(\lambda'\right)^{n_+(\pconfig)}$ by just mapping $-$ to $0$ and $+$ to $1$ in each configuration $\pconfig$.
  Combine Eq.(\ref{eqn:ratio:Z:tildeZ}), and Eq.(\ref{eqn:tildeZB}),
  \begin{align*}
	(1-\epsilon')^{2n} Z_{B,U}(\alpha,\alpha,\lambda')
	\leq \frac{Z_{B'}}{\mu_2^{m}
    \left(\rho_0' Z_{H}\right)^{n'}
	\left(\frac{Z_{G}}{2}\right)^n}
	\leq (1+\epsilon')^{2n}
	Z_{B,U}(\alpha,\alpha,\lambda').	
  \end{align*}
  Recall that $\epsilon'=\frac{\epsilon}{8n}$ and we get the desired bounds.

  The other case is ferromagnetic, that is, $\beta\gamma>1$.
  Notice that in this case $\det(N)=(\beta\gamma-1)(\qp-\qm)^2>0$,
  So we choose $\alpha=\frac{N_{+-}N_{-+}}{N_{++}N_{--}}<1$ and $\lambda'$ to be the same as the antiferromagnetic case.
  The construction of $B'$ is similar to the previous case, with the following change.
  For each $(u,v)\in E$, we connect one unoccupied positive terminal of $G_u$ to one unoccupied negative terminal of $G_v$,
  and vice versa.
  The rest of the construction is the same.
  With this change, given a configuration $\pconfig\colon V\rightarrow\phases$,
  if two endpoints are assigned the same spin, the contribution is $N_{+-}N_{-+}$
  and otherwise $N_{++}N_{--}$.
  Therefore the effective edge weight is $\alpha<1$ when the spins are the same, after normalizing the weight to $1$ when the spins are distinct.
  The rest of the proof is the same.
\end{proof}

\subsection{Completing the Proof of Theorem \ref{thm:main}}
\label{sec:BIS-easiness}

We can now conclude the proof of Theorem \ref{thm:main}.

\begin{proof}[Proof of Theorem \ref{thm:main}]
\BIS-hardness in Theorem \ref{thm:main} follows directly from
Lemma~\ref{lem:BIS:Ising} and Lemma~\ref{lem:main}.
The other direction, \BIS-easiness, follows fairly directly from Theorem 47 of
\cite{CDGJLMR12} (the full version of \cite{CDGJLMR13}).  
An edge in the instance graph can be viewed as a constraint of arity~2.
If $\beta\gamma> 1$, then the constraint on the edge is ``weakly log-supermodular''
and the vertex weight can be viewed as a unary constraint, 
which is taken as given in a ``conservative'' CSP\null.
If $\beta\gamma\leq 1$, then reverse the interpretation of $0$ and $1$
on one side of the bipartition of the instance graph, so that the effective
interaction along an edge is given by the matrix $\big(\begin{smallmatrix}1&\beta\\\gamma&1\end{smallmatrix}\big)$. 
This constraint is also ``weakly log-supermodular'' since $1\cdot1\geq \beta\gamma$.
After the reversing there are two vertex weights $\lambda$ and $\lambda^{-1}$,
which are also allowed for ``conservative'' CSP instances.
\end{proof}

\section{Balanced Nearly-Independent Phase-Correlated Spins}
\label{sec:proof-balanced-gadget}

In this section we prove Lemma \ref{lem:gettingbal} that a gadget with
nearly-independent phases can be used to construct a gadget with balanced phases.

Before proving the lemma, we introduce some
notation.
Let \[M = \left(\begin{matrix}
\beta & 1\\
1 & \gamma
\end{matrix}\right).\]
Let $q^+$ and $q^-$ be the quantities from Definition~\ref{def:sly}
and let
\[M^+ = \left(\begin{matrix}
1-q^- & q^-\\
1-q^+ & q^+
\end{matrix}\right)\]
The two columns of~$M^+$ correspond to spin~$0$ and spin~$1$.
The first row corresponds to the distribution
induced on a positive terminal from~$Q_t^-$ and the
second to the distribution induced from~$Q_t^+$.
Similarly the first row also corresponds to the distribution induced on a negative terminal from $Q_t^+$
and the second to the distribution induced from $Q_t^-$.
Notice that $\det(M^+) = \qp - \qm >0$.

When the parameters~$\beta$, $\gamma$ and $\lambda$ are clear,
we sometimes make the notation more concise, referring
to the partition function as $Z_G$ rather than
as $Z_G(\beta,\gamma,\lambda)$.
Also, given a configuration $\sigma\colon V(G) \to \{0,1\}$
and a subset $S$ of~$V(G)$, we often use the
notation $\sigma_S$ to denote the restriction $\sigma|_S$.
For a gadget
$G$ drawn from
$\calG(t,n(t,\epsilon),\Delta)$,
let $Z_{G}^\phase$ be the contribution of phase $\phase\in\phases$ to the partition function $Z_G$.
Moreover, for a subset
$S\subseteq T(G)$, suppose $\tau_S\colon T(G)\rightarrow\{0,1\}$ is a configuration on terminals in $S$.
Let $Z_G^\phase(\tau_S)$ be the contribution of configurations that are consistent with $\tau_S$ and 
belong
to phase $\phase$,
that is,
\[Z_G^\phase(\tau_S)=\sum_{\substack{\sigma\colon Y(\sigma)=\phase \\
\sigma_{S}=\tau_S}} w(\sigma),\]
where $w(\sigma)$ is the weight of configuration $\sigma$
 defined in (\ref{ourdefw}).
It is easy to
see that for $\phase\in\phases$,
\[\Pr_{G;\beta,\gamma,\lambda}(Y(\sigma)=\phase)=\frac{Z_{G}^\phase}{Z_G},\]
and
\[\Pr_{G;\beta,\gamma,\lambda}(\sigma_{T(G)}=\tau_{T(G)}\mid Y(\sigma)=\phase)=\frac{Z_{G}^\phase(\tau_{T(G)})}{Z_G^\phase}.\]

We are now prepared to prove the lemma which is the focus of this section.

\begin{proof}[Proof of Lemma \ref{lem:gettingbal}]
  Let $\epsilon$ satisfy $0<\epsilon<1$.
  By assumption we may draw a gadget $G$ from $\calG(t+t',n(t+t',\epsilon'),\Delta)$ such that it satisfies Eq.~(\ref{eqn:rough:balance}) and Eq.~(\ref{eqn:near:ind})
  with probability at least 3/4, where $t'$ and $\epsilon'$ will be specified later.
  Assume $G$ does. Otherwise the construction fails.

  We consider first the antiferromagnetic case $\beta\gamma<1$.
  We construct a gadget $K$ such that $K$ satisfies Eq.~(\ref{eqn:balance}) and Eq.~(\ref{eqn:near:ind}).
  We make two copies of $G$, say $G_1$ and $G_2$.
  Let
  the
  terminals of $G_i$ be $T(G_i)=T^+(G_i)\cup T^-(G_i)$ for $i=1,2$.
  For each $\phase\in\phases$, we add a set of edges that form a perfect matching between $t'$ terminals in $T^\phase(G_1)$ and $t'$ terminals in $T^\phase(G_2)$.
  Denote by $P$
  the
  edges of the two perfect matchings.$K$ is the resulting graph.
  Denote by $C_i$ the vertices of $G_i$ that are endpoints of $P$.
  The terminals of $K$ are those $2t$ terminal nodes in $T(G_1)$ that are still unmatched,
  that is $T^\phase(K)=T^\phase(G_1)\backslash C_1$ for $\phase\in\phases$,
  and $T(K)=T^+(K)\cup T^-(K)$.
  Denote by $I$ the terminals of $G_2$ that are unmatched.

  We define the phase of $K$ to be the phase of $G_1$, that is, $K$ is said to have phase $+$ or $-$
  if and only if $G_1$ has the same $+$ or $-$ phase regardless of the phase of $G_2$.
  Let $(\phase_1,\phase_2)$ be a vector denoting the phases of $G_1$ and $G_2$ where $\phase_1,\phase_2\in\phases$.
  Then the $+$ phase of $K$
  corresponds to the vector
  $\{(+,+),(+,-)\}$ and
  the $-$ phase
  corresponds to
  $\{(-,+),(-,-)\}$.
  For two configurations $\tau_V$ and $\tau_U$ with $V\cap U=\emptyset$,
  let $(\tau_V,\tau_U)$ be the joint configuration on $V\cup U$.
  Then we have the following:
  \begin{align}
	Z_K^\phase(\tau_{T(K)})
	& = \sum_{\tau_{C_1},\tau_{C_2},\tau_I}Z_{G_1}^\phase(\tau_{T(K)},\tau_{C_1})\left(Z_{G_2}^+(\tau_{C_2},\tau_{I})w(\tau_{C_1},\tau_{C_2})
	+ Z_{G_2}^-(\tau_{C_2},\tau_{I})w(\tau_{C_1},\tau_{C_2})\right)\notag \\
	& = \sum_{\tau_{C_1},\tau_{C_2}}Z_{G_1}^\phase(\tau_{T(K)},\tau_{C_1})w(\tau_{C_1},\tau_{C_2})
	\left(Z_{G_2}^+(\tau_{C_2})
	+Z_{G_2}^-(\tau_{C_2})\right),\label{eqn:ZK:tau}
  \end{align}
  where $w(\tau_{C_1},\tau_{C_2})$ denote the contribution from edges of $P$ given configurations $\tau_{C_1}$ and $\tau_{C_2}$,
  and $Z_K^\phase = \sum_{\tau_{T(K)}} Z_K^\phase(\tau_{T(K)})$.
  Moreover by Eq.~(\ref{eqn:near:ind}), for $i=1,2$ and any
  subset $S\subseteq T(G_i)$, we have
  \begin{align*}
	(1-\epsilon')Q^\phase(\tau_S)Z_{G_i}^\phase \leq Z_{G_i}^\phase(\tau_S) \leq (1+\epsilon')Q^\phase(\tau_S)Z_{G_i}^\phase,
  \end{align*}
  where we have used $Q^\phase(\tau_S)$ to denote the
  probability that the configuration on terminals in~$S$ is~$\tau_S$ in the distribution~$Q^\phase$.
  Therefore by Eq.~(\ref{eqn:ZK:tau}),
  \begin{align}
	Z_K^+ (\tau_{T(K)}) & \leq (1+\epsilon')^2 Q^+(\tau_{T(K)})
	Z_{G_1}^+Z_{G_2}^-
	\sum_{\tau_{C_1},\tau_{C_2}}Q^+(\tau_{C_1})w(\tau_{C_1},\tau_{C_2})
	\left(\frac{Z_{G_2}^+}{Z_{G_2}^-}Q^+(\tau_{C_2})
	+Q^-(\tau_{C_2})\right).\label{eqn:ZK+}
  \end{align}
  We need to calculate the quantity
  \begin{align}
	\label{eqn:mudef}
     \mu(\phase_1,\phase_2):=\sum_{\tau_{C_1},\tau_{C_2}}Q^{\phase_1}(\tau_{C_1})Q^{\phase_2}(\tau_{C_2})w(\tau_{C_1},\tau_{C_2}).
  \end{align}
  Recall our
  definitions
  of $M$ and $M^+$.
  Let $N=M^+ M (M^+)^{\texttt T}$ where ${\texttt T}$ means transposition.
  Then $\det(N)=(q^+-q^-)^2(\beta\gamma-1)<0$.
  We write
  \[N = \left(\begin{matrix}
          N_{--} & N_{-+}\\
          N_{+-} & N_{++}
        \end{matrix}\right)\]
  and let $c=\frac{N_{++}N_{--}}{N_{+-}N_{-+}}<1$.
  Here $N_{\phase_1\phase_2}$ is the edge contribution when one end point is chosen with probability $q^{\phase_1}$ and the other $q^{\phase_2}$.
  Also notice that each edge is independent under $Q^\pm$ so we can count them separately.
  Then the quantity in Eq.(\ref{eqn:mudef}) is
  \begin{align}
	\mu(+,+)=\mu(-,-)&=\left(N_{++}N_{--}\right)^{t'};\notag\\
	\mu(+,-)=\mu(-,+)&=\left(N_{+-}N_{-+}\right)^{t'}.
	\label{eqn:mu}
  \end{align}
  Plug Eq.~(\ref{eqn:mu}) in Eq.~(\ref{eqn:ZK+}),
  \begin{align}
	Z_K^+(\tau_{T(K)}) & \leq (1+\epsilon')^2 Z_{G_1}^+Z_{G_2}^-
	\left(\frac{Z_{G_2}^+}{Z_{G_2}^-}\left(N_{++}N_{--}\right)^{t'}
	+\left(N_{+-}N_{-+}\right)^{t'}\right)\cdot Q^+(\tau_{T(K)})\notag\\
	& = (1+\epsilon')^2 Z_{G_1}^+Z_{G_2}^- \left(N_{+-}N_{-+}\right)^{t'}
	\left(\frac{Z_{G_2}^+}{Z_{G_2}^-}c^{t'}+1\right)\cdot Q^+(\tau_{T(K)}).
	\label{eqn:ZK+:tau:upper}
  \end{align}
  Summing over $\tau_{T(K)}$ in Eq.~(\ref{eqn:ZK+:tau:upper}) we get
  \begin{align}
	Z_K^+ & \leq (1+\epsilon')^2 Z_{G_1}^+Z_{G_2}^- \left(N_{+-}N_{-+}\right)^{t'}
	\left(\frac{Z_{G_2}^+}{Z_{G_2}^-}c^{t'}+1\right).
	\label{eqn:ZK+:upper}
  \end{align}
  Similarly we get an estimate for $Z_K^-$:
  \begin{align}
	Z_K^- \geq (1-\epsilon')^2 Z_{G_1}^-Z_{G_2}^+ \left(N_{+-}N_{-+}\right)^{t'}
	\left(\frac{Z_{G_2}^-}{Z_{G_2}^+}c^{t'}+1\right).
	\label{eqn:ZK-:lower}
  \end{align}
  Let $r=\frac{Z_{G_2}^-}{Z_{G_2}^+}$.
  Notice that $Z_{G_1}^\phase = Z_{G_2}^\phase$ as $G_1$ and $G_2$ are identical copies.
  Combine Eq.~(\ref{eqn:ZK+:upper}) and Eq.~(\ref{eqn:ZK-:lower}),
  \begin{align}
	\frac{Z_K^-}{Z_K^+}\geq\left( \frac{1-\epsilon'}{1+\epsilon'} \right)^2\cdot\frac{1+c^{t'}r}{1+c^{t'}/r}.
	\label{eqn:ZK:ratio}
  \end{align}
  By Eq.~(\ref{eqn:rough:balance}) there is $f(t+t',\epsilon')$ such that
  \[\frac{1}{f(t+t',\epsilon')}\leq r\leq f(t+t',\epsilon'),\]
  and $f(t+t',\epsilon')$ is bounded above by a polynomial in $t+t'$ and $1/\epsilon'$.
  To show Eq.~(\ref{eqn:balance}) it suffices to show $\frac{Z_K^-}{Z_K^+}\geq\frac{1-\epsilon}{1+\epsilon}$ and $\frac{Z_K^+}{Z_K^-}\geq\frac{1-\epsilon}{1+\epsilon}$.
  Clearly $\frac{Z_K^-}{Z_K^+}\geq\frac{1-\epsilon}{1+\epsilon}$
  can be achieved by picking $\epsilon'=\frac{\epsilon}{3}$ and
  $t'=O(\log(t+\epsilon^{-1}))$
   in Eq.~(\ref{eqn:ZK:ratio}).
  To show $\frac{Z_K^+}{Z_K^-}\geq\frac{1-\epsilon}{1+\epsilon}$ is similar and therefore omitted.

Establishing that  Eq.~(\ref{eqn:near:ind}) still holds is easy to show.
  We will show
  \[\frac{Z_K^+(\tau_{T(K)})}{Z_K^+}\geq (1-\epsilon)Q^+(\tau_{T(K)}),\]
  and the other bounds are similar.
  By an argument similar to Eq.~(\ref{eqn:ZK+}), we have
  \[Z_K^+(\tau_{T(K)}) \geq (1-\epsilon')^2 Q^+(\tau_{T(K)}) Z_{G_1}^+Z_{G_2}^-
	\sum_{\tau_{C_1},\tau_{C_2}}Q^+(\tau_{C_1})w(\tau_{C_1},\tau_{C_2})\left(\frac{Z_{G_2}^+}{Z_{G_2}^-}Q^+(\tau_{C_2})
	+Q^-(\tau_{C_2})\right).\]
  Moreover, summing over $\tau_{T(K)}$ in Eq.~(\ref{eqn:ZK+}) we get:
  \[Z_K^+\leq (1+\epsilon')^2 Z_{G_1}^+Z_{G_2}^-
	\sum_{\tau_{C_1},\tau_{C_2}}Q^+(\tau_{C_1})w(\tau_{C_1},\tau_{C_2})\left(\frac{Z_{G_2}^+}{Z_{G_2}^-}Q^+(\tau_{C_2})
	+Q^-(\tau_{C_2})\right).\]
  The desired bound follows as $\epsilon'=\frac{\epsilon}{3}$.

  The other case is ferromagnetic, that is, $\beta\gamma>1$.
  We construct $K$ in the same way as in the previous case, but with the following change.
  To form the perfect matching $P$, we match $+/-$ terminals of $G_1$ to $-/+$ terminals of $G_2$.
  The proof goes similarly but $\det(N)=(q^+-q^-)^2(\beta\gamma-1)>0$.
  However since we made a twist in connecting $G_1$ and $G_2$,
  it follows that
  \begin{align*}
	\mu(+,+)=\mu(-,-)&=\left(N_{+-}N_{-+}\right)^{t'};\notag\\
	\mu(+,-)=\mu(-,+)&=\left(N_{--}N_{++}\right)^{t'}.
  \end{align*}
  Therefore we continue with the new constant $c'=\frac{N_{+-}N_{-+}}{N_{++}N_{--}}<1$ and the rest of the proof is the same.
\end{proof}

\section{Unary Symmetry Breaking}
\label{sec:sym}

In this section we prove Lemma \ref{lem:sym:break} that almost all 2-spin models support unary symmetry breaking.

\begin{proof}[Proof of Lemma \ref{lem:sym:break}]
Consider the sequence of gadgets $(H_k:k\in\mathbb{N})$, defined as follows.  The vertex set of
$H_k$ is $V(H_k)=\{u,u',u'',v_1,v_2,\ldots,v_k\}$, and $u$ is considered the attachment vertex.
The edge set of $H_k$ is
\[
E(H_k)=\big\{\{u',v_i\}:1\leq i\leq k\big\}\cup\big\{\{v_i,u''\}:1\leq i\leq k\big\}\cup\big\{\{u'',u\}\big\}.
\]
We shall argue that if the first three graphs in the sequence, namely $H_0$, $H_1$ and $H_2$,
all fail to be symmetry breakers then one of conditions (i) or (ii) holds.  The graph $H_0$ has an
isolated vertex that could clearly be removed;  we leave it in to make the calculations uniform.
Note that the maximum degree of any vertex in these graphs is~$3$.

Let $a=\beta^2+\lambda$, $b=\beta+\lambda\gamma$ and $c=1+\lambda\gamma^2$.  Then the
effective weight of vertex $u$ is given by the column vector,
\[
\rho^k=\begin{pmatrix}1&0\\ 0&\lambda\end{pmatrix}
\begin{pmatrix}\beta&1\\ 1&\gamma\end{pmatrix}
\begin{pmatrix}1&0\\ 0&\lambda\end{pmatrix}
\begin{pmatrix}a^k&b^k\\b^k&c^k\end{pmatrix}
\begin{pmatrix}1\\\lambda\end{pmatrix}
=
T\begin{pmatrix}
  a^k+\lambda b^k \\
  b^k+\lambda c^k
 \end{pmatrix}
\]
where
\[
T=\begin{pmatrix}\beta&\lambda\\ \lambda&\lambda^2\gamma\end{pmatrix}.
\]
For $h_k$ to be a symmetry breaker we require the vector $\rho^k$ not to be a multiple of $(1,\lambda)^{\texttt T}$.

Suppose $\rho^0$, $\rho^1$ and $\rho^2$ all fail to be symmetry breakers.  Then they must
all lie in a one-dimensional subspace of $\mathbb{R}^2$.
One way this can happen is if the matrix $T$ is rank~1,
i.e., if $\beta\gamma=1$.  This is case~(i).
Otherwise, $(1+\lambda,1+\lambda)^{\texttt T}$ and $(a+\lambda b,b+\lambda c)^{\texttt T}$
and $(a^2+\lambda b^2, b^2+\lambda c^2)^{\texttt T}$
lie in a one-dimensional subspace, namely the one generated by $(1,1)^{\texttt T}$.
This implies $a+\lambda b=b+\lambda c$ and $a^2+\lambda b^2=b^2+\lambda c^2$, or recasting,
\begin{align*}
a-b&=\lambda(c-b)\\
(a-b)(a+b)&=\lambda(c-b)(c+b).
\end{align*}
So either $a=b=c$, or (dividing the second equation by the first)
$a=c$ and $\lambda=1$.   In either case, substituting for $a$, $b$ and $c$ in terms
of $\beta$, $\gamma$, $\lambda$, we obtain either $\beta=\gamma=1$ (which belongs to case (i)) or $\beta=\gamma$ and $\lambda=1$ (which is case (ii)).
\end{proof}

There are two exceptional cases.
The first is $\beta\gamma=1$.
It is well-known that in this case the 2-spin system can be decomposed and hence tractable.
The other case of $\beta=\gamma$ and $\lambda=1$ is perfectly symmetric and this symmetry cannot be broken.
This system is the Ising model without external fields.
For this system, the marginal probability of any vertex in any graph is exactly $1/2$.

Regarding the ferromagnetic case, Jerrum and Sinclair \cite{JS93} presented an FPRAS for the ferromagnetic Ising model with consistent external fields.
On the other hand, anti-ferromagnetic Ising models without external field on bipartite graphs are actually
equivalent to ferromagnetic Ising models.
The trick is to flip the assignment on only one
part of the
bipartition, which has been used before by Goldberg and Jerrum \cite{GJ07}.

\begin{lemma}
  For $0<\alpha<1$, \BSpin{\alpha}{\alpha}{1}~$\equiv_{\texttt T}$~\BSpin{1/\alpha}{1/\alpha}{1}.
  \label{lem:Ising:w/o}
\end{lemma}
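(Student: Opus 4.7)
The plan is to exhibit a weight-preserving bijection on spin configurations that turns the antiferromagnetic Ising model on a bipartite graph into a ferromagnetic one, yielding an exact identity between the two partition functions up to a known multiplicative factor.

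Given a bipartite instance $B=(U\cup V,E)$, for each configuration $\sigma\colon U\cup V\to\{0,1\}$ define $\sigma'$ to agree with $\sigma$ on $U$ and flip the spin on $V$: $\sigma'(v)=1-\sigma(v)$ for every $v\in V$. This is a bijection on $\{0,1\}^{U\cup V}$, and because $B$ is bipartite every edge $(u,v)\in E$ has its two endpoints on opposite sides, so $\sigma(u)=\sigma(v)$ iff $\sigma'(u)\neq\sigma'(v)$. Under \BSpin{\alpha}{\alpha}{1} each edge contributes weight $\alpha$ when its endpoints agree and $1$ otherwise, which in terms of $\sigma'$ becomes weight $\alpha$ when the endpoints disagree and $1$ when they agree. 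Multiplying each edge weight by $1/\alpha$ (and pulling out the aggregate factor $\alpha^{|E|}$) converts these to $1/\alpha$ for agreement and $1$ for disagreement, which are exactly the edge weights of the ferromagnetic \BSpin{1/\alpha}{1/\alpha}{1} model. Since $\lambda=1$ in both problems, vertex weights contribute a factor of $1$ under either $\sigma$ or $\sigma'$ and may be ignored. Summing the bijection over all configurations then yields the identity
\begin{equation*}
Z_B(\alpha,\alpha,1)=\alpha^{|E|}\,Z_B(1/\alpha,1/\alpha,1).
\end{equation*}

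The Turing equivalence follows at once: to compute $Z_B(\alpha,\alpha,1)$, query the \BSpin{1/\alpha}{1/\alpha}{1} oracle on the \emph{same} graph $B$ and multiply the answer by $\alpha^{|E|}$; for the reverse direction, divide by $\alpha^{|E|}$. The underlying graph, its bipartition, and its maximum degree are unchanged, and the multiplicative constant $\alpha^{|E|}$ can be produced to arbitrary accuracy from $\alpha$, so the reduction is in fact approximation-preserving in both directions as well.

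There is no real technical obstacle here: this is the classical spin-flip (``gauge'') symmetry for Ising systems on bipartite graphs, used previously by Goldberg and Jerrum~\cite{GJ07}. The only subtlety worth flagging is that the argument relies crucially on the absence of an external field ($\lambda=1$): a nontrivial vertex weight would, after the spin flip on $V$, leave behind an asymmetric per-vertex factor on $V$, and the resulting model would no longer be of the same form. This is exactly why unary symmetry breaking is needed elsewhere in the paper to escape this otherwise tractable case.
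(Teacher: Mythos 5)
Your proof is correct and is essentially the paper's own argument: flip the spins on one side of the bipartition, observe that monochromatic and bichromatic edges swap roles, and extract the factor $\alpha^{|E|}$ to obtain $Z_B(\alpha,\alpha,1)=\alpha^{|E|}Z_B(1/\alpha,1/\alpha,1)$, exactly as in the paper's calculation. Your additional remarks (that the reduction preserves the graph, bipartition and degree bound, is approximation-preserving, and breaks down when $\lambda\neq 1$) are accurate and consistent with the paper's discussion.
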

\begin{proof}
  Let $B=(V_1,V_2,E)$ be a bipartite graph where $V_1$ and $V_2$ are two partitions of vertices.
  Let $|E|=m$.
  Then we have
  \begin{align*}
	Z_{B}\left( \alpha,\alpha,1\right)
	& = \sum_{\sigma_{V_1}:V_1\rightarrow\{0,1\}}\sum_{\sigma_{V_2}:V_2\rightarrow\{0,1\}}
	\prod_{(v,w)\in E}\alpha^{(1-\sigma_{V_1}(v))(1-\sigma_{V_2}(w))}\alpha^{\sigma_{V_1}(v)\sigma_{V_2}(w)}\\
	& = \sum_{\sigma_{V_1}:V_1\rightarrow\{0,1\}}\sum_{\sigma_{V_2}:V_2\rightarrow\{0,1\}}
	\prod_{(v,w)\in E}\alpha^{(1-\sigma_{V_1}(v))\sigma_{V_2}(w)}\alpha^{\sigma_{V_1}(v)(1-\sigma_{V_2}(w))}\\
	& = \sum_{\sigma_{V_1}:V_1\rightarrow\{0,1\}}\sum_{\sigma_{V_2}:V_2\rightarrow\{0,1\}}
	\prod_{(v,w)\in E}\alpha^{\sigma_{V_2}(w)-\sigma_{V_1}(v)\sigma_{V_2}(w)+\sigma_{V_1}(v)-\sigma_{V_1}(v)\sigma_{V_2}(w))}\\
	& = \alpha^m\sum_{\sigma_{V_1}:V_1\rightarrow\{0,1\}}\sum_{\sigma_{V_2}:V_2\rightarrow\{0,1\}}
	\prod_{(v,w)\in E}\alpha^{-(1-\sigma_{V_1}(v))(1-\sigma_{V_2}(w))-\sigma_{V_1}(v)\sigma_{V_2}(w)}\\	
	& = \alpha^m Z_{B}\left( \alpha^{-1},\alpha^{-1},1\right),
  \end{align*}
  where in the second line we flip the assignment of $\sigma_{V_2}$.
\end{proof}

Combining Lemma~\ref{lem:Ising:w/o} with the FPRAS by Jerrum and Sinclair \cite{JS93} for the ferromagnetic Ising model
yields the following corollary.

\begin{corollary}
  For any $\alpha>0$, \BSpin{\alpha}{\alpha}{1}  has an FPRAS.
  \label{cor:Ising:w/o}
\end{corollary}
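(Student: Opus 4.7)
The plan is to split on the sign of $\alpha-1$ and combine Lemma~\ref{lem:Ising:w/o} with the Jerrum--Sinclair FPRAS for the ferromagnetic Ising model \cite{JS93}. Recall that the ferromagnetic regime for a two-state spin system with $\beta=\gamma$ corresponds to $\alpha>1$, since then $\beta\gamma=\alpha^2>1$. The JS93 algorithm handles precisely this regime (with unit external field, which is a trivial special case of their general result).

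First I would dispense with the trivial case $\alpha=1$: every configuration has weight $1$, so $Z_B(1,1,1)=2^{|V|}$ is computable exactly. For $\alpha>1$, the system $(\alpha,\alpha,1)$ is ferromagnetic, and an FPRAS is given directly by \cite{JS93}; note that one does not even need bipartiteness here, so the restriction to bipartite instances imposed by \BSpin{\alpha}{\alpha}{1} is irrelevant.

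The remaining case is $0<\alpha<1$. Here the system is antiferromagnetic, so JS93 does not apply directly. Instead, invoke Lemma~\ref{lem:Ising:w/o}, which establishes that \BSpin{\alpha}{\alpha}{1} and \BSpin{1/\alpha}{1/\alpha}{1} are interreducible (indeed, the two partition functions differ only by the multiplicative factor $\alpha^m$, where $m$ is the number of edges of the input graph, so an approximation of one immediately yields an approximation of the other with no loss in accuracy). Since $1/\alpha>1$, the target system \BSpin{1/\alpha}{1/\alpha}{1} is ferromagnetic, so the JS93 FPRAS applies and yields the desired FPRAS for the original antiferromagnetic instance.

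There is no real obstacle: the whole statement is a two-line combination of a cited external result with the spin-flipping identity already proved in Lemma~\ref{lem:Ising:w/o}. The only minor care needed is to verify that the multiplicative factor $\alpha^m$ relating the two partition functions can itself be computed (or approximated) to within the required tolerance, which is immediate since $\alpha$ is efficiently approximable and $m$ is part of the input.
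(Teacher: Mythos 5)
Your proposal is correct and follows essentially the same route as the paper: the corollary is obtained by combining Lemma~\ref{lem:Ising:w/o} (the spin-flip identity $Z_B(\alpha,\alpha,1)=\alpha^m Z_B(\alpha^{-1},\alpha^{-1},1)$ on bipartite graphs) with the Jerrum--Sinclair FPRAS for the ferromagnetic Ising model. Your extra remarks on the cases $\alpha\geq 1$ and on approximating the factor $\alpha^m$ are fine but add nothing beyond the paper's argument.
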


As explained in the introduction, this
corollary helps explain why the notion of unary symmetry breaking is necessary to achieve \BIS-hardness.

\section{Nearly-independent Phases for Antiferromagnetic Systems}
\label{sec:indpt-phases}

In this section we prove Lemma \ref{lem:indpt-phases} that for 2-spin
antiferromagnetic systems in the tree non-uniqueness region there is
a gadget with nearly-independent phase-correlated spins.

We first give some necessary background which details how the values $q^-$ and $q^+$ are derived.

\subsection{Gibbs Measures on Infinite Trees}

Recall that the Gibbs distribution is
the distribution in which a configuration~$\sigma$ is drawn
with probability
\begin{align}
  \Pr_{G;\beta,\gamma,\lambda}(\sigma)=\frac{w(\sigma)}{Z_G(\beta,\gamma,\lambda)}.
  \label{eqn:Gibbs}
\end{align}
Let $\inftree{\Delta}$ be the infinite $(\Delta-1)$-ary tree.
A \emph{Gibbs measure} on $\inftree{\Delta}$ is a measure
such that for any finite subtree $T\subset\inftree{\Delta}$,
the induced distribution on $T$ conditioned on the outer boundary is the same as that given by (\ref{eqn:Gibbs}).
There may be one or more Gibbs measures (see, e.g.,~\cite{Geo11} for more details).
A Gibbs measure is called \emph{translation-invariant} if it is invariant under all automorphisms of $\inftree{\Delta}$,
and is \emph{semi-translation-invariant} if it is invariant under all parity-preserving automorphisms of $\inftree{\Delta}$.
In our context, the Gibbs measures that will be of interest  are the two extremal semi-translation-invariant Gibbs measures corresponding to the all $1$'s and all $0$'s boundary conditions. These two measures are different in the non-uniqueness region of $\inftree{\Delta}$. 
Let $0<\qm<\qp<1$ be the two marginal probabilities of the root
having spin~$1$ in the two extremal semi-translation-invariant Gibbs measures. More precisely, one can define $q^+,q^-$ as follows.
For $s\in\{0,1\}$, let $q_{\ell,s}$ denote the probability that the root is assigned spin~$1$ in the $(\Delta-1)$-ary tree of depth $\ell$
in the Gibbs distribution where the leaves are fixed to spin $s$. In standard terminology, fixing the configuration on the leaves to all 1's or all 0's is most commonly referred to as the $+,-$ boundary conditions, respectively (and hence the indices $+,-$ in our notation of $q^+,q^-$). It is not hard to show that $q_{2\ell,1}$ is decreasing in $\ell$, while $q_{2\ell,0}$ is increasing. Let $q^+ = \lim_{\ell\rightarrow\infty} q_{2\ell,1}$ and let
$q^- = \lim_{\ell\rightarrow\infty} q_{2\ell,0}$. 
The two quantities $\qp$ and $\qm$ satisfy the standard tree recursion in the following sense.
Let $r^\phase=\frac{q^\phase}{1-q^\phase}$ for $\phase\in\phases$, and $f(x)=\lambda\left( \frac{1+\beta x}{\gamma+x} \right)^{\Delta-1}$.
Then
\begin{align*}
  r^+=f(r^-),~{\rm and}~r^-=f(r^+).
\end{align*}

\subsection{Sly's Gadget}\label{sec:slygadget}

Sly (\cite{Sly10}, Theorem 2.1)~showed that for every $\Delta\geq 3$, there exists $\epsilon_{\Delta}$
such that the hard-core model $(1,0,\lambda,\Delta)$ supports nearly-independent phase-correlated spins
for any $\lambda$ satisfying $\lambda_c(\inftree{\Delta})<\lambda<\lambda_c(\inftree{\Delta})+\epsilon_{\Delta}$.
This region is a small interval just above the uniqueness threshold.
In the same paper Sly also showed that $(1,0,1,6)$ supports nearly-independent phase-correlated spins.
The quantities defining measures $Q^\pm$ in Definition~\ref{def:sly} are exactly the marginal probabilities $q^{\pm}$ on $\mathbb{T}_\Delta$.

Galanis et al.~\cite{GGSVY11} extended the applicable region of Sly's gadget for the hard-core model
to all $\lambda>\lambda_c(\mathbb{T}_\Delta)$ for $\Delta=3$ and $\Delta\geq 6$.
The gap of $\Delta=4$ and $\Delta=5$ is later closed \cite{GSV12}.
In the later paper Galanis et al.~\cite{GSV12} also verified Sly's gadget for parameters $(\beta,\gamma,\lambda,\Delta)$ such that $\beta\gamma<1$,
$\sqrt{\beta\gamma}\geq\frac{\sqrt{\Delta-1}-1}{\sqrt{\Delta-1}+1}$,
and $(\beta,\gamma,\lambda)$ is in the non-uniqueness region of infinite tree $\mathbb{T}_\Delta$.
Using \cite{GSV13} we extend the applicable region of Sly's gadget
to the entire non-uniqueness region for all 2-spin antiferromagnetic models.

Sly's construction is based on a technical analysis of the configurations with the largest contribution to the partition function of a random bipartite $\Delta$-regular graph. This required a second moment argument involving a optimization which posed technical difficulties. Following recent developments in \cite{GSV13}, we describe how to obtain the properties of Sly's gadget for all 2-spin antiferromagnetic systems (including the case with an external field) in the non-uniqueness region.   We will give an overview of the second moment approach and how to tackle it using the technique in \cite{GSV13}.  We begin by
presenting the relevant definitions and  describing the construction of the gadget.

For integers $r,n$, let $\Gc^r_{n}$ be the following graph distribution:
\begin{enumerate}
\item $\Gc^r_{n}$ is supported on bipartite graphs. The two parts of the bipartite graph are labelled by $+,-$ and each is partitioned as $V^{\pi}:=U^{\pi}\cup W^{\pi}$ where $|U^{\pi}|=n$, $|W^{\pi}|=r$ for $\pi=\{+,-\}$. $U$ denotes the set $U^{+}\cup U^{-}$ and similarly $W$ denotes the set $W^{+}\cup W^{-}$.
\item To sample $G\sim\Gc^r_{n}$,  sample uniformly and independently $\Delta$ matchings: (i) $(\Delta-1)$ random perfect matchings between $U^+\cup W^+$ and $U^-\cup W^-$,  (ii) a random perfect matching between $U^+$ and $U^-$. The edge set of $G$ is the union of the $\Delta$ matchings. Thus, vertices in $U$ have degree $\Delta$, while vertices in $W$ have degree $\Delta-1$.
\end{enumerate}
The case $r=0$ will also be important, in which case we denote the distribution as $\Gc_n$. Note that $\Gc_n$ is supported on bipartite $\Delta$-regular graphs. Strictly speaking, $\Gc^r_{n}, \Gc_n$ are supported on multi-graphs, but it is well known that every statement that holds asymptotically almost surely on these spaces continues to hold asymptotically almost surely conditioned on the event that the graph is simple.

We are now ready to give the construction of the gadget. For constants $0<\theta,\psi<1/8$, let $m'=(\Delta-1)^{\lfloor \theta \log_{\Delta-1}n\rfloor+2\lfloor \frac{\psi}{2} \log_{\Delta-1}n\rfloor}$.  Note that $m'=o(n^{1/4})$. First, sample $G$ from the distribution $\Gc^{m'}_{n}$ conditioning on $G$ being simple. Next, for $\pi\in\{+,-\}$, attach $t$ disjoint $(\Delta-1)$-ary trees of even depth $\ell$ (with $t=(\Delta-1)^{\lfloor \theta \log_{\Delta-1}n\rfloor}$ and $\ell=2\lfloor \frac{\psi}{2} \log_{\Delta-1}n\rfloor$) to $W^\pi$, so that every vertex in $W^\pi$  is a leaf of exactly one tree (this is possible since $m'=|W|=t (\Delta-1)^\ell$). Denote by $T^\pi$ the roots of the trees, so that $|T^{\pi}|=t$. The trees do not share common vertices with the graph $G$, apart from the vertices in $W$. The final graph $\widetilde{G}$ is the desired gadget, where the terminals $T$ are the roots of the trees. We denote the family of graphs that can be constructed this way by $\widetilde{\calG}(t,n(t),\Delta)$. Note that the size of the construction is $(2+o(1))n$ which is bounded above by a polynomial in $t$ when $\Delta$ is a fixed constant. Moreover, any $\widetilde{G}$ drawn from $\widetilde{\calG}(t,n(t),\Delta)$ is bipartite. The terminals of $\widetilde{G}$ are $T(G)=T^+\cup T^-$, and $T^+$ and $T^-$ are on distinct partitions of the bipartite graph.

Next we show that if $\beta,\gamma,\lambda$ lie in the non-uniqueness regime of $\TreeD$,
then the gadget $\widetilde{G}$ satisfies Definition \ref{def:sly}.
Lemma~19 in \cite{GSV12} shows that this
is true, assuming that a certain technical condition is true \cite[Condition 1]{GSV12}.
We will introduce this condition and then show that it always holds when $\beta,\gamma,\lambda$ lie in the non-uniqueness regime of $\TreeD$.

The technical condition involves asymptotics of the leading terms of the first and second moments.
We will derive expressions for these shortly.
Moreover, the condition is stated for the case when $r=0$, as $r=o(n^{1/4})$ is relatively small and 
\cite{GSV12} shows that it does not affect the leading terms.
Suppose that $G\sim \Gc_n$.
We look at the contribution of configurations which assign a prescribed fraction of vertices in $U^+,U^-$ the spin 1,
with a view to identifying the phases of the gadget. 
Here $V^+=U^+$ and $V^-=U^-$ and $n=|V^+|=|V^-|$. For $1\geq \chi^+,\chi^{-}\geq0$, let
\[\Sigma^{\chi^+,\chi^-}:=\{\sigma: V^+\cup V^-\rightarrow\{0,1\}\, |\, |\sigma^{-1}(1)\cap U^+|=\ceil{\chi^+ n},\, |\sigma^{-1}(1)\cap U^-|=\ceil{\chi^- n},\}\]
where $\sigma^{-1}(1)$ denotes the set of vertices assigned the spin $1$ in the configuration $\sigma$.  
Denote by $Z_{G}^{\chi^+,\chi^-}$ the total contribution to the partition function of $G$ by the configurations in $\Sigma^{\chi^+,\chi^-}$, 
that is,
\[Z_{G}^{\chi^+,\chi^-}=\sum_{\sigma\in\Sigma_r^{\chi^+,\chi^-}} w_G(\sigma).\]
We will calculate the leading term in the exponent (as a function of $n$) of the first and second moment of $Z_{G}^{\chi^+,\chi^-}$.

By linearity we have $\E_{\Gc_n}\big[Z^{\chi^+,\chi^-}_G\big]=\sum_{\sigma\in \Sigma^{\chi^{+},\chi^{-}}}\E_{\Gc_n}\big[w_G(\sigma)\big]$.
Fix $\sigma\in\Sigma^{\chi^{+},\chi^{-}}$. 
For convenience, set $\chi^+_n=\left\lceil n \chi^+\right\rceil/n, \chi^-_n=\left\lceil n \chi^-\right\rceil/n$. 
Let $nx_n$ be the number of $(1,1)$ edges in a random perfect matching between $V^+,V^-$ under the configuration $\sigma$. 
Note that $\chi_n\leq x_n\leq \chi_n'$, where $\chi_n=\max\{0,\chi^+_n+\chi^-_n-1\}$ and $\chi'_n=\min\{\chi^{+}_n,\chi^{-}_n\}$.  
It is not hard to see that
\begin{equation}\label{one2april}
\E_{\Gc_n}\big[w_G(\sigma)\big]=\lambda^{n(\chi^+_n + \chi^-_n)}
 \left(\sum_{\chi_n\leq x_n\leq\chi_n'}
\kappa^{\chi^{+}_n,\chi^{-}_n}_{x_n} \right)^{\Delta}, \mbox{ where }
\kappa^{\chi^{+}_n,\chi^{-}_n}_{x_n} =  
\frac
{\binom{n\chi^+_n}{nx_n}
\binom{n(1-\chi^+_n)}{n(\chi^-_n-x_n)}}
 {\binom{n}{n\chi^-_n}}
\beta^{ n(1-\chi^+_n-\chi^-_n+x_n)}
 \gamma^{nx_n}.
\end{equation}
By the linearity of expectation and symmetry, we then have
\begin{align}
\E_{\Gc_n}\big[
Z^{\chi^{+},\chi^{-}}_{G}\big] =\big|\Sigma^{\chi^+,\chi^-}\big|\, \E_{\Gc_n}\big[w_G(\sigma)\big], \mbox{ where } \big|\Sigma^{\chi^+,\chi^-}\big|=\binom{n}{n\chi^+_n}\binom{n}{n\chi^-_n}.
\label{two2april}
\end{align} 

Let $\chi:=\max\{0,\chi^++\chi^--1\}$ and $\chi':=\min\{\chi^+,\chi^-\}$. Note that as $n\rightarrow\infty$, $(\chi^+_n,\,\chi^-_n,\,\chi_n,\,\chi_n')\rightarrow (\chi^+,\, \chi^-,\, \chi,\, \chi')$. In Section~\ref{sec:asymptotics}, using Stirling's approximation, it is  shown that
\begin{equation}
\label{ssone}
\forall \chi^+,\chi^-,\quad
  \frac{1}{n}\log\E_{\Gc_n}\big[Z^{\chi^+,\chi^-}_G\big]=\Psi_{1;\beta,\gamma,\lambda}(\chi^+,\chi^-)+o(1),
\end{equation}
where, under the convention $0\log 0\equiv0$,
\begin{equation}\label{eq:firstmoment}
\begin{aligned}
  \Psi_{1;\beta,\gamma,\lambda}(\chi^+,\chi^-)&:=\max_{\chi\leq x\leq \chi'}\Psi'_{1;\beta,\gamma,\lambda}(x,\chi^+,\chi^-),\\
  \Psi'_{1;\beta,\gamma,\lambda}(x,\chi^+,\chi^-)&:=(\chi^++\chi^-)\log \lambda+(\Delta-1)f_1(\chi^+,\chi^-)+\Delta g_{1;\beta,\gamma}(x,\chi^+,\chi^-),\\
  f_1(\chi^+,\chi^-)&:=\chi^+\log \chi^+ +(1-\chi^+)\log(1-\chi^+) +\chi^-\log \chi^-+(1-\chi^-)\log(1-\chi^-),\\
  g_{1;\beta,\gamma}(x,\chi^+,\chi^-)&:=(1-\chi^+-\chi^-+x)\log\beta+x\log\gamma\\
  &-x\log x-(\chi^+-x)\log(\chi^+-x)-(\chi^--x)\log(\chi^--x)\\
  &-(1-\chi^+-\chi^-+x)\log(1-\chi^+-\chi^-+x).
\end{aligned}
\end{equation}
Note that all the quantities in \eqref{eq:firstmoment} are independent of $n$.

To calculate the second moment, the approach is completely analogous though we need to introduce more variables. For $1\geq \upsilon^+, \upsilon^-\geq 0$, let
\begin{equation}\label{eq:sigma2def}
  \Sigma^{\chi^+,\chi^-}_{\upsilon^+,\upsilon^-}
  :=\left\{(\sigma_1,\sigma_2)\mid\sigma_1,\sigma_2\in\Sigma^{\chi^+,\chi^-},
  |\sigma_1^{-1}(1)\cap\sigma_2^{-1}(1)\cap U^+|=\ceil{\upsilon^+ n},
  |\sigma_1^{-1}(1)\cap\sigma_2^{-1}(1)\cap U^-|=\ceil{\upsilon^- n}\right\}.
\end{equation}
Let $Y^{\chi^+,\chi^-}_{\upsilon^+,\upsilon^-}=\sum_{(\sigma_1,\sigma_2)\in\Sigma^{\chi^+,\chi^-}_{\upsilon^+,\upsilon^-}}w_G(\sigma_1)w_G(\sigma_2)$.
Notice that $Y^{\chi^+,\chi^-}_{\upsilon^+,\upsilon^-}$ is the contribution to $(Z^{\chi^+,\chi^-}_G)^2$
from pairs of configurations in $\Sigma^{\chi^+,\chi^-}$ with overlap $\ceil{\upsilon^+n}$ and 
$\ceil{\upsilon^-n}$. Then, we have
\begin{equation}\label{eq:sum2nd}
  \E_{\Gc_n}\left[(Z_G^{\chi^+,\chi^-})^2\right]
  =\sum_{\upsilon^+,\upsilon^-}\E_{\Gc_n}\left[Y^{\chi^+,\chi^-}_{\upsilon^+,\upsilon^-}\right],
\end{equation}
where the sum is over integer multiples $\upsilon^+$ and $\upsilon^-$ of~$1/n$. 
Later we will replace this sum with a maximum and consider general real numbers~$\upsilon^+$
and~$\upsilon^-$. To unify the notation in both cases,
set $\upsilon^+_n=\ceil{n \upsilon^+}/n$ and $\upsilon^-_n=\ceil{n \upsilon^-}/n$. 
Note that 
the overlap $\upsilon^+_n$ is at most $\chi^+_n n$
and at least $2 \chi^+_n n-n$ and similarly for the minus side.

To get an expression for $\E_{\Gc_n}\left[Y^{\chi^+,\chi^-}_{\upsilon^+,\upsilon^-}\right]$, fix arbitrary $(\sigma_1,\sigma_2)\in \Sigma^{\chi^+,\chi^-}_{\upsilon^+,\upsilon^-}$. The configurations $\sigma_1$ and $\sigma_2$ divide
each of~$U^{+}$ and~$U^-$ into $4$ subsets.
Let the subsets chosen for spin~$1$ by both~$\sigma_1$ and~$\sigma_2$ be numbered~$1$ on
the~$+$ side and the~$-$ side.
Similarly, let the subsets chosen for spin~$1$ in~$\sigma_1$ and for spin~$0$ in~$\sigma_2$ be numbered~$2$
and the subsets chosen for spin~$0$ in~$\sigma_1$ and spin~$1$ in~$\sigma_2$ be numbered~$3$.
Finally, let the subsets chosen for spin~$0$ in both~$\sigma_1$ and~$\sigma_2$ be numbered~$4$.
 Let $ny_{ijn}$ be the number of  edges between subset $i$ on one side and subset $j$ on the other, and $\mathbf{y}_n=\{y_{ijn}\}$.
Define
\begin{align*}
  L_{1n}=\upsilon^+_n,\quad L_{2n}=L_{3n}=\chi^+_n-\upsilon^+_n,\quad 
  L_{4n}=1-2\chi^+_n+\upsilon^+_n,\\
  R_{1n}=\upsilon^-_n,\quad R_{2n}=R_{3n}= \chi^-_n-\upsilon^-_n,\quad 
  R_{4n}=1-2\chi^-_n+\upsilon^-_n.\\
 \end{align*}
Observe that the $\mathbf{y}_n$ satisfy
\begin{equation}\label{eqn:2nd:constraints}
\begin{gathered}
  \sum_j y_{ijn}=L_{in}~\text{ for }i\in[4],\quad \sum_i y_{ijn}=R_{jn},~\text{ for }j\in[4],\\
  y_{ijn}\geq 0~\text{ and } y_{ijn} \text{ is an integral  multiple of $1/n$}.
\end{gathered}
\end{equation}
Then we have
\begin{equation}
\E_{\Gc_n}[w(\sigma_1)w(\sigma_2)]=\lambda^{2n(\chi^+_n + \chi^-_n)}\bigg( \sum_{\mathbf{y_n}}\xi^{\chi^+_n, \chi^-_n}_{\upsilon^+_n,\upsilon^-_n,\mathbf{y}_n}\bigg)^{\Delta},\label{longeq}
\end{equation}
where 
\begin{equation}\label{eq:defxi}
\begin{aligned}
\xi^{\chi^+_n, \chi^-_n}_{\upsilon^+_n,\upsilon^-_n,\mathbf{y}_n}&:={n\choose ny_{11n},ny_{12n},\dots,ny_{44n}}^{-1}\prod_{i=1}^4{nL_{in} \choose ny_{i1n},ny_{i2n},ny_{i3n},ny_{i4n}} 
  \prod_{j=1}^4{nR_{jn} \choose ny_{1jn},ny_{2jn},ny_{3jn},ny_{4jn}}\\
&\times \beta^{n(y_{22n}+y_{24n}+y_{33n}+y_{34n}+y_{42n}+y_{43n}+2y_{44n})}\gamma^{(2y_{11n}+y_{12n}+y_{13n}+y_{21n}+y_{31n}+y_{22n}+y_{33n})n}
\end{aligned}
\end{equation}
To see why \eqref{longeq} is true note that a random matching may be obtained by fixing the permutation within each of the
$16$ subsets $y_{ij}$ of the ``$-$'' side and then choosing the $16$ corresponding blocks of the ``$+$'' side giving the
denominator $n!={n\choose ny_{11n},ny_{12n},\dots,ny_{44n}}
\prod_{i,j} (ny_{ijn})!$.
Then the numerator counts all matchings consistent with~$\sigma_1$ and~$\sigma_2$ giving
$\prod_{i=1}^4{nL_{in} \choose ny_{i1n},ny_{i2n},ny_{i3n},ny_{i4n}} 
  \prod_{j=1}^4{nR_{jn} \choose ny_{1jn},ny_{2jn},ny_{3jn},ny_{4jn}}
\prod_{i,j} (ny_{ijn})!$. The factors $\prod_{i,j} (ny_{ijn})!$ cancel, giving the first line in \eqref{eq:defxi}. The second line is just $w(\sigma_1) w(\sigma_2)$.

To account for the cardinality of $\Sigma^{\chi^+,\chi^-}_{\upsilon^+,\upsilon^-}$, by the linearity of expectation and symmetry, we obtain
\begin{equation}\label{longeqb}
  \E_{\Gc_n}\left[ Y^{\chi^+,\chi^-}_{\upsilon^+,\upsilon^-} \right]
  ={n\choose nL_{1n},nL_{2n},nL_{3n},nL_{4n}}{n\choose nR_{1n},nR_{2n},nR_{3n},nR_{4n}}\, \E_{\Gc_n}[w(\sigma_1)w(\sigma_2)]
\end{equation}

Note that as $n\rightarrow \infty$, $L_{in}\rightarrow L_{i},\ R_{jn}\rightarrow R_j\ (\forall i,j\in[4]),$ where
\begin{equation}\label{eq:LRlimits}
\begin{gathered}
  L_1=\upsilon^+,\quad L_2=L_3=\chi^+-\upsilon^+,\quad L_4=1-2\chi^++\upsilon^+,\\
  R_1=\upsilon^-,\quad R_2=R_3=\chi^--\upsilon^-,\quad R_4=1-2\chi^-+\upsilon^-.
\end{gathered}
\end{equation}  
 
Using again Stirling's approximation, in Section~\ref{sec:asymptotics} it is shown that 
\begin{align}
\label{lastone}
  \frac{1}{n}\log\E_{\Gc_n}\left[ Y^{\chi^+,\chi^-}_{\upsilon^+,\upsilon^-} \right] =\Psi_{2;\beta,\gamma,\lambda}'(\chi^+,\chi^-,\upsilon^+,\upsilon^-)+o(1),
\end{align}
where
\begin{align*} \Psi_{2;\beta,\gamma,\lambda}'(\chi^+,\chi^-,\upsilon^+,\upsilon^-)&:=\max_{\mathbf{y}}\Psi_{2;\beta,\gamma,\lambda}''(\chi^+,\chi^-,\upsilon^+,\upsilon^-,\mathbf{y});\\
  \Psi_{2;\beta,\gamma,\lambda}''(\chi^+,\chi^-,\upsilon^+,\upsilon^-,\mathbf{y})&:=2(\chi^+ + \chi^-)\log\lambda+(\Delta-1)f_2(\chi^+,\chi^-,\upsilon^+,\upsilon^-)+\Delta g_{2;\beta,\gamma,\lambda}(\mathbf{y});\\
  f_2(\chi^+,\chi^-,\upsilon^+,\upsilon^-)&:=2(\chi^+-\upsilon^+)\log(\chi^+-\upsilon^+)+\upsilon^+\log\upsilon^++(1-2\chi^++\upsilon^+)\log(1-2\chi^++\upsilon^+)\\
  &+2(\chi^--\upsilon^-)\log(\chi^--\upsilon^-)+\upsilon^-\log\upsilon^-+(1-2\chi^-+\upsilon^-)\log(1-2\chi^-+\upsilon^-);\\
  g_{2;\beta,\gamma}(\mathbf{y})&:=(y_{22}+y_{24}+y_{33}+y_{34}+y_{42}+y_{43}+2y_{44})\log\beta\\
  &+(2y_{11}+y_{12}+y_{13}+y_{21}+y_{22}+y_{31}+y_{33})\log\gamma-\sum_{ij}y_{ij}\log y_{ij},
\end{align*}
where the maximum is over non-negative (real vectors)  $\mathbf{y}$ satisfying 
\begin{equation}\label{eqn:newconstraints}
\sum_j y_{ij}=L_ i,~\text{ for }i\in[4], \quad \sum_i y_{ij}=R_j,~\text{ for }j\in[4].\\
\end{equation}

To obtain the leading term of the second moment $\E_{\Gc_n}\left[(Z_G^{\chi^+,\chi^-})^2\right]$, 
define $$
\mathcal{D} := \Big\{ (\upsilon^+,\upsilon^-) \mid
\max\{0,2 \chi^+-1\} \leq \upsilon^+ \leq \chi^+
\text{ and }
\max\{0,2 \chi^--1\} \leq \upsilon^+ \leq \chi^-
\Big\}.
$$ 
 In Section~\ref{sec:asymptotics}, we  will show
\begin{equation}\label{sstwo}
\forall \chi^+,\chi^-, \quad
  \frac{1}{n}\log\E_{\Gc_n}\left[(Z_G^{\chi^+,\chi^-})^2\right]
  =\Psi_{2;\beta,\gamma,\lambda}(\chi^+,\chi^-)+o(1), 
\end{equation}
where
\begin{equation*}
  \Psi_{2;\beta,\gamma,\lambda}(\chi^+,\chi^-):=\max_{(\upsilon^+,\upsilon^-)\in\mathcal{D}} \Psi_{2;\beta,\gamma,\lambda}'(\chi^+,\chi^-,\upsilon^+,\upsilon^-).
\end{equation*}

We want to focus on $(\chi^+,\chi^-)$ which maximize $\Psi_{1;\beta,\gamma,\lambda}$. It is shown in \cite[Lemma 1]{GSV12} that when non-uniqueness holds, there exist $p^+,p^-$ with $p^+>p^-$ such that  $(\chi^+,\chi^-)=(p^+,p^-)$  is the unique maximizer (up to symmetry) of $\Psi_{1;\beta,\gamma,\lambda}$. We should note here that the values $p^+,p^-$ are the analogues of $q^+,q^-$ for the infinite $\Delta$-regular tree (observe that the infinite $\Delta$-regular tree differs from $\TreeD$ only at the degree of the root).
The technical condition  from \cite{GSV12} is as follows.

\begin{condition}[\protect{\cite[Condition 1]{GSV12}}]\label{GSV12:condition1}\label{cond:cond}
  $\Psi_{2;\beta,\gamma,\lambda}'(p^+,p^-,\upsilon^+,\upsilon^-)$ is maximized at $\upsilon^+=(p^+)^2$, $\upsilon^-=(p^-)^2$.
\end{condition}

To see that our version of Condition~\ref{cond:cond} is equivalent to the version in \cite{GSV12}, use (\ref{lastone}).
Our interest in Condition~\ref{cond:cond} is justified by the following lemma which is proved in \cite{GSV12}.

\begin{lemma}[\protect{\cite[Lemma 19]{GSV12}}]\label{lem:GSV12:condition1}
  When the parameter set $(\beta,\gamma,\lambda)$ lies in the non-uniqueness region of $\TreeD$,
  and Condition \ref{GSV12:condition1} holds,
  the gadget $\widetilde{G}$ satisfies Definition \ref{def:sly}.
\end{lemma}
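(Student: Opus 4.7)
The plan is to follow the strategy of Sly~\cite{Sly10}: apply the second moment method combined with small subgraph conditioning to establish concentration of the ``phase-$+$'' partition function contribution on a random base graph $G\sim\mathcal{G}_n$, and then propagate this concentration through the attached $(\Delta-1)$-ary trees of depth $\ell=\Theta(\log n)$ to obtain the desired distribution on the terminals. Condition~\ref{GSV12:condition1} is exactly what makes the second moment method succeed in this setting, while non-uniqueness on $\TreeD$ is what makes the attached trees effective at decoupling terminals.

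First, combine the asymptotic (\ref{ssone}) with Lemma~1 of~\cite{GSV12} to see that $\Psi_{1;\beta,\gamma,\lambda}$ has exactly two maximizers, $(p^+,p^-)$ and $(p^-,p^+)$, which are related by the $+/-$ symmetry of $\mathcal{G}_n$. Configurations whose $(\chi^+,\chi^-)$ are bounded away from both maximizers contribute only an $e^{-\Omega(n)}$ fraction of $Z_G$, so defining $Y(\sigma)=+$ when $(\chi^+(\sigma),\chi^-(\sigma))$ is closer to $(p^+,p^-)$ and $Y(\sigma)=-$ otherwise captures the two phases. Next, Condition~\ref{GSV12:condition1} together with (\ref{sstwo}) gives $\mathbf{E}[(Z_G^{p^+,p^-})^2]=O(\mu_n^2)$, where $\mu_n:=\mathbf{E}[Z_G^{p^+,p^-}]$, so a Paley--Zygmund argument yields $Z_G^{p^+,p^-}=\Omega(\mu_n)$ with probability $\Omega(1)$. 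To boost this to the $3/4$ success probability demanded by Definition~\ref{def:sly}, apply small subgraph conditioning: compute the joint behavior of $Z_G^{p^+,p^-}$ with the counts of short cycles in $G$, verify that the discrepancy $\mathbf{E}[(Z_G^{p^+,p^-})^2]/\mu_n^2 - 1$ is fully explained by Poisson short-cycle fluctuations, and conclude that $Z_G^{p^+,p^-}/\mu_n$ converges in distribution to a strictly positive random variable. By the $+/-$ symmetry of $\mathcal{G}_n$, the analogous statement holds for $Z_G^{p^-,p^+}$, and the two are of the same order with high probability; this gives the phase balance~(\ref{eqn:rough:balance}).

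For the near-independence~(\ref{eqn:near:ind}), the same second moment argument, applied to the partition function restricted by fixing the spins on any polynomial-size subset $S\subseteq W^+\cup W^-$ to a pattern $\eta$, shows that conditioned on phase~$+$ the spins on $W^+$ (respectively $W^-$) are close in distribution to the ``$+$'' (respectively ``$-$'') extremal Gibbs measure on the infinite $\Delta$-regular tree, restricted to $|W^\pi|$ sites. Each terminal $v\in T^\pi$ is the root of a $(\Delta-1)$-ary tree of even depth $\ell$ whose leaves lie in $W^\pi$; propagating this near-Gibbs boundary up through the tree recursion $f$ drives the marginal at the root exponentially fast in $\ell$ to the non-uniqueness fixed point $q^\pi$ of $f\circ f$. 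The $t$ terminal trees are vertex-disjoint and together touch only an $n^{o(1)}$ fraction of $W$, so the joint distribution of the leaf sets factorises across terminals to within $\epsilon$, and the terminal spins are correspondingly close to the product measure $Q^\pi$.

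The main obstacle is the small subgraph conditioning step, which requires carefully matching the second moment excess certified by Condition~\ref{cond:cond} against the expected short-cycle contributions of $\mathcal{G}_n$, via a local Gaussian expansion of $\Psi_{2;\beta,\gamma,\lambda}''$ around the stationary point $(\upsilon^+,\upsilon^-)=((p^+)^2,(p^-)^2)$. A secondary technical point is the precise choice of the construction parameters $\theta,\psi$ and of the map $n(t,\epsilon)$ so that the three error sources in the final step---finite-sample fluctuations of $Z_G^{p^+,p^-}$, residual correlations between the leaf sets of distinct terminal trees, and the finite-depth approximation error of the tree recursion---combine to at most $\epsilon$ per terminal while keeping $n$ polynomially bounded in $t$ and $\epsilon^{-1}$.
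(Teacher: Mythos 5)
You have attempted to reprove Lemma~\ref{lem:GSV12:condition1} from scratch, but the paper does not prove this lemma at all: it is imported verbatim as Lemma~19 of \cite{GSV12}, and the only work the present paper does in Section~\ref{sec:indpt-phases} is to verify Condition~\ref{GSV12:condition1} throughout the non-uniqueness region (Lemmas~\ref{lem:new}--\ref{lem:condition1}) so that the cited lemma can be invoked. Your sketch does identify the ingredients actually used in \cite{Sly10} and \cite{GSV12} (second moment method at $(p^+,p^-)$, small subgraph conditioning, propagation through the attached $(\Delta-1)$-ary trees), so it is aimed in the right direction, but as a proof it has a concrete gap and otherwise defers exactly the steps that constitute the cited result.

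The gap: you claim that Condition~\ref{cond:cond} together with \eqref{sstwo} gives $\E_{\Gc_n}\bigl[(Z_G^{p^+,p^-})^2\bigr]=O(\mu_n^2)$, where $\mu_n:=\E_{\Gc_n}\bigl[Z_G^{p^+,p^-}\bigr]$, and hence an $\Omega(1)$ Paley--Zygmund bound. But \eqref{ssone} and \eqref{sstwo} control only the exponential order: matching leading exponents, i.e.\ $\Psi_{2;\beta,\gamma,\lambda}(p^+,p^-)=2\Psi_{1;\beta,\gamma,\lambda}(p^+,p^-)$ (which is what Condition~\ref{cond:cond} buys, cf.\ Lemma~\ref{lem:first and second moment}), yields only $\E_{\Gc_n}\bigl[(Z_G^{p^+,p^-})^2\bigr]\leq e^{o(n)}\mu_n^2$, and an $e^{o(n)}$ factor destroys the Paley--Zygmund step. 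Upgrading this to a constant-factor second moment bound, and then to the distributional control of $Z_G^{p^+,p^-}/\mu_n$ needed for the $3/4$ success probability, requires a polynomial-order Laplace-type analysis around the critical point $\upsilon^\pm=(p^\pm)^2$ (non-degeneracy of the relevant Hessians) plus the full small subgraph conditioning computation; likewise, the statements that conditioned on the phase the spins on $W^\pm$ track the extremal tree measures and that the $t$ disjoint trees then decouple the terminals to within a multiplicative $(1\pm\epsilon)$ of $Q^\pm$ as in \eqref{eqn:near:ind} are precisely the content of \cite[Lemma 19]{GSV12}, not routine consequences of the displayed asymptotics. You flag these as ``obstacles,'' but they are essentially the entire proof; within this paper the correct and intended justification of Lemma~\ref{lem:GSV12:condition1} is the citation itself.
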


We will establish Condition \ref{GSV12:condition1} when $(\beta,\gamma,\lambda)$ lies in the non-uniqueness region of $\TreeD$.
To do so, we apply results from \cite{GSV13}.
The next lemma is a specialization of Theorem $3$ from \cite{GSV13} to our setting
for the case that there is no field.

\begin{lemma}[\protect{\cite[Theorem 3]{GSV13}}]  \label{lem:GSV13:thm3}
  Suppose  $\Delta\geq 3$. 
   Then $$\max_{\chi^+,\chi^-}\Psi_{2;\beta,\gamma,1}(\chi^+,\chi^-)=2\max_{\chi^+,\chi^-} \Psi_{1;\beta,\gamma,1}(\chi^+,\chi^-)$$.
\end{lemma}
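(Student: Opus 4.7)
The plan is to deduce this lemma directly from Theorem~3 of \cite{GSV13}, so the bulk of the work is verifying that the present setup falls within the hypotheses of that theorem. The first step is to confirm notational agreement: the exponents $\Psi_{1;\beta,\gamma,1}$ and $\Psi_{2;\beta,\gamma,1}$ as defined via \eqref{eq:firstmoment}, \eqref{lastone} and \eqref{sstwo} are precisely (up to renaming of optimization variables) the first- and second-moment rate functions for the random bipartite configuration model $\Gc_n$ studied in \cite{GSV13}. The second step is to check that the no-field regime $\lambda=1$, together with $\beta\gamma<1$ and $\Delta\geq 3$, places us within the parameter range of that theorem.

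For conceptual clarity, I would sketch the two directions separately. The lower bound $\max\Psi_{2;\beta,\gamma,1}\geq 2\max\Psi_{1;\beta,\gamma,1}$ is elementary. Fix a maximizer $(\chi^+_*,\chi^-_*)$ of $\Psi_1$ with optimal inner variable $x_*$, and evaluate $\Psi_2''$ at the \emph{independent-overlap} point $\upsilon^+=(\chi^+_*)^2$, $\upsilon^-=(\chi^-_*)^2$. With these overlaps the marginals $L_i,R_j$ in \eqref{eq:LRlimits} factor as products of the marginals for a single copy. Choosing $y_{ij}$ as the corresponding product arising from two independent copies of the first-moment optimizer, the entropy term $-\sum_{ij}y_{ij}\log y_{ij}$ decouples and the $(\beta,\gamma)$-energy term factors into two identical contributions, yielding exactly $2\Psi_1'(x_*,\chi^+_*,\chi^-_*)$.

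The reverse inequality $\max\Psi_{2;\beta,\gamma,1}\leq 2\max\Psi_{1;\beta,\gamma,1}$ is the nontrivial content of \cite[Theorem~3]{GSV13}. Its proof there proceeds by writing the KKT conditions for the inner maximization over $\mathbf{y}$ (subject to \eqref{eqn:newconstraints}) and for the outer maximization over $(\upsilon^+,\upsilon^-)$. These conditions form a system of tree-recurrence-like equations over the four overlap classes on each side of the bipartition. The crucial role of $\lambda=1$ is to eliminate an asymmetric term in these equations, which decouples the system and forces every stationary point of $\Psi_2$ to satisfy $\upsilon^{\pm}=(\chi^{\pm})^2$; at any such point the exponent factors as twice the first-moment exponent, and the desired inequality follows.

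The main obstacle, if one were to reprove the lemma from scratch rather than invoke \cite{GSV13}, would be the upper bound: ruling out \emph{correlated} critical points of $\Psi_2'$ with $\upsilon^{\pm}\neq(\chi^{\pm})^2$. This is exactly the delicate convexity/monotonicity analysis that occupies \cite{GSV13}, and it is precisely the no-field symmetry $\lambda=1$ that enables it. Given that this work has already been carried out in that reference, the proof in the present paper reduces to the notational verification described above.
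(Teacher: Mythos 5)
Your proposal is correct and matches the paper's treatment: the paper does not reprove this statement but imports it verbatim as a specialization of Theorem~3 of \cite{GSV13} to the no-field case, exactly as you do, with the only real work being the verification that $\Psi_{1;\beta,\gamma,1}$ and $\Psi_{2;\beta,\gamma,1}$ (as defined through \eqref{eq:firstmoment}, \eqref{lastone}, \eqref{sstwo}) coincide with the first- and second-moment exponents of \cite{GSV13} for the distribution $\Gc_n$. Your additional sketch of the two inequalities is a fair summary of what happens inside \cite{GSV13} but is not needed for the paper's argument, which relies on the citation alone.
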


The main difference between the setting in \cite{GSV13} and ours
 is that we allow external fields whereas they do not.
 We use the following lemma
to apply Lemma~\ref{lem:GSV13:thm3}
to our setting.
 
 \begin{lemma}
 \label{lem:new}
 For all  $\beta$, $\gamma$, $\lambda$, $\Delta$,  $\chi^+$, $\chi^-$,
$\upsilon^+$ and $\upsilon^-$,
 \begin{align*}
\Psi_{1;\beta,\gamma,\lambda}(\chi^+,\chi^-) &= \Psi_{1; \beta/ \lambda^{1/\Delta}, \gamma \lambda^{1/\Delta},1}(\chi^+,\chi^-)+\log\lambda,\\
\Psi_{2;\beta,\gamma,\lambda}'(\chi^+,\chi^-,\upsilon^+,\upsilon^-) 
&= \Psi_{2; \beta/ \lambda^{1/\Delta}, \gamma \lambda^{1/\Delta},1}'(\chi^+,\chi^-,\upsilon^+,\upsilon^-)+2\log\lambda,\\
\Psi_{2;\beta,\gamma,\lambda}(\chi^+,\chi^-) &= \Psi_{2; \beta/ \lambda^{1/\Delta}, \gamma \lambda^{1/\Delta},1}(\chi^+,\chi^-)+2\log\lambda\\
\end{align*}
 \end{lemma}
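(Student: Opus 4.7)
The plan is to establish a pointwise identity between configuration weights under the parameter substitution $\beta\mapsto \beta':=\beta\lambda^{-1/\Delta}$, $\gamma\mapsto \gamma':=\gamma\lambda^{1/\Delta}$, and then to propagate this identity through the sums and expectations that define the $\Psi$ functions. The underlying intuition is that a uniform external field $\lambda$ on a $\Delta$-regular graph can always be absorbed into the edge weights, up to a global scalar that does not depend on $\sigma$.

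First I would prove the following claim: for any $\Delta$-regular graph $G=(V(G),E(G))$ and any configuration $\sigma\colon V(G)\to\{0,1\}$,
\[
w_{G;\beta,\gamma,\lambda}(\sigma)=\lambda^{|E(G)|/\Delta}\cdot w_{G;\beta',\gamma',1}(\sigma).
\]
This reduces to a handshaking calculation: counting degrees at spin-$1$ vertices gives $\Delta n_1(\sigma)=2m_1(\sigma)+(|E(G)|-m_0(\sigma)-m_1(\sigma))$, which rearranges to $n_1(\sigma)-(m_1(\sigma)-m_0(\sigma))/\Delta=|E(G)|/\Delta$. Substituting this into $(\beta')^{m_0}(\gamma')^{m_1}=\beta^{m_0}\gamma^{m_1}\lambda^{(m_1-m_0)/\Delta}$ and comparing with $\beta^{m_0}\gamma^{m_1}\lambda^{n_1}$ gives the claim.

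Next I would specialize to $G\sim\Gc_n$: by construction, such graphs are $\Delta$-regular multigraphs on $2n$ vertices with $|E(G)|=\Delta n$ (the union of $\Delta$ perfect matchings, each of size $n$), so the pointwise identity becomes $w_{\beta,\gamma,\lambda}(\sigma)=\lambda^{n}\,w_{\beta',\gamma',1}(\sigma)$. Summing this over $\sigma\in\Sigma^{\chi^+,\chi^-}$ yields $Z_G^{\chi^+,\chi^-}(\beta,\gamma,\lambda)=\lambda^{n}\,Z_G^{\chi^+,\chi^-}(\beta',\gamma',1)$; summing the corresponding product identity $w(\sigma_1)w(\sigma_2)=\lambda^{2n}\,w'(\sigma_1)w'(\sigma_2)$ over pairs in $\Sigma^{\chi^+,\chi^-}_{\upsilon^+,\upsilon^-}$ yields $Y^{\chi^+,\chi^-}_{\upsilon^+,\upsilon^-}(\beta,\gamma,\lambda)=\lambda^{2n}\,Y^{\chi^+,\chi^-}_{\upsilon^+,\upsilon^-}(\beta',\gamma',1)$. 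Taking $\E_{\Gc_n}$ by linearity and then $\tfrac{1}{n}\log$, the asymptotic expressions \eqref{ssone} and \eqref{lastone} immediately give the first two identities of the lemma; the third follows by maximizing the second over $(\upsilon^+,\upsilon^-)\in\mathcal{D}$, which preserves the additive constant $2\log\lambda$ (and is consistent with the asymptotic \eqref{sstwo} applied to the $\lambda^{2n}$-scaled sum).

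I do not expect a substantive obstacle, since this is essentially a change of variables; the most delicate point is simply to verify the handshaking identity on the bipartite multigraph $G\sim\Gc_n$, keeping edges counted with multiplicity across the $\Delta$ matchings. One advantage of this route is that the shift is exact at the level of weights, hence survives $\tfrac{1}{n}\log$ without interference from the $o(1)$ error terms in \eqref{ssone}, \eqref{lastone}, \eqref{sstwo}. As an algebraic cross-check, one may instead substitute $\log\beta'=\log\beta-(\log\lambda)/\Delta$ and $\log\gamma'=\log\gamma+(\log\lambda)/\Delta$ directly into the formulas for $g_{1;\beta,\gamma}$ and $g_{2;\beta,\gamma}$; using the linear constraints $\chi\leq x\leq\chi'$ and $\sum_j y_{ij}=L_i$, $\sum_i y_{ij}=R_j$, one recovers the counts $(1-\chi^+-\chi^-)$ and $2(1-\chi^+-\chi^-)$ for the net $\log\beta-\log\gamma$ exponents, mirroring the handshaking step and confirming the $\log\lambda$ and $2\log\lambda$ shifts.
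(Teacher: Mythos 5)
Your argument is correct, but it takes a genuinely different route from the paper. The paper proves Lemma~\ref{lem:new} by direct algebraic substitution: it computes $g_{1;\beta,\gamma}-g_{1;\beta/\lambda^{1/\Delta},\gamma\lambda^{1/\Delta}}=\log(\lambda)(1-\chi^+-\chi^-)/\Delta$ and, using the constraints \eqref{eqn:newconstraints} to evaluate the net exponent as $L_4+R_4-L_1-R_1=2-2(\chi^++\chi^-)$, the analogous shift for $\Psi_2''$, then passes the additive constants through the maxima defining $\Psi_2'$ and $\Psi_2$ --- this is essentially your closing ``algebraic cross-check.'' Your primary route instead proves the exact field-pushing identity $w_{G;\beta,\gamma,\lambda}(\sigma)=\lambda^{|E(G)|/\Delta}w_{G;\beta',\gamma',1}(\sigma)$ on $\Delta$-regular (multi)graphs via handshaking, specializes to $\Gc_n$ where $|E(G)|=\Delta n$, and transfers the resulting exact scalings of $Z_G^{\chi^+,\chi^-}$ and $Y^{\chi^+,\chi^-}_{\upsilon^+,\upsilon^-}$ through \eqref{ssone}, \eqref{lastone}, \eqref{sstwo}, using $n$-independence of the $\Psi$'s to kill the $o(1)$ terms (the same device the paper uses in Lemma~\ref{lem:first and second moment}). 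This is legitimate and non-circular, since the asymptotic equations are established in Section~\ref{sec:asymptotics} without reference to Lemma~\ref{lem:new}; indeed the paper itself states your interpretation as the remark immediately following the lemma (``an external field $\lambda$ is pushed into the edge interactions''), but does not use it as the proof. What each approach buys: yours is conceptually transparent and exact at the level of configuration weights, but it imports the machinery of the moment asymptotics and therefore only directly covers arguments in the ranges where \eqref{ssone}--\eqref{sstwo} are stated (e.g.\ $(\upsilon^+,\upsilon^-)\in\mathcal{D}$, which is all that is ever needed); the paper's computation is self-contained, purely analytic, and verifies the identity for the functions $\Psi_1,\Psi_2',\Psi_2$ as formal expressions without any appeal to the random graph model.
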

 \begin{proof}
For the first equation, note from the definition of $g_1$ that 
\begin{align*}
  g_{1; \beta,\gamma} (x,\chi^+,\chi^-) 
  - g_{1; \beta/ \lambda^{1/\Delta}, \gamma \lambda^{1/\Delta}} (x,\chi^+,\chi^-)
  = \log(\lambda)(1-\chi^+-\chi^-)/\Delta.
\end{align*}
Thus, from the definition of $\Psi_1$,
we have
\begin{align*}
  \Psi_{1;\beta,\gamma,\lambda}(\chi^+,\chi^-)&-\Psi_{1; \beta/ \lambda^{1/\Delta}, \gamma \lambda^{1/\Delta},1}(\chi^+,\chi^-)
  =\log\lambda.
\end{align*}
Similarly, from the definition of $\Psi_2''$, we have
\begin{align*}
  \Psi_{2;\beta,\gamma,\lambda}''(\chi^+,\chi^-,\upsilon^+,\upsilon^-,\mathbf{y})
  -\Psi_{2;\beta/ \lambda^{1/\Delta}, \gamma \lambda^{1/\Delta},1}''(\chi^+,\chi^-,\upsilon^+,\upsilon^-,\mathbf{y})
 & = 2(\chi^++\chi^-)\log\lambda\\
 &+\Delta(g_{2;\beta,\gamma}(\mathbf{y})-g_{2;\beta/ \lambda^{1/\Delta}, \gamma \lambda^{1/\Delta}}(\mathbf{y})),
\end{align*}
and from the definition of $g_2$, 
\begin{align*}
  \Delta(g_{2;\beta,\gamma}(\mathbf{y})
  -g_{2;\beta/ \lambda^{1/\Delta}, \gamma \lambda^{1/\Delta}}(\mathbf{y}))
  & =(y_{22}+y_{24}+y_{33}+y_{34}+y_{42}+y_{43}+2y_{44}\\ 
  &-2y_{11}-y_{12}-y_{13}-y_{21}-y_{22}-y_{31}-y_{33})\log\lambda.
\end{align*}
However by \eqref{eqn:newconstraints}
\begin{align*} 
  (y_{22}+y_{24}+y_{33}+y_{34}+y_{42}+y_{43}+2y_{44}-2y_{11}-y_{12}-y_{13}-y_{21}-y_{22}-y_{31}-y_{33})
  &=L_4+R_4-L_1-R_1\\
  &=2-2(\chi^++\chi^-).
\end{align*}
Therefore we have
\begin{align*}
  \Psi_{2;\beta,\gamma,\lambda}''(\chi^+,\chi^-,\upsilon^+,\upsilon^-,\mathbf{y})
  -\Psi_{2;\beta/ \lambda^{1/\Delta}, \gamma \lambda^{1/\Delta},1}''(\chi^+,\chi^-,\upsilon^+,\upsilon^-,\mathbf{y})
  & = 2\log\lambda.
\end{align*}
As $\Psi_2'$ is the maximum of $\Psi_2''$ over $\mathbf{y}$ and $\Psi_2$ is the maximum of $\Psi_2'$ over $(\upsilon^+,\upsilon^-)$,
the second and the third equations hold.
\end{proof}
 
 Lemma~\ref{lem:new} shows that modulo a constant term, the expressions~$\Psi_1$ and~$\Psi_2$ are
 preserved by the standard transformation on $\Delta$-regular graphs whereby an external field~$\lambda$ is
 pushed into the edge interactions.
Using this lemma, we can now draw conclusions about the maximisation of these quantities in our setting.

\begin{lemma}\label{lem:first and second moment}
  For any parameter set $(\beta,\gamma,\lambda)$ and $\Delta\geq 3$, 
  if $(\beta,\gamma,\lambda)$ lies in the non-uniqueness region of $\TreeD$, 
  then $\Psi_{2;\beta,\gamma,\lambda}(p^+,p^-)=2\Psi_{1;\beta,\gamma,\lambda}(p^+,p^-)$ and $(p^+,p^-)$ maximizes both $\Psi_{1;\beta,\gamma,\lambda}$ and $\Psi_{2;\beta,\gamma,\lambda}$.
\end{lemma}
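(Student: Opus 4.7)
The plan is to reduce to the no-field case $(\beta',\gamma',1)$ via Lemma~\ref{lem:new}, apply Lemma~\ref{lem:GSV13:thm3} there, and then transport the conclusion back. Setting $\beta'=\beta/\lambda^{1/\Delta}$ and $\gamma'=\gamma\lambda^{1/\Delta}$, Lemma~\ref{lem:new} tells us that $\Psi_{1;\beta,\gamma,\lambda}$ and $\Psi_{1;\beta',\gamma',1}$ differ only by the additive constant $\log\lambda$, and similarly $\Psi_{2;\beta,\gamma,\lambda}$ and $\Psi_{2;\beta',\gamma',1}$ differ by $2\log\lambda$. In particular, the two first-moment functions share the same maximizers (and likewise for the two second-moment functions), while the maxima satisfy
\begin{equation*}
\max\Psi_{1;\beta,\gamma,\lambda}=\max\Psi_{1;\beta',\gamma',1}+\log\lambda,\qquad \max\Psi_{2;\beta,\gamma,\lambda}=\max\Psi_{2;\beta',\gamma',1}+2\log\lambda.
\end{equation*}
Combining these identities with Lemma~\ref{lem:GSV13:thm3}, which provides $\max\Psi_{2;\beta',\gamma',1}=2\max\Psi_{1;\beta',\gamma',1}$, I immediately obtain $\max\Psi_{2;\beta,\gamma,\lambda}=2\max\Psi_{1;\beta,\gamma,\lambda}$.

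To finish, I must locate these common maxima at $(p^+,p^-)$. By \cite[Lemma~1]{GSV12}, under the non-uniqueness hypothesis $(p^+,p^-)$ is (up to symmetry) the unique maximizer of $\Psi_{1;\beta,\gamma,\lambda}$. For the second moment, I would invoke the elementary inequality $\E[X^2]\geq(\E[X])^2$ applied to $X=Z_G^{\chi^+,\chi^-}$: taking $\tfrac{1}{n}\log$ and letting $n\to\infty$, using the asymptotic expressions \eqref{ssone} and \eqref{sstwo}, yields the pointwise bound
\begin{equation*}
\Psi_{2;\beta,\gamma,\lambda}(\chi^+,\chi^-)\geq 2\,\Psi_{1;\beta,\gamma,\lambda}(\chi^+,\chi^-)\quad\text{for all } (\chi^+,\chi^-).
\end{equation*}
Specializing this at $(p^+,p^-)$ and chaining inequalities gives
\begin{equation*}
\max\Psi_{2;\beta,\gamma,\lambda}\geq \Psi_{2;\beta,\gamma,\lambda}(p^+,p^-)\geq 2\,\Psi_{1;\beta,\gamma,\lambda}(p^+,p^-)=2\max\Psi_{1;\beta,\gamma,\lambda}=\max\Psi_{2;\beta,\gamma,\lambda},
\end{equation*}
forcing equality throughout. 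This simultaneously yields $\Psi_{2;\beta,\gamma,\lambda}(p^+,p^-)=2\Psi_{1;\beta,\gamma,\lambda}(p^+,p^-)$ and certifies $(p^+,p^-)$ as a maximizer of $\Psi_{2;\beta,\gamma,\lambda}$, completing the proof.

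The substantive work has already been packaged into Lemmas~\ref{lem:new} and~\ref{lem:GSV13:thm3}, so what remains is essentially bookkeeping. The only mildly delicate point is confirming that Jensen's inequality survives the $n\to\infty$ limit; this is harmless because the $o(1)$ error terms in the asymptotic formulae for $\tfrac{1}{n}\log\E[\cdot]$ are uniform on the compact ranges of $\chi^\pm$, so the pointwise inequality $\Psi_2\geq 2\Psi_1$ follows cleanly. The main conceptual obstacle in the broader argument is truly Lemma~\ref{lem:GSV13:thm3} itself (the second-moment optimization for field-free antiferromagnetic spin systems from \cite{GSV13}); once it is in hand, the scaling from Lemma~\ref{lem:new} turns the field-free statement into the general statement needed here.
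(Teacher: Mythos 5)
Your proposal is correct and follows essentially the same route as the paper: Jensen's inequality $\E[X^2]\geq(\E[X])^2$ together with \eqref{ssone} and \eqref{sstwo} gives $\Psi_{2;\beta,\gamma,\lambda}(p^+,p^-)\geq 2\Psi_{1;\beta,\gamma,\lambda}(p^+,p^-)$, while Lemma~\ref{lem:new} plus Lemma~\ref{lem:GSV13:thm3} gives $\max\Psi_{2;\beta,\gamma,\lambda}=2\max\Psi_{1;\beta,\gamma,\lambda}$, and chaining these with the fact that $(p^+,p^-)$ maximizes $\Psi_{1;\beta,\gamma,\lambda}$ (\cite[Lemma 1]{GSV12}) forces equality exactly as in the paper's proof. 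The only cosmetic difference is that you state the Jensen bound pointwise in $(\chi^+,\chi^-)$ rather than just at $(p^+,p^-)$; since both sides are independent of $n$, the passage to the limit is immediate and no uniformity of the $o(1)$ terms is needed.
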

 
\begin{proof}
First, using equalities (\ref{ssone}) and (\ref{sstwo}),
and the fact that $\E[X^2] \geq \E[X]^2$,
\begin{align*}
\Psi_{2;\beta,\gamma,\lambda}(p^+,p^-) 
&=  \frac{1}{n} \log\E_{\Gc_n} \left[ {(Z_G^{p^+,p^-})}^2 \right] + o(1)\\
&\geq    \frac{1}{n}\log {\left( \E_{\Gc_n}\big[Z_G^{p^+,p^-}\big] \right)}^2 + o(1)\\
&=  2 \frac{1}{n}\log  \E_{\Gc_n} \big[Z^{p^+,p^-}_G\big]  + o(1)\\
&= 2 \Psi_{1;\beta,\gamma,\lambda}(p^+,p^-)+o(1).
\end{align*}

But since $\Psi_{2;\beta,\gamma,\lambda}(p^+,p^-)$ and
$\Psi_{1;\beta,\gamma,\lambda}(p^+,p^-)$ don't depend upon~$n$,
we have 
$$ \Psi_{2;\beta,\gamma,\lambda}(p^+,p^-) \geq 2 \Psi_{1;\beta,\gamma,\lambda}(p^+,p^-).$$

For the other direction, we can apply Lemma~\ref{lem:GSV13:thm3}
and  Lemma~\ref{lem:new}
which together give
 
 \begin{align*}
\max_{\chi^+,\chi^-} \Psi_{2;\beta,\gamma,\lambda}(\chi^+,\chi^-)&=
\max_{\chi^+,\chi^-} \Psi_{2; \beta/ \lambda^{1/\Delta}, \gamma \lambda^{1/\Delta},1}(\chi^+,\chi^-)+2\log\lambda\\
&=  2\left(\max_{\chi^+,\chi^-} \Psi_{1; \beta/ \lambda^{1/\Delta}, \gamma \lambda^{1/\Delta},1}(\chi^+,\chi^-)+\log\lambda\right)\\
&=  2 \max_{\chi^+,\chi^-} \Psi_{1; \beta,\gamma,\lambda}(\chi^+,\chi^-).
\end{align*}

Since $(p^+,p^-)$ maximizes $\Psi_{1;\beta,\gamma,\lambda}$ when $(\beta,\gamma,\lambda)$ lies in the non-uniqueness region of $\TreeD$ (\cite[Lemma 1]{GSV12}),
  we have $\Psi_{2;\beta,\gamma,\lambda}(p^+,p^-)\leq 2\Psi_{1;\beta,\gamma,\lambda}(p^+,p^-)$.
  Therefore $\Psi_{2;\beta,\gamma,\lambda}(p^+,p^-)= 2\Psi_{1;\beta,\gamma,\lambda}(p^+,p^-)$, and $(p^+,p^-)$ also maximizes $\Psi_{2;\beta,\gamma,\lambda}$.
\end{proof}

By the same reasoning as in the proof above, 
we may apply other results from \cite{GSV13}.
Recall that $\Psi_{2;\beta,\gamma,\lambda}(\chi^+,\chi^-)=\max_{\upsilon^+,\upsilon^-} \Psi_{2;\beta,\gamma,\lambda}'(\chi^+\chi^-,\upsilon^+,\upsilon^-)$.
Let 
$$\mathcal{D}'(\chi^+,\chi^-) = \{ (\upsilon^+,\upsilon^-)\in \mathcal{D}(\chi^+,\chi^-) \mid 
\upsilon^+ \neq (\chi^+)^2 \mbox{ or }
\upsilon^- \neq (\chi^-)^2\}.$$

\begin{lemma}[\protect{\cite[Lemma 29]{GSV13}}]\label{lem:GSV13:lemma29}
For all $0\leq \chi^+ \leq 1$ and $0\leq \chi^- \leq 1$ and
all $(\upsilon^+,\upsilon^-) \in \mathcal{D}'(\chi^+,\chi^-)$,
we have

 $$ 
 \Psi_{2;\beta,\gamma,1}'(\chi^+,\chi^-,\upsilon^+,\upsilon^-)
 < 2  \max_{0\leq x \leq 1,0\leq y \leq 1}
	\Psi_{1;\beta,\gamma,1}(x,y).$$
\end{lemma}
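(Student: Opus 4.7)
The overall plan is to combine the max equality from Lemma~\ref{lem:GSV13:thm3} with an explicit identification of the unique maximizing direction of the second moment surface. Concretely, I would first establish the pointwise identity
\[
\Psi_{2;\beta,\gamma,1}'\bigl(\chi^+,\chi^-,(\chi^+)^2,(\chi^-)^2\bigr)=2\,\Psi_{1;\beta,\gamma,1}(\chi^+,\chi^-),
\]
and then argue that this ``independent diagonal'' is the \emph{only} locus where $\Psi_2'$ attains the value $2\max\Psi_1$. Given those two facts, the conclusion is immediate: for any $(\upsilon^+,\upsilon^-)\in\mathcal{D}'(\chi^+,\chi^-)$, the weak bound from Lemma~\ref{lem:GSV13:thm3} already gives $\Psi_2'\le 2\max\Psi_1$, and strict inequality will follow from the uniqueness.

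For the identity at the independent point, the idea is to solve the inner maximization in $\mathbf{y}$. The function $\Psi_{2;\beta,\gamma,1}''$ is strictly concave in $\mathbf{y}$ (the entropy $-\sum y_{ij}\log y_{ij}$ is strictly concave, the interaction part is linear, and the constraints \eqref{eqn:newconstraints} are affine), so the KKT conditions produce a unique maximizer $\mathbf{y}^*$ of the form $y_{ij}^*= \alpha_i\beta_j e^{c_{ij}}$ for Lagrange multipliers $\alpha_i,\beta_j$ and the edge-weight contributions $c_{ij}$. When $\upsilon^\pm=(\chi^\pm)^2$, the marginals $L_i,R_j$ factor as $(\chi^+,1-\chi^+)\otimes(\chi^+,1-\chi^+)$ and analogously on the minus side, so one checks that $\mathbf{y}^*$ itself factors as $\mathbf{y}^*=\mathbf{x}^*\otimes\mathbf{x}^*$ where $\mathbf{x}^*$ is the maximizer entering the definition of $\Psi_1$. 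Substituting back and cancelling the duplicated entropy terms against the duplicated constraints gives the claimed identity. Combined with Lemma~\ref{lem:GSV13:thm3}, this yields $\max_{\chi^+,\chi^-}\Psi_2'(\chi^+,\chi^-,(\chi^+)^2,(\chi^-)^2)=\max\Psi_2=2\max\Psi_1$, hence the weak bound.

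The main obstacle is the strict inequality on $\mathcal{D}'$. I would show it by characterizing \emph{all} tuples $(\chi^+,\chi^-,\upsilon^+,\upsilon^-)$ that achieve $\Psi_2'=2\max\Psi_1$. At any such maximizer, the inner optimum $\mathbf{y}^*$ must be a global maximizer of $\Psi_2''$ subject to \eqref{eqn:newconstraints}. Stationarity in $(\upsilon^+,\upsilon^-)$ together with the Lagrangian form of $\mathbf{y}^*$ forces the following product structure: the quadruple $(y_{11}^*,y_{12}^*,y_{21}^*,y_{22}^*)$ (and similarly the other $2\times 2$ blocks defined by the partition of $\{+,-\}$ into ``both $1$''/``only $\sigma_1$''/``only $\sigma_2$''/``both $0$'') must factor as an outer product of the $\pm$ marginals. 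This factorization is exactly the statement $\upsilon^+=(\chi^+)^2$ and $\upsilon^-=(\chi^-)^2$; any deviation from the independent diagonal violates the factorization, hence by strict concavity of $\Psi_2''$ gives strictly smaller value. The key algebraic lemma here is a convexity/AM-GM style inequality for $y_{11}y_{22}$ versus $y_{12}y_{21}$ inside each block, enforced by the strict convexity of $t\log t$ and by the fact that the edge-weight matrix $M$ contributes only through the block-sums. Once this uniqueness of the maximizing locus is in place, the strict inequality on $\mathcal{D}'$ follows by continuity and compactness.

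The delicate point is that $(\upsilon^+,\upsilon^-)\mapsto\Psi_2'(\chi^+,\chi^-,\cdot,\cdot)$ need not be globally concave, so the uniqueness argument cannot appeal to a single Hessian computation; one must instead combine the Lagrangian factorization characterization above with the global bound from Lemma~\ref{lem:GSV13:thm3} to exclude ``spurious'' maximizers on the boundary of $\mathcal{D}$. I expect this boundary analysis, together with a careful verification that the factorization argument extends to the boundary cases where some of the $L_i$ or $R_j$ vanish, to be the most technically involved step.
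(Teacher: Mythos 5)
You should first note that the paper does not actually prove this statement: it is imported verbatim from \cite{GSV13} (Lemma 29 there), and the text only remarks that the translation between the two formulations is via Equation (3) of \cite{GSV13}. So the benchmark is the proof in that paper, which is a long, global analysis of the second-moment optimization. The first half of your plan is sound: at $(\upsilon^+,\upsilon^-)=((\chi^+)^2,(\chi^-)^2)$ one has $f_2=2f_1$, and the inner maximization of $g_2$ over the transportation polytope with product marginals and tensor-product weights $e^{c_{ij}}=M_{i_1j_1}M_{i_2j_2}$ has, by strict concavity in $\mathbf{y}$, a unique optimizer which is the product coupling, giving $\Psi_{2;\beta,\gamma,1}'\bigl(\chi^+,\chi^-,(\chi^+)^2,(\chi^-)^2\bigr)=2\Psi_{1;\beta,\gamma,1}(\chi^+,\chi^-)$; together with Lemma~\ref{lem:GSV13:thm3} this indeed yields the non-strict bound $\Psi_{2}'\leq 2\max\Psi_1$ everywhere.

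The gap is the strict inequality off the independent diagonal, and your mechanism for it does not hold up. The claim that ``stationarity in $(\upsilon^+,\upsilon^-)$ together with the Lagrangian form of $\mathbf{y}^*$ forces $\upsilon^\pm=(\chi^\pm)^2$'' is precisely the hard content of \cite[Lemma 29]{GSV13}, and it is false as a local statement: in the non-uniqueness region $\Psi_{2}'(\chi^+,\chi^-,\cdot,\cdot)$ has non-product critical points (e.g.\ high-overlap local maxima corresponding to $\sigma_1,\sigma_2$ agreeing within a phase), so stationarity does not imply factorization; the issue is whether such points reach the value $2\max\Psi_1$, which is a global question. Strict concavity of $\Psi_{2}''$ in $\mathbf{y}$ only compares couplings with the \emph{same} marginals $(L_i,R_j)$; it gives no comparison of optimal values across different $(\upsilon^+,\upsilon^-)$, since changing $\upsilon^\pm$ changes both $f_2$ and the constraint polytope, so ``deviation from the diagonal violates factorization, hence strictly smaller value'' is a non sequitur. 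Likewise an AM--GM/log-supermodularity argument comparing $y_{11}y_{22}$ with $y_{12}y_{21}$ inside blocks is exactly the kind of argument that fails for antiferromagnetic interactions; and invoking Lemma~\ref{lem:GSV13:thm3} again only gives the weak bound, not uniqueness of the maximizing locus. What is needed is that \emph{every} global maximizer of $\Psi_2'$ lies on the independent diagonal; in \cite{GSV13} this is established not by local KKT factorization but by a global analysis identifying the maximizers of $\Psi_2'$ with dominant phases/fixpoints of the tree recursion for the paired four-spin system with interaction $M\otimes M$ and showing that these factorize. As written, your proposal assumes this crux rather than proving it.
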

To see that our version of Lemma~\ref{lem:GSV13:lemma29} is the same as the one in~\cite{GSV13},
see \cite[Equation (3)]{GSV13}.
So using Lemma~\ref{lem:new}
we obtain 
\begin{corollary}
\label{cor:new}
For all $0\leq \chi^+ \leq 1$ and $0\leq \chi^- \leq 1$ and
all $(\upsilon^+,\upsilon^-) \in \mathcal{D}'(\chi^+,\chi^-)$,
we have
$$ 
 \Psi_{2;\beta,\gamma,\lambda}'(\chi^+,\chi^-,\upsilon^+,\upsilon^-)
 < 2 \max_{ 0\leq x \leq 1,0\leq y \leq 1}
	\Psi_{1;\beta,\gamma,\lambda}(x,y).$$
\end{corollary}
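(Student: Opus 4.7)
The plan is to reduce this corollary directly to Lemma~\ref{lem:GSV13:lemma29} via the field-removal identities established in Lemma~\ref{lem:new}. The key observation is that Lemma~\ref{lem:new} shows that both $\Psi_1$ and $\Psi_2'$ are invariant, up to an additive constant depending only on $\lambda$, under the transformation $(\beta,\gamma,\lambda)\mapsto(\beta/\lambda^{1/\Delta},\gamma\lambda^{1/\Delta},1)$ that folds the external field into the edge activities. This is exactly the kind of invariance needed to transfer strict inequalities from the field-free setting to the general antiferromagnetic setting.

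Concretely, given any $(\chi^+,\chi^-)$ and any $(\upsilon^+,\upsilon^-)\in\mathcal{D}'(\chi^+,\chi^-)$, I would first apply Lemma~\ref{lem:GSV13:lemma29} to the parameter tuple $(\beta',\gamma',1)$ where $\beta':=\beta/\lambda^{1/\Delta}$ and $\gamma':=\gamma\lambda^{1/\Delta}$. This yields
\begin{equation*}
\Psi_{2;\beta',\gamma',1}'(\chi^+,\chi^-,\upsilon^+,\upsilon^-)\;<\;2\max_{0\le x,y\le 1}\Psi_{1;\beta',\gamma',1}(x,y).
\end{equation*}
Then I would add $2\log\lambda$ to both sides. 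By the second identity in Lemma~\ref{lem:new}, the left-hand side becomes exactly $\Psi_{2;\beta,\gamma,\lambda}'(\chi^+,\chi^-,\upsilon^+,\upsilon^-)$. For the right-hand side, the first identity in Lemma~\ref{lem:new} says $\Psi_{1;\beta,\gamma,\lambda}(x,y)=\Psi_{1;\beta',\gamma',1}(x,y)+\log\lambda$ uniformly in $(x,y)$, so taking the maximum over $(x,y)$ and multiplying by $2$ gives $2\max_{x,y}\Psi_{1;\beta,\gamma,\lambda}(x,y)=2\max_{x,y}\Psi_{1;\beta',\gamma',1}(x,y)+2\log\lambda$. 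The strict inequality is preserved under the additive shift, yielding the claimed bound.

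There is essentially no obstacle here, since Lemma~\ref{lem:new} already does all of the substantive work of disentangling the field from the tree and interaction contributions; the only subtle point is confirming that the additive constants on the two sides match up (a factor of $2$ appears on both sides because $\Psi_2'$ picks up $2\log\lambda$ while $\Psi_1$ picks up $\log\lambda$ but is multiplied by $2$ in the right-hand side of the inequality). Once that bookkeeping is verified, the corollary is immediate. The proof should therefore be just a few lines of algebraic manipulation invoking Lemma~\ref{lem:GSV13:lemma29} and Lemma~\ref{lem:new}.
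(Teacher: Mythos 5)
Your proposal is correct and matches the paper's own derivation: the paper likewise obtains Corollary~\ref{cor:new} by applying Lemma~\ref{lem:GSV13:lemma29} at the field-free parameters $(\beta/\lambda^{1/\Delta},\gamma\lambda^{1/\Delta},1)$ and then shifting both sides by $2\log\lambda$ via the identities of Lemma~\ref{lem:new} (noting also that $\beta'\gamma'=\beta\gamma$, so the transformed parameters remain antiferromagnetic). Your bookkeeping of the additive constants is exactly the verification the paper leaves implicit.
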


Now we can show that Condition \ref{GSV12:condition1} holds when non-uniqueness holds.

\begin{lemma}\label{lem:condition1}
  For any parameter set $(\beta,\gamma,\lambda)$ and $\Delta\geq 3$, 
  if $(\beta,\gamma,\lambda)$ lies in the non-uniqueness region of $\TreeD$, 
  then Condition \ref{GSV12:condition1} holds.
\end{lemma}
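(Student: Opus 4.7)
The plan is to combine Lemma~\ref{lem:first and second moment} with Corollary~\ref{cor:new}; once these two ingredients are in place, the conclusion is essentially immediate.

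First, I would instantiate Corollary~\ref{cor:new} at the specific point $(\chi^+,\chi^-)=(p^+,p^-)$. This gives the strict inequality
\[
\Psi_{2;\beta,\gamma,\lambda}'(p^+,p^-,\upsilon^+,\upsilon^-) \;<\; 2\max_{0\leq x,y\leq 1}\Psi_{1;\beta,\gamma,\lambda}(x,y)
\]
for every $(\upsilon^+,\upsilon^-)\in\mathcal{D}'(p^+,p^-)$, i.e., for every pair in $\mathcal{D}(p^+,p^-)$ other than $((p^+)^2,(p^-)^2)$.

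Next I would apply Lemma~\ref{lem:first and second moment}, which tells us two things under the non-uniqueness hypothesis: that $(p^+,p^-)$ is a maximizer of $\Psi_{1;\beta,\gamma,\lambda}$, so that $\max_{x,y}\Psi_{1;\beta,\gamma,\lambda}(x,y)=\Psi_{1;\beta,\gamma,\lambda}(p^+,p^-)$, and that $\Psi_{2;\beta,\gamma,\lambda}(p^+,p^-)=2\Psi_{1;\beta,\gamma,\lambda}(p^+,p^-)$. Substituting into the display above, for all $(\upsilon^+,\upsilon^-)\in\mathcal{D}'(p^+,p^-)$,
\[
\Psi_{2;\beta,\gamma,\lambda}'(p^+,p^-,\upsilon^+,\upsilon^-) \;<\; \Psi_{2;\beta,\gamma,\lambda}(p^+,p^-).
\]

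Finally, by the very definition of $\Psi_{2;\beta,\gamma,\lambda}$, the right-hand side is the maximum of $\Psi_{2;\beta,\gamma,\lambda}'(p^+,p^-,\cdot,\cdot)$ over $\mathcal{D}(p^+,p^-)$. Since every point of $\mathcal{D}(p^+,p^-)$ other than $((p^+)^2,(p^-)^2)$ gives a strictly smaller value than this maximum, the maximum must be attained at $\upsilon^+=(p^+)^2$, $\upsilon^-=(p^-)^2$, which is exactly Condition~\ref{GSV12:condition1}. There is no real obstacle here; the content is entirely packaged inside Corollary~\ref{cor:new} (equivalently, \cite[Lemma~29]{GSV13}) and Lemma~\ref{lem:first and second moment}, with the external field handled uniformly via Lemma~\ref{lem:new}. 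The only minor point to be careful about is that Corollary~\ref{cor:new} requires the ``off-diagonal'' pair to lie in $\mathcal{D}'$, but any $(\upsilon^+,\upsilon^-)$ with nonzero contribution to $\Psi_{2;\beta,\gamma,\lambda}(p^+,p^-)$ automatically lies in $\mathcal{D}(p^+,p^-)$, so restricting the maximization to $\mathcal{D}(p^+,p^-)$ and then excluding the single point $((p^+)^2,(p^-)^2)$ is legitimate.
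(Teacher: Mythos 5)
Your proposal is correct and follows essentially the same route as the paper's own proof: instantiate Corollary~\ref{cor:new} at $(p^+,p^-)$, use Lemma~\ref{lem:first and second moment} to rewrite $2\max\Psi_{1;\beta,\gamma,\lambda}$ as $\Psi_{2;\beta,\gamma,\lambda}(p^+,p^-)=\max_{(\upsilon^+,\upsilon^-)\in\mathcal{D}}\Psi_{2;\beta,\gamma,\lambda}'(p^+,p^-,\upsilon^+,\upsilon^-)$, and conclude that the maximum must sit at $((p^+)^2,(p^-)^2)$. Nothing essential is missing.
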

\begin{proof}
  From
  Corollary~\ref{cor:new},
we have that for all $0\leq \chi^+ \leq 1$ and $0\leq \chi^- \leq 1$ and
all $(\upsilon^+,\upsilon^-) \in \mathcal{D}'(\chi^+,\chi^-)$,
$$ 
 \Psi_{2;\beta,\gamma,\lambda}'(\chi^+,\chi^-,\upsilon^+,\upsilon^-)
 < 2 \max_{ 0\leq x \leq 1,0\leq y \leq 1}
	\Psi_{1;\beta,\gamma,\lambda}(x,y)  
	.$$
	In particular, this holds for $\chi^+=p^+$ and $\chi^-=p^-$.
	Together with Lemma~\ref{lem:first and second moment}, it gives us
	that for all $(\upsilon^+,\upsilon^-) \in \mathcal{D}'(p^+,p^-)$,
	\begin{align*}
	  \Psi_{2;\beta,\gamma,\lambda}'(p^+,p^-,\upsilon^+,\upsilon^-)
      &< 2 \max_{ 0\leq x \leq 1,0\leq y \leq 1}
	  \Psi_{1;\beta,\gamma,\lambda}(x,y)\\
	  &= \Psi_{2;\beta,\gamma,\lambda}(p^+,p^-)\\
	  &= \max_{(\upsilon^+,\upsilon^-)\in {\mathcal D}(\chi^+,\chi^-)}\Psi_{2;\beta,\gamma,\lambda}'(p^+,p^-,\upsilon^+,\upsilon^-).
	\end{align*}
  It implies that $\Psi_{2;\beta,\gamma,\lambda}'(p^+,p^-,\upsilon^+,\upsilon^-)$ is maximized at $\upsilon^+=(p^+)^2,\upsilon^-=(p^-)^2$,
  which is Condition \ref{GSV12:condition1}.
\end{proof}

Lemma~\ref{lem:indpt-phases} follows directly.

\begin{proof}[Proof of Lemma~\ref{lem:indpt-phases}]
  By Lemma~\ref{lem:GSV12:condition1}, this lemma follows from Condition~\ref{GSV12:condition1},
  and Condition~\ref{GSV12:condition1} is established by Lemma~\ref{lem:condition1}.
\end{proof}

\subsubsection{Proof of  Equations~\eqref{ssone}, \eqref{lastone} and \eqref{sstwo}}\label{sec:asymptotics}
The proof is a rather lengthy formalisation of the following main idea: as $n$ grows large, the sums in the expressions for the first and second moments are dominated by their maximum terms and the integrality conditions for the variables may be dropped introducing only $o(1)$ (additive) error in the asymptotics of the logarithms. 

We will use the following lemma. 
Recall that for a metric space $(\mathcal{X},d)$, 
an $\epsilon$-cover of a set $\mathcal{K}\subseteq \mathcal{X}$ is a set $\mathcal{C}\subseteq \mathcal{X}$
such that every point in $\mathcal{K}$ is within distance $\epsilon$ from a point in $\mathcal{C}$. 
Note that we do \textit{not} require $\mathcal{C}\subseteq \mathcal{K}$. Recall also that $x\in
\mathcal{X}$ is a limit point of $\mathcal{K}$ if $x$ is in the closure of $\mathcal{K}\backslash \{x\}$.
\begin{lemma}\label{lem:maximalimit}
Let $h:\mathcal{R}_h\rightarrow \mathbb{R}$ be a continuous function on $\mathcal{R}_h\subseteq \mathbb{R}^k$. Let $\mathcal{R}$ be a compact subregion of $\mathcal{R}_h$.  For $n=1,2,\hdots$, let $\mathcal{C}_n\subseteq \mathcal{R}_h$ be an $\epsilon_n$-cover  of $\mathcal{R}$, where $\epsilon_n\rightarrow 0$ as $n\rightarrow \infty$. Assume further that the $\mathcal{C}_n$ are uniformly bounded and that the set of limit points of $\bigcup_{n}(\mathcal{C}_n\backslash \mathcal{R})$ is a subset of $\mathcal{R}$.
Then, as $n\rightarrow \infty$, it holds that
\[\sup_{\mathbf{x}\in\mathcal{C}_n} h(\mathbf{x})\rightarrow \max_{\mathbf{x}\in\mathcal{R}} h(\mathbf{x}).\]
\end{lemma}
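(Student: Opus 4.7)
The plan is to prove two inequalities separately: $\liminf_{n\to\infty}\sup_{\mathbf{x}\in\mathcal{C}_n}h(\mathbf{x})\geq \max_{\mathbf{x}\in\mathcal{R}}h(\mathbf{x})$ and $\limsup_{n\to\infty}\sup_{\mathbf{x}\in\mathcal{C}_n}h(\mathbf{x})\leq \max_{\mathbf{x}\in\mathcal{R}}h(\mathbf{x})$. Throughout, I would use that since $\mathcal{R}$ is compact and $h$ is continuous on $\mathcal{R}_h\supseteq \mathcal{R}$, the maximum $\max_{\mathbf{x}\in\mathcal{R}}h(\mathbf{x})$ is attained; fix a maximizer $\mathbf{x}^\star\in\mathcal{R}$.

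For the lower bound, the idea is straightforward. Since $\mathcal{C}_n$ is an $\epsilon_n$-cover of $\mathcal{R}$ and $\mathbf{x}^\star\in\mathcal{R}$, for each $n$ I can pick $\mathbf{c}_n\in\mathcal{C}_n$ with $d(\mathbf{c}_n,\mathbf{x}^\star)\leq \epsilon_n$. Because $\epsilon_n\to 0$, we have $\mathbf{c}_n\to \mathbf{x}^\star$. Since $h$ is continuous at $\mathbf{x}^\star$ and (by the hypothesis that $\sup_{\mathbf{x}\in\mathcal{C}_n}h(\mathbf{x})$ is well-defined) the $\mathbf{c}_n$ lie in $\mathcal{R}_h$, it follows that $h(\mathbf{c}_n)\to h(\mathbf{x}^\star)=\max_{\mathbf{x}\in\mathcal{R}}h(\mathbf{x})$. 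Hence $\liminf_{n}\sup_{\mathbf{x}\in\mathcal{C}_n}h(\mathbf{x})\geq \lim_n h(\mathbf{c}_n)=\max_{\mathbf{x}\in\mathcal{R}}h(\mathbf{x})$.

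The upper bound is the main obstacle, since a priori the covers $\mathcal{C}_n$ are allowed to poke outside $\mathcal{R}$, and we have to rule out the possibility that $h$ takes much larger values on those excursions. I would argue by contradiction: suppose $\limsup_{n}\sup_{\mathbf{x}\in\mathcal{C}_n}h(\mathbf{x})\geq \max_{\mathbf{x}\in\mathcal{R}}h(\mathbf{x})+\delta$ for some $\delta>0$. Passing to a subsequence, pick $\mathbf{x}_n\in\mathcal{C}_n$ with $h(\mathbf{x}_n)\geq \max_{\mathbf{x}\in\mathcal{R}}h(\mathbf{x})+\delta/2$. The key use of the uniform boundedness hypothesis is now that $\{\mathbf{x}_n\}$ lies in a bounded subset of $\mathbb{R}^k$, so by Bolzano--Weierstrass it has a further subsequence converging to some $\mathbf{x}^\infty\in\mathbb{R}^k$. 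The limit point hypothesis is used to pin $\mathbf{x}^\infty$ down: either infinitely many of the selected $\mathbf{x}_n$ lie in $\mathcal{R}$, in which case compactness (hence closedness) of $\mathcal{R}$ forces $\mathbf{x}^\infty\in\mathcal{R}$; or eventually all of them lie in $\mathcal{C}_n\setminus \mathcal{R}$, in which case $\mathbf{x}^\infty$ is a limit point of $\bigcup_n(\mathcal{C}_n\setminus \mathcal{R})$ and by assumption lies in $\mathcal{R}$.

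Either way $\mathbf{x}^\infty\in\mathcal{R}\subseteq\mathcal{R}_h$, so continuity of $h$ at $\mathbf{x}^\infty$ yields $h(\mathbf{x}_n)\to h(\mathbf{x}^\infty)\leq \max_{\mathbf{x}\in\mathcal{R}}h(\mathbf{x})$, contradicting $h(\mathbf{x}_n)\geq \max_{\mathbf{x}\in\mathcal{R}}h(\mathbf{x})+\delta/2$. This establishes the upper bound, completing the proof. The only subtle point to check carefully is the dichotomy on whether infinitely many $\mathbf{x}_n$ lie in $\mathcal{R}$, but in both branches the limit point condition combined with closedness of $\mathcal{R}$ (which follows from compactness) delivers $\mathbf{x}^\infty\in\mathcal{R}$.
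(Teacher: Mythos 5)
Your proposal is correct and follows essentially the same route as the paper's proof: the lower bound via the $\epsilon_n$-cover property and continuity at a maximizer, and the upper bound by contradiction, using uniform boundedness of the $\mathcal{C}_n$ to extract a convergent subsequence and then the dichotomy (infinitely many points in $\mathcal{R}$, handled by closedness, versus eventually outside $\mathcal{R}$, handled by the limit-point hypothesis) to place the limit in $\mathcal{R}$ and contradict the definition of the maximum. No substantive differences from the paper's argument.
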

\begin{proof} 
Let 
\[L_n:=\sup_{\mathbf{x}\in \mathcal{C}_n} h(\mathbf{x}),\quad L:=\max_{\mathbf{x}\in \mathcal{R}} h(\mathbf{x}).\]
Note that the maximum in the definition of $L$ is justified by compactness of the region $\mathcal{R}$ and continuity of $h$. We will prove that
\begin{equation}\label{eq:infsup}
L\leq \liminf_{n\rightarrow\infty} L_n\leq \limsup_{n\rightarrow\infty} L_n\leq L.
\end{equation}

We first prove the right-most inequality in \eqref{eq:infsup}. Assume for the sake of contradiction that $\limsup_{n\rightarrow\infty}L_n>L$. Then, there exist sequences $n_m,\mathbf{x}_{n_m}$ with $n_m\rightarrow \infty$ and $\mathbf{x}_{n_m}\in \mathcal{C}_{n_m}$ such that $h(\mathbf{x}_{n_m})>L+\epsilon$ for some $\epsilon>0$.
Since the $\mathcal{C}_n$ are uniformly bounded, $\mathbf{x}_{n_m}$ are bounded, and so they have a convergent subsequence, whose limit we denote by $\bar{\mathbf{x}}$. From the continuity of $h$, we obtain $h(\bar{\mathbf{x}})\geq L+\epsilon$. We claim that $\bar{\mathbf{x}}\in \mathcal{R}$, and thus obtain a contradiction to the choice of $L$. Indeed, if there exists $m_0$ such that $\mathbf{x}_{n_m}\in \mathcal{R}$ for all $m\geq m_0$, we have $\bar{\mathbf{x}}\in \mathcal{R}$ by compactness of $\mathcal{R}$. Otherwise, by restricting to a subsequence of $\mathbf{x}_{n_m}$ if necessary, $\bar{\mathbf{x}}$ is a limit point of $\bigcup_{n}(\mathcal{C}_n\backslash \mathcal{R})$, and hence $\bar{\mathbf{x}}\in \mathcal{R}$. 

To show the left-most inequality in \eqref{eq:infsup}, let $\mathbf{x}^{*}\in\mathcal{R}$ be such that $h(\mathbf{x}^{*})=L$. Since the  $\mathcal{C}_n$ are $\epsilon_n$-covers of $\mathcal{R}$ and $\epsilon_n\rightarrow0$, there exists a sequence $\mathbf{x}_{n}$ with $\mathbf{x}_{n}\in \mathcal{C}_{n}$ such that $\mathbf{x}_{n}\rightarrow \mathbf{x}^{*}$. Note that $L_{n}\geq h(\mathbf{x}_{n})$ and thus, from the  continuity of $h$, we obtain $\liminf_{n\rightarrow\infty} L_{n}\geq h(\mathbf{x}^{*})=L$. This finishes the proof of \eqref{eq:infsup}.
\end{proof}

We will now prove Equations~\eqref{ssone}, \eqref{lastone} and \eqref{sstwo}.
We start with the first moment $\E_{\Gc_n}\big[Z^{\chi^+,\chi^-}_G\big]$. 
Under the convention $0\log 0 = 0$,
$f_1(z,w)$ is continuous for $0\leq z \leq 1$ and $0\leq w \leq 1$.
We will write $g_1(x,z,w)$ for $g_{1;\beta,\gamma}(x,z,w)$. The function 
$g_1(x,z,w)$ is continuous
 over the region 
\begin{equation}
\label{eq:Rg}\mathcal{R}_{g_1} = \{(x,z,w) \mid
x,z,w \geq 0, x\leq z \leq 1, x\leq w \leq 1, 0\leq 1-z-w+x\}.
\end{equation}

 The following asymptotic expressions   are obtained from Stirling's approximation
 for $|\Sigma^{\chi^+,\chi^-}|$ and $\kappa^{\chi^+_n,\chi^-_n}_{x_n}$
  using  equations \eqref{one2april} and \eqref{two2april}.
\begin{equation}\label{eq:nnf}
\frac{1}{n}\log |\Sigma^{\chi^+,\chi^-}|=-{f}_1(\chi^+_n,\chi^-_n)+o(1),\quad
\frac{1}{n}\log \kappa^{\chi^+_n,\chi^-_n}_{x_n}={g}_{1}(x_n,\chi^+_n,\chi^-_n)+f_1(\chi^+_n,\chi^-_n)+o(1).
\end{equation}
Note that the sum over~$x_n$ in the expression 
\eqref{one2april} which is used in
\eqref{two2april} for $\E_{\Gc_n}\big[Z^{\chi^+,\chi^-}_G\big]$ has 
at most $n$ terms, so obviously the number of terms is at most a polynomial in~$n$.
Thus, we may approximate the sum   by its maximum term to obtain
\begin{equation}\label{eq:nfirst}
\frac{1}{n}\log \E_{\Gc_n}\big[Z^{\chi^+,\chi^-}_G\big]=
\log(\lambda)(\chi^+_n+\chi^-_n)+(\Delta-1){f}_1(\chi^+_n,\chi^-_n)+\Delta \max_{x_n\in[\chi_n,\chi_n']} {g}_{1}(x_n,\chi^+_n,\chi^-_n)+o(1).
\end{equation}
where, recall, $\chi_n=\max\{0,\chi^+_n+\chi^-_n-1\}, \ \chi'_n=\min\{\chi^{+}_n,\chi^{-}_n\}$. Moreover, as $n\rightarrow \infty$, we have 
\begin{align}
\log(\lambda)(\chi^+_n+\chi^-_n)+(\Delta-1){f}_1(\chi^+_n,\chi^-_n)&\rightarrow \log(\lambda)(\chi^++\chi^-)+(\Delta-1){f}_1(\chi^+,\chi^-),\label{eq:nf1}\\
\max_{x_n\in[\chi_n,\chi_n']}{g}_{1}(x_n,\chi^+_n,\chi^-_n)&\rightarrow \max_{x\in[\chi,\chi']} {g}_{1}(x,\chi^+,\chi^-)\label{eq:ng1},
\end{align}
where recall that $\chi=\max\{0,\chi^++\chi^--1\}, \ \chi'=\min\{\chi^{+},\chi^{-}\}$. The limit \eqref{eq:nf1}  follows from $(\chi^+_n,\chi^-_n)\rightarrow (\chi^+,\chi^-)$ and the continuity of ${f}_1$. 
To establish the
limit \eqref{eq:ng1}
we use  Lemma~\ref{lem:maximalimit} for the function $g_1$, with $\mathcal{R}_{g_1}$ as defined in (\ref{eq:Rg}), $\mathcal{R} = \{(x,\chi^+,\chi^-) \mid \chi \leq x \leq \chi'\}$
and $\mathcal{C}_n = \{(x_n,\chi_n^+,\chi_n^-) \mid \chi_n \leq x_n \leq \chi'_n,\, nx_n\in\mathbb{Z}\}$. The $\mathcal{C}_n$ are $\frac{10}{n}$-covers of $\mathcal{R}$ (with room to spare for the constant 10) under the Euclidean metric on $\mathbb{R}^3$. It is also immediate to verify that the set of limit points of $\bigcup_n  (\mathcal{C}_n\backslash \mathcal{R})$ is a subset of $\mathcal{R}$ as a consequence of $(\chi^+_n,\chi^-_n,\chi_n,\chi_n')\rightarrow (\chi^+,\chi^-,\chi,\chi')$.  
 
From the definition \eqref{eq:firstmoment}, 
\begin{equation}\label{eq:psi1}
\Psi_{1;\beta,\gamma,\lambda}(\chi^+,\chi^-)= \log(\lambda)(\chi^++\chi^-)+
(\Delta-1){f}_1(\chi^+,\chi^-)+\Delta \max_{\chi\leq x\leq \chi'} {g}_{1}(x,\chi^+,\chi^-),
\end{equation}  Combining 
this with \eqref{eq:nfirst}, \eqref{eq:nf1} and \eqref{eq:ng1}  yields \eqref{ssone}, as wanted.

We next turn to the second moment $\E_{\Gc_n}\big[(Z^{\chi^+,\chi^-}_G)^2\big]$. The reasoning is almost identical, though carrying out the arguments is more cumbersome due to the larger number of variables. 
Let 
$$\mathcal{R}_{f_2} = 
\{(z,w,u,v) \mid z\geq u \geq 0,
1-2z+u\geq 0,
w\geq v\geq 0, 
1-2w+v\geq 0 \}.$$
The function $f_2(z,w,u,v)$ is continuous over~$\mathcal{R}_{f_2}$. We will also write $g_2(\mathbf{y})$ instead of $g_{2;\beta,\gamma}(\mathbf{y})$. Note that $g_2(\mathbf{y})$ is continuous over
$\mathcal{R}_{g_2}=\{\mathbf{y} \mid \forall i,j\in[4], y_{ij} \geq 0\}$.
By analogy to \eqref{eq:LRlimits}, define 
\begin{equation*} 
\begin{gathered}
l_1=u,\quad l_2=l_3=z-u,\quad l_4=1-2z+u,\\
r_1=v,\quad r_2=r_3=w-v,\quad r_4=1-2w+v.
\end{gathered}
\end{equation*}  
For $(z,w,u,v)\in \mathcal{R}_{f_2}$,
 define
$$\mathcal{T}(z,w,u,v) = \big\{ \mathbf{y}=\{y_{ij}\}_{i,j\in[4]} \mid 
\forall i,j\in[4],
y_{ij} \geq 0,\sum_j y_{ij}=l_ i,
\sum_i y_{ij} =r_j  \big\}.$$  
Let $\mathcal{R} = \mathcal{T}(\chi^+,\chi^-,\upsilon^+,\upsilon^-)$
and let $\mathcal{C}_n = \mathcal{T}(\chi^+_n,\chi^-_n,\upsilon^+_n,\upsilon^-_n)$.
 
By Stirling's approximation, the functions ${f}_2$ and ${g}_2$ capture the asymptotics of 
$\Sigma^{\chi^+,\chi^-}_{\upsilon^+,\upsilon^-}$ and
 $\xi^{\chi^+_n,\chi^-_n}_{\upsilon^+_n,\upsilon^-_n,\mathbf{y}_n}$ (see \eqref{eq:sigma2def} and \eqref{eq:defxi}):

\begin{equation}\label{eq:nns}
\frac{1}{n}\log |\Sigma^{\chi^+,\chi^-}_{\upsilon^+,\upsilon^-}|=-{f}_2(\chi^+_n,\chi^-_n,\upsilon^+_n,\upsilon^-_n)+o(1),\quad
\frac{1}{n}\log \xi^{\chi^+_n,\chi^-_n}_{\upsilon^+_n,\upsilon^-_n,\mathbf{y}_n}=
{g}_{2}(\mathbf{y}_n)+f_2(\chi^+_n,\chi^-_n,\upsilon^+_n,\upsilon^-_n)+o(1).
\end{equation}
We are now set to compute the asymptotics of $\E_{\Gc_n}\big[Y^{\chi^+,\chi^-}_{\upsilon^+,\upsilon^-}\big]$, for which we have  derived an expression in \eqref{longeq}. Approximating the sums in the expression by their maximum terms only introduces $o(1)$ error, so we obtain
\begin{align}\label{eq:nY}
\frac{1}{n}\log\E_{\Gc_n}\big[Y^{\chi^+,\chi^-}_{\upsilon^+,\upsilon^-}\big]&=2\log(\lambda)(\chi^+_n+\chi^-_n)+(\Delta-1){f}_2(\chi^+_n,\chi^-_n, \upsilon^+_n,\upsilon^-_n)+\Delta \max_{\mathbf{y}_n\in \mathcal{C}_n} {g}_{2}(\mathbf{y}_n)+o(1).
\end{align}

 Taking limits as $n\rightarrow \infty$, we will now  derive analogues of \eqref{eq:nf1} and \eqref{eq:ng1}:
\begin{align}
2\log(\lambda)(\chi^+_n+\chi^-_n)+(\Delta-1){f}_2(\chi^+_n,\chi^-_n, \upsilon^+_n,\upsilon^-_n)&\rightarrow 2\log(\lambda)(\chi^++\chi^-)+(\Delta-1){f}_2(\chi^+,\chi^-, \upsilon^+,\upsilon^-),\label{eq:nf2}\\
\max_{\mathbf{y}_n\in  \mathcal{C}_n} {g}_{2}(\mathbf{y}_n)&\rightarrow \max_{\mathbf{y}\in \mathcal{R}} {g}_{2}(\mathbf{y}).\label{eq:ng2}
\end{align}
Equation~(\ref{eq:nf2}) follows from the continuity of~$f_2$. 
Equation~(\ref{eq:ng2}) follows by Lemma~\ref{lem:maximalimit}.
Combining \eqref{eq:nY}, \eqref{eq:nf2} and \eqref{eq:ng2} gives 
\begin{equation*}
\frac{1}{n}\log\E_{\Gc_n}\big[Y^{\chi^+,\chi^-}_{\upsilon^+,\upsilon^-}\big]=2\log(\lambda)(\chi^++\chi^-)+
(\Delta-1){f}_2(\chi^+,\chi^-, \upsilon^+,\upsilon^-)+\Delta \max_{\mathbf{y}\in \mathcal{T}(\chi^+,\chi^-, \upsilon^+,\upsilon^-)} {g}_{2}(\mathbf{y})+o(1).
\end{equation*}
The right-hand side is  $\Psi'_{2;\beta,\gamma,\lambda}(\chi^+,\chi^-,\upsilon^+,\upsilon^-)+o(1)$,  so we have 
established~\eqref{lastone}. 
 
To obtain  
\eqref{sstwo}, the sum \eqref{eq:sum2nd} yields
\begin{equation}\label{eq:nsstwo}
\begin{aligned}
\frac{1}{n}\log\E_{\Gc_n}\big[(Z_G^{\chi^+,\chi^-})^2\big]&=2\log(\lambda)(\chi^+_n+\chi^-_n)\\
&+\max_{(\upsilon^+_n,\upsilon^-_n,\mathbf{y}_n)\in \mathcal{O}_n}\{ (\Delta-1){f}_2(\chi^+_n,\chi^-_n, \upsilon^+_n,\upsilon^-_n)+\Delta {g}_{2}(\mathbf{y}_n)\}+o(1),
\end{aligned}
\end{equation}
where 
\begin{equation*}
\begin{aligned}
\mathcal{D}_n &:= \{ (\upsilon^+_n,\upsilon^-_n) \mid
\max\{0,2 \chi^+_n-1\} \leq \upsilon^+_n \leq \chi^+_n,\ 
\max\{0,2 \chi^-_n-1\}\leq \upsilon^-_n \leq \chi^-_n
\},\\ 
\mathcal{O}_n&:=\{(\upsilon^+_n,\upsilon^-_n,\mathbf{y}_n)\mid (\upsilon^+_n,\upsilon^-_n)\in \mathcal{D}_n,\ \mathbf{y}_n\in\mathcal{C}_n\}.
\end{aligned}
\end{equation*} Yet another application of Lemma~\ref{lem:maximalimit} yields
\begin{equation}\label{eq:nf2g2}
\begin{aligned}
\max_{(\upsilon^+_n,\upsilon^-_n,\mathbf{y}_n)\in \mathcal{O}_n}&\{ (\Delta-1){f}_2(\chi^+_n,\chi^-_n, \upsilon^+_n,\upsilon^-_n)+ \Delta {g}_{2}(\mathbf{y}_n)\}\\
&\rightarrow \max_{(\upsilon^+,\upsilon^-,\mathbf{y})\in \mathcal{O}(\chi^+,\chi^-)}\{ (\Delta-1){f}_2(\chi^+,\chi^-, \upsilon^+,\upsilon^-)+\Delta {g}_{2}(\mathbf{y})\},
\end{aligned}
\end{equation}
where $\mathcal{O}(\chi^+,\chi^-):=\{(\upsilon^+,\upsilon^-,\mathbf{y})\mid (\upsilon^+,\upsilon^-)\in\mathcal{D},\ \mathbf{y}\in \mathcal{T}(1,\chi^+,\chi^-, \upsilon^+,\upsilon^-)\}$. Combining \eqref{eq:nsstwo} and \eqref{eq:nf2g2} yields \eqref{sstwo}, as desired.

\end{document}